\newcommand{\ubar}[1]{\underaccent{\bar}{#1}}
\DeclareMathOperator\supp{supp}
\DeclareMathOperator*{\argmax}{arg\,max}
\newtheorem{theorem}{Theorem}[section]
\newtheorem{proposition}[theorem]{Proposition}
\newtheorem{lemma}[theorem]{Lemma}
\newtheorem{corollary}[theorem]{Corollary}
\theoremstyle{definition}
\definecolor{backcolour}{rgb}{0.63, 0.79, 0.95}
\lstdefinestyle{mystyle}{
  backgroundcolor=\color{backcolour},
  basicstyle=\ttfamily\footnotesize,
  breakatwhitespace=false,         
  breaklines=true,                 
  captionpos=b,                    
  keepspaces=true,                 
  numbers=left,                    
  numbersep=5pt,                  
  showspaces=false,                
  showstringspaces=false,
  showtabs=false,                  
  tabsize=2
}
\providecommand{\keywords}[1]{\textbf{\textit{Keywords:}} #1}
\providecommand{\jel}[1]{\textbf{\textit{JEL Classifications:}} #1}
\begin{document}
\title{Costly Evidence and Discretionary Disclosure}
\author{Mark Whitmeyer\thanks{Arizona State University. Email: \href{mailto:mark.whitmeyer@gmail.com}{mark.whitmeyer@gmail.com}} \and Kun Zhang\thanks{Arizona State University. Email: \href{kunzhang@asu.edu}{kunzhang@asu.edu}.}}%\\ We thank Vasudha Jain and Joseph Whitmeyer for their ever-pellucid and helpful suggestions.}}

\date{\today}

\maketitle

\begin{abstract}%(or statistical test conducted by the receiver)
A sender flexibly acquires evidence--which she may pay a third party to certify--to disclose to a receiver. When evidence acquisition is overt, the receiver observes the evidence gathering process irrespective of whether its outcome is certified. When acquisition is covert, the receiver does not. In contrast to the case with exogenous evidence, the receiver prefers a strictly positive certification cost. As acquisition costs vanish, equilibria converge to the Pareto-worst free-learning equilibrium. The receiver always prefers covert to overt evidence acquisition.
\end{abstract}
\keywords{Evidence Acquisition, Information Acquisition, Verifiable Disclosure, Information Design, Bayesian Persuasion}\\
\jel{C72; C73; D82; D83}

\newpage

\section{Introduction}

This paper studies a disclosure game in which evidence is endogenously and flexibly acquired at a cost by a sender. Once she has acquired the evidence, the sender chooses whether to pay a third-party firm to certify it.\footnote{Equivalently, the sender must incur some cost in order to disclose her findings.} In the absence of such certification, the sender's message is merely cheap talk (equivalently, the sender simply does not disclose her evidence). Upon observing the sender's message the receiver takes an action that affects both players' payoffs.

Real-world examples that are captured by this setup abound. For instance, this model is an extension of the seminal papers \cite{verr1} and \cite{verr2}, in which the sender is a manager of a risky asset and the receiver is a collection of market traders. It is natural to suppose that the evidence available to the manager for disclosure takes time and effort to acquire. Another suitable example is that of a monopolist generating information about its product in order to convince consumers to purchase it. Product tests are costly yet can be designed with care, justifying the flexible approach we allow our sender.

We grant the sender complete flexibility over what evidence she acquires, but impose that more and finer information is more costly to acquire. This is realistic. Consider our leading example, in which the sender is the manager of a firm (or the firm itself) and the state is the value of some economic asset. In this setting, the manager has significant freedom about the type and quality of information she acquires. She can instruct her employees to scrutinize certain aspects of the asset more than others, hire accountants with expertise in different areas, or conduct tests or reviews with different emphases. Furthermore, more (finer) information is more expensive. More thorough tests are more expensive to run, more skilled employees are more expensive to hire, and hiring more workers costs more.

Within this environment we are interested in the following three things. First, what is the impact of transparency in the evidence gathering stage? We look at two different cases--covert, in which the sender's evidence acquisition is secret; and overt, in which it is public--and compare the equilibria between the two. Second, what is the role played by evidence acquisition costs? In particular, how do the equilibria and players' welfare change as it becomes more difficult for the sender to acquire information? As such costs vanish, what equilibrium in the costless acquisition benchmark serves as the limit? Third, what is the role of the certification fee given that the evidence is exogenous?

We find that the receiver always prefers covert to overt evidence acquisition, strictly if there is an equilibrium with covert acquisition in which the sender acquires information. That is, transparency is always detrimental to the receiver's welfare. This phenomenon arises  since transparency grants the sender a form of partial commitment absent when her evidence acquisition is hidden. Our sender would like to credibly commit not to acquire any information at all, but when evidence acquisition is covert and information is not too costly, she finds it too tempting to secretly deviate and acquire information, to the receiver's benefit.

That the overtness of evidence acquisition may lead to total uninformativeness is also observed by \cite{demarzo} (who specify that a sender's test may fail to return evidence \'{a} la \cite{dye}), and the intuition behind their result is identical to ours. They also find that the receiver may strictly benefit from opacity but the intuition behind their result is different from that for ours. Specifically, in their setting, all tests are equally costly (or costless), and so a ``minimum principle'' holds, which imposes that the market is maximally pessimistic about the asset's quality following non-disclosure. Not so in our setting; instead, the belief by our receiver (or the market) upon observing non-disclosure is the fixed-point lower point of support in our sender's optimal flexible learning.

Even when evidence acquisition is covert, if the costs of obtaining evidence or certification are too high, the sender acquires no information in any equilibrium. However, if obtaining evidence is sufficiently cheap and the certification cost is relatively low, the sender acquires evidence in every equilibrium. We also look at what happens as information costs vanish. We find that when this occurs, all equilibria converge to the Pareto-worst free-learning equilibrium, which highlights a significant difference between cheap information and free information. This is reminiscent of the main finding of \cite{ravid2020learning}, but the mechanics behind the results are different. Our result occurs because when information is costly, the sender always seeks extremely coarse evidence: she either acquires evidence that the state is high, which she gets certified; or evidence that it is low, which she does not.

We also look at the effect of the certification fee. Perhaps surprisingly, since it stands in stark contrast to the findings in the papers with exogenous information, when evidence acquisition is covert, the receiver always weakly prefers a strictly positive certification fee and may strictly prefer it. The certification fee cannot be too high, or else the sender will not learn, but a low but positive certification fee strictly improves the receiver's welfare.

The reason for this result is subtle and stems from the endogeneity (and cost) of the information available to the sender. There cannot be an equilibrium in which the sender acquires evidence when certification is free since the receiver's conjectured belief following non-certification \textit{can never be correct}. Because certification is free, the sender's payoff as a function of the expected value of the state is locally strictly convex at the conjectured belief (there is a kink precisely at the conjectured expected value) and so it is never optimal for the sender to obtain that value and not certify it.\footnote{We illustrate this in Figure \ref{fig3} \textit{infra}.} Given this, the only possible equilibrium is that the sender acquires no evidence and always gets it certified, which is indeed the equilibrium when there is no certification cost.

We finish this section by discussing related work. Section \ref{model} introduces the model, Section \ref{bench} goes through a benchmark when evidence acquisition is costless, and Section \ref{main} contains the (bulk of the) main results. Section \ref{pareto} details our free-learning equilibrium selection result, and Section \ref{discus} concludes. Appendix \ref{eqdef} specifies our equilibrium concept, and Appendix \ref{omitproof} contains the proofs omitted from the text. We argue that our findings persist when our setting is instead that of \cite{dye} in a \href{https://whitmeyerhome.files.wordpress.com/2022/08/costly_evidence_and_discretionary_disclosure_supplement_v2.pdf}{Supplementary Appendix}.% (with endogenous costly information acquisition)

\subsection{Related Work}

\cite{shishkin}, who also endogenizes evidence acquisition in the model of \cite{dye} but imposes that such acquisition is costless, is the closest paper to this one. \citeauthor{shishkin} also distinguishes between the overt and covert scenarios and shows that the equilibrium evidence structures take remarkably simple forms: for a low failure rate it is a pass-fail test; and it is two-sided censorship (in which an intermediate region of states is fully disclosed) otherwise. Our sender also chooses simple evidence (she always acquires at most two pieces of evidence), but in our setting this is a consequence of the fact that more precise evidence is more expensive.

\cite{transwhitjain} is also similar. There, the authors study a disclosure game with two senders, each of whom acquires evidence (for free) in order to obtain the favor of a receiver. The commitment power engendered by overt evidence acquisition (what they term ``transparency of protocol'') also makes the receiver worse off vis-a-vis the covert scenario, though the mechanism is different to this one. Things are more complicated there since a sender's payoff is itself an equilibrium object, generated by the strategy of the other sender.

This paper is also related to \cite{bert}, who study a problem in which a regulator designs a measurement system (a signal) about an asset's value, the realization of which is observed by manager (privately) with some exogenous probability, which she may then disclose to investors (the receiver) if she obtained evidence. Their problem is similar to \cite{shishkin} in that generating evidence is costless, although in their setting it is a third party who designs the information. 

\cite{demarzo} also endogenizes evidence acquisition in the setting of \cite{dye} by allowing a sender to choose a test from set $K$ of available tests, before choosing whether to disclose its outcome or pretend to have received a null result (which occurs with some probability). In part of the paper they assume that tests are costly but importantly this cost is \textit{independent of the test, its result, or the asset's quality (the state)}, which is not the case in our paper. This creates significant divergences between our results and theirs: they find that as the testing costs vanish, nearly all of their agents (the testing cost, for them, is a random variable) take the test; whereas our limiting outcome, in contrast, is particularly inefficient: our sender chooses the Pareto-worst equilibrium evidence structure. Likewise, our finding about the benefits (to the receiver) of a certification cost (or positive test failure rate) is absent from their paper since their constant cost precludes the local convexity at the conjectured belief property that we encounter.

Other works that endogenize the sender's information in disclosure games are \cite{pae}, who, like \cite{demarzo}, allows the sender to pay an nondiscriminatory cost to acquire information about the (in \citeauthor{pae}'s paper) endogenous state; and \cite{hughespae}, who stipulate that the sender may pay some (again, fixed) cost to receive the ``precision information'' about the unknown state. In \cite{bendekellip}, the manager (sender) chooses from different projects the random outcome from which she may disclose. Surprisingly, the manager takes excessive risks even if the expected return is lower. Competition is a key feature of \cite{kartikleesuen} who study a competitive disclosure game with multiple senders. Each (privately) chooses the probability with which she obtains evidence (subject to an increasing cost), which she may subsequently disclose. \cite{hardinfo} study the problem of a buyer who chooses publicly what (distribution over) hard evidence to obtain, which she may then disclose to a monopolistic seller who engages in nonlinear pricing.\footnote{A couple other less similar but still relevant papers are \cite{henry}, who also studies transparency in a disclosure game with information acquisition in which acquired evidence can be concealed; and \cite{penno}, who allows a sender to choose the precision of the firm's financial reporting system.}

There is also a literature on information acquisition in cheap-talk games. In a general cheap-talk environment, \cite{pei} looks at covert costly information acquisition and establishes that a sender always communicates what she finds to the receiver. The intuition driving this result is similar to that implying our Lemmas \ref{mainlemma1} and \ref{mainlemma2}. \cite{argenz} compare covert and overt costly information acquisition, and find that under both regimes, the sender over-acquires information and that the receiver may prefer either to delegation. \cite{escude} allows a sender to costlessly and covertly acquire partially verifiable information about the state, which he discovers is always revealed by the sender. In \cite{lyusuen}, the authors study overt, costless information acquisition preceding cheap talk.

Finally, our sender, taking as given her disclosure strategy, faces a Bayesian persuasion/information acquisition problem (\cite{kam} and \cite{kamcostly}), and so our paper fits broadly within these literatures.\footnote{\cite{macksurvey} and \cite{bpsurvey} provide comprehensive surveys of the two areas.} Specifically, we use techniques from \cite{gent}, \cite{linearprog}, \cite{martini} and \cite{bipool} throughout. 

To the best of our knowledge, we are the only paper to study costly flexible evidence acquisition in a game of verifiable disclosure. This is valuable for two reasons. First, it is realistic: an asset manager has significant freedom about what evidence to acquire, and finer and more granular information is more expensive. Second, it leads to effects and predictions absent from settings with exogenous evidence, endogenous but free (or equally costly) evidence, or endogenous evidence from a family of distributions parameterized by a scalar.

\section{The Model}\label{model}

There is a sender (she, $S$) and a receiver (he, $R$). There is an unknown state of the world $\theta \in \left[0,1\right]$ about which $S$ and $R$ share a common prior with cdf $F$ and mean $\mu \in \left(0,1\right)$. $F$ admits a strictly positive density $f$. For technical convenience, $F$ is strictly log-concave. $R$'s set of actions is $A = \left[0,1\right]$ and her utility is the commonly-used ``quadratic'' loss specification: $u_{R}\left(a,\theta\right) = -\left(a-\theta\right)^2$. The sender's utility is state-independent and linear in the sender's action: $u_{S} = a$.

Before communicating to $R$, $S$ acquires information flexibly subject to a cost. We allow her considerable freedom in this regard: she may choose any signal $\pi \colon \left[0,1\right] \to \Delta\left(X\right)$, where $X$ is a (compact) set of signal realizations.\footnote{For a compact metrizable space \(Y\), let \(\Delta (Y)\) denote the set of all probability measures on the Borel subsets of \(Y\).} As is well-known, the quadratic loss utility of $R$ means that $R$'s (uniquely) optimal action at any posterior distribution is simply the posterior expectation.\footnote{Thus, an alternative interpretation of our setup is that given by \cite{verr1} and the literature that follows: he specifies that the receiver is a market and that the sender wants to maximize the expected market value of the project (belief about the state).} For this reason, it is without loss to restrict attention on the class of signals where \(X = [0,1]\), and each \(x \in X\) equals the induced posterior mean: \(x = \mathbb{E}[\theta | x]\). Moreover, the set of distributions of posterior means that is consistent with some signal is the set of mean-preserving contractions (MPCs) of the prior $F$,\footnote{This is established formally in \cite{linearprog}. A distribution \(G \in \Delta([0,1])\) is a mean-preserving contraction of \(F\) if \(\int_{0}^{x} G(s) d s \leq \int_{0}^{x} F(s) d s\) for all \(x \in [0,1]\), where the inequality holds with equality at \(x=1\).} which we denote by $\mathcal{M}\left(F\right)$.

Given this, we specify that the sender's cost of acquiring information is posterior-mean measurable: the cost of acquiring any $G \in \mathcal{M}(F)$ is
\[C\left(G\right) = \kappa \int_{0}^{1}c\left(x\right)dG\left(x\right) \text{,}\]
where $c$ is a twice-differentiable, strictly convex function that is bounded on $\left(0,1\right)$ and satisfies $c\left(\mu\right) = 0$; and $\kappa > 0$ is a scaling parameter. Note that this cost function has the desirable feature that it is monotone with respect to informativeness: if $G$ is a MPC of $\hat{G}$, i.e., corresponds to a less informative Blackwell experiment than $\hat{G}$, $C\left(G\right) \leq C\left(\hat{G}\right)$. One specific example of such a cost function is that in which $c\left(x\right) = \left(x-\mu\right)^2$, which we use in Section \ref{uniformsec}'s example.\footnote{We impose that the cost of acquiring information is mean-measurable for simplicity of exposition. By identifying with the prior $F$ a joint distribution $\Phi\left(x_1, x_2, x_3, \dots, x_n\right)$ over the first $n$ non-centered posterior moments, we could instead that the cost of acquiring information is functional $C\left(\Upsilon\right) = \kappa \int c d\Upsilon$ where $\Upsilon$ is a fusion (the $n$-dimensional generalization of an MPC) of $\Phi$ and $c \colon \mathbb{R}^{n} \to \mathbb{R}_{+}$ is a convex function. As we note in the \href{https://whitmeyerhome.files.wordpress.com/2022/08/costly_evidence_and_discretionary_disclosure_supplement_v2.pdf}{Supplementary Appendix}, our main findings all continue to hold in this generalization.}

It is only the posterior mean that is strategically relevant for $R$, and therefore the same holds for $S$. Because of this, we limit the space of messages available to $S$ to be $M \coloneqq \left[0,1\right] \cup \left\{m_{\varnothing}\right\}$, where $m_{\varnothing}$ corresponds to non-disclosure (non-certification). This is a game of verifiable disclosure: if $S$ acquires posterior mean $x$, her set of available messages is $M_{x} \coloneqq \left\{x, m_{\varnothing}\right\}$. Disclosure of $x$ is costly, so we can think of the choice of message $x$ as paying a certifier cost $\gamma > 0$ to verify that the posterior mean is indeed $x$ and the choice of message $m_{\varnothing}$ as $S$ declining to have her claim certified. We specify that the sender's overall payoff is additively separable in $u_S$, the cost of certification ($\gamma$), and the cost of acquiring information.

We are interested in the effects of transparency in this environment and, therefore, look at two different cases:
\begin{enumerate}[noitemsep,topsep=0pt]
    \item \textbf{Covert Evidence Acquisition} $R$ does not observe $G$ and $G$ cannot be certified.
    \item \textbf{Overt Evidence Acquisition} $R$ observes $G$. 
\end{enumerate}
The latter case is ``more transparent'' than the former: $R$ observes $S$'s information gathering activities, no matter whether $S$ gets the outcome certified.

The timing of the game is as follows:
\begin{enumerate}[noitemsep,topsep=0pt]
    \item $S$ acquires information by choosing some distribution of posterior means $G \in \mathcal{M}\left(F\right)$.
    \item $S$ observes the realization $x$ from $G$ then chooses whether to get $x$ certified (send message $x$) and incur cost $\gamma$ or not (send message $m_{\varnothing}$).
    \item $R$ observes $S$'s message and $G$ if evidence acquisition is overt.
    \item $R$ takes action $a$ and payoffs realize.
\end{enumerate}

We study perfect Bayesian equilibria of the game. A formal definition of the equilibria, for both covert and overt evidence acquisition, can be found in Appendix \ref{eqdef}.

\section{Costless Evidence Benchmark}\label{bench}

We begin by discussing a natural benchmark; the case when $\kappa = 0$, and, therefore, it is costless for $S$ to acquire information. First, let us look at the overt acquisition case. Suppose first that the certification fee is not too high: $\gamma < 1- \mu$; in which case, given a conjectured (expected) value of non-disclosed evidence, $\alpha$, the sender's payoff from acquiring posterior $x$ is
\[V\left(x\right) = \begin{cases}
\alpha, \quad &\text{if} \quad 0 \leq x < \alpha + \gamma\\
x-\gamma, \quad &\text{if} \quad \alpha + \gamma \leq x \leq 1
\end{cases}\text{,}\]
where $\alpha \leq \mu$.

This function is convex and piecewise linear, with one inflection point at $\alpha + \gamma < 1$. Thus, in any best response by $S$, she will not pool any states below $\alpha + \gamma$ with those above when she acquires information. She does not want to contract the distribution across the inflection point as doing so would strictly lower her payoff. Consequently, in any equilibrium the disclosure must be a truncation: the sender will not disclose any evidence below the equilibrium inflection point $\alpha^{*} + \gamma$ of the value function. Strict log-concavity of $F$ (\cite{bag}) ensures that the state of indifference, and therefore $\alpha^{*}$, is unique:
\begin{lemma}\label{firstlemma}
With covert evidence acquisition and a low certification cost ($\gamma < 1-\mu$), in any equilibrium, the expected value of the state conditional on non-certification, $\alpha^{*}$, solves \[\label{star}\tag{$\star$}\alpha^{*} = \mathbb{E}_{F}\left[\left. \theta \right| \theta \leq \alpha^{*}+\gamma\right]\text{.}\]
\end{lemma}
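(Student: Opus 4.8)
The plan is to fix an arbitrary equilibrium, read the sender's continuation value off the receiver's best responses, and then combine the convexity of that value with Bayes-consistency of the non-disclosure belief to obtain $(\star)$; strict log-concavity then gives uniqueness of its solution. Write $t:=\alpha^{*}+\gamma$. In any equilibrium the receiver plays $a=x$ after a certified message $x$ and $a=\alpha^{*}$ after $m_{\varnothing}$, so a sender holding posterior mean $x$ obtains $V(x)=\max\{x-\gamma,\alpha^{*}\}=\alpha^{*}+(x-t)^{+}$, the displayed function, and her payoff from acquiring any $G$ is $\int V\,dG$. I would first rule out the degenerate possibility that the sender never certifies: if the equilibrium acquisition $G^{*}$ has $\supp G^{*}\subseteq[0,t]$, then Bayes-consistency forces $\alpha^{*}=\mathbb E_{G^{*}}[x]=\mu$, yet deviating to full revelation $F$ yields $\mathbb E_{F}[\max\{\theta-\gamma,\mu\}]=\mu+\int(\theta-\mu-\gamma)^{+}\,dF>\mu$ --- with covert acquisition the receiver cannot detect the change and her non-disclosure action is unchanged, and the inequality is strict because $\gamma<1-\mu$ and $f>0$ --- contradicting optimality of $G^{*}$. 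Hence certification occurs with positive probability, so $\alpha^{*}=\mathbb E_{G^{*}}[x\mid x\le t]<\mu$ and $t<1$: the kink of $V$ is interior, as the displayed formula presumes.

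The core step is that the equilibrium $G^{*}$ transports no probability mass across $t$. Since $V$ is convex and $F\in\mathcal M(F)$, the sender's payoff from any $G$ is $\int V\,dG\le\int V\,dF$ with $F$ attaining the maximum, so optimality of $G^{*}$ is equivalent to $\int(x-t)^{+}\,dG^{*}=\int(x-t)^{+}\,dF$. I would represent the mean-preserving contraction $G^{*}\preceq F$ by a martingale coupling: on some probability space, $\theta\sim F$, $X\sim G^{*}$, $\mathbb E[\theta\mid X]=X$. Conditional Jensen for the convex map $y\mapsto(y-t)^{+}$ gives $(X-t)^{+}=(\mathbb E[\theta\mid X]-t)^{+}\le\mathbb E[(\theta-t)^{+}\mid X]$, and since the two sides have equal expectation they agree almost surely; because $y\mapsto(y-t)^{+}$ is affine on $[0,t]$ and on $[t,1]$ but on no neighbourhood of $t$, this equality (together with $\mathbb E[\theta\mid X]=X$) forces the conditional law of $\theta$ given $X$ to be supported in $[0,t]$ or in $[t,1]$. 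Consequently the events $\{X\le t\}$ and $\{\theta\le t\}$ coincide up to a null set (atomlessness of $F$ disposes of $\{\theta=t\}$), so $\mathbb P(X\le t)=F(t)$ and $\mathbb E[X\mathbf{1}_{\{X\le t\}}]=\mathbb E[\theta\mathbf{1}_{\{\theta\le t\}}]=\int_{0}^{t}\theta\,dF$. Since non-certification occurs exactly on $\{X\le t\}$ (the indifference value $x=t$ does not affect the conditional mean), Bayes-consistency gives $\alpha^{*}=\mathbb E[X\mid X\le t]=F(t)^{-1}\int_{0}^{t}\theta\,dF=\mathbb E_{F}[\theta\mid\theta\le t]=\mathbb E_{F}[\theta\mid\theta\le\alpha^{*}+\gamma]$, which is $(\star)$. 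For the uniqueness asserted around the lemma, rewrite $(\star)$ as $\psi(t)=t-\gamma$ with $\psi(t):=\mathbb E_{F}[\theta\mid\theta\le t]$; strict log-concavity of $F$ (\cite{bag}) makes $\psi$ strictly increasing with slope strictly below one, so $\psi(t)-(t-\gamma)$ is strictly decreasing, positive at $t=\gamma$ and negative at $t=1$ (as $\psi(1)=\mu<1-\gamma$), hence vanishes at exactly one point.

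The delicate part is the core step: extracting the structural statement ``no mass crosses $t$'' from mere payoff-optimality against a value function that is only piecewise linear --- hence not strictly convex, so that the optimal $G^{*}$ is highly non-unique. The martingale-coupling / conditional-Jensen route makes this clean, but the kink itself needs attention: one must check that a possible atom of $G^{*}$ at $x=t$ (where the sender is exactly indifferent between certifying and not) does not perturb $\mathbb E[X\mid X\le t]$; atomlessness of $F$ settles this, since on any ``conditionally-above-$t$'' portion of such an atom the martingale property would force $\theta=t$ almost surely, a null event. A more computational alternative to the coupling argument is to note that $\int(x-t)^{+}\,dG^{*}=\int(x-t)^{+}\,dF$ is the same as $\int_{0}^{t}G^{*}(s)\,ds=\int_{0}^{t}F(s)\,ds$ and to combine this with the mean-preserving-contraction inequalities $\int_{0}^{s}G^{*}\le\int_{0}^{s}F$ to conclude that $G^{*}$ and $F$ agree in mass and barycentre on $[0,t]$.
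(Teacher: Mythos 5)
Your proof is correct and takes essentially the same route as the paper's: the convexity and piecewise linearity of $V$ imply that no optimal acquisition pools posterior means across the kink at $\alpha^{*}+\gamma$, so Bayes-consistency of the non-certification belief delivers $(\star)$, and strict log-concavity of $F$ gives uniqueness of the solution. You simply supply far more detail than the paper does --- the martingale-coupling/conditional-Jensen argument for the ``no mass crosses the kink'' step, the elimination of the never-certify case, and the treatment of a possible atom at the kink --- but the underlying idea is identical.
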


On the other hand, when obtaining evidence is free, the sender's payoff is linear on $[\alpha^* + \gamma, 1]$, and hence she is indifferent as to what evidence she acquires about states above $\alpha^{*}+\gamma$. The receiver, of course, is not indifferent and prefers more information. As a result, there is a large multiplicity of equilibria, which can be Pareto ranked.
\begin{proposition}\label{firstprop}
With covert evidence acquisition and a low certification cost ($\gamma < 1-\mu$), any evidence acquisition protocol in which the sender does not pool states above $x^{*} \coloneqq \alpha^{*} + \gamma$ (where $\alpha^*$ is defined in \ref{star}) with those below, then certifies a posterior, $x$, if and only if $x \geq x^*$, is an equilibrium. There are no other equilibria. 

These equilibria can be Pareto ranked: the Pareto-best equilibrium is that in which the sender acquires full information and discloses above $x^{*}$. The Pareto-worst equilibrium is that in which the distribution of posterior means above $x^*$ is degenerate on $\mathbb{E}\left[\left.\theta\right|\theta \geq x^*\right]$.
\end{proposition}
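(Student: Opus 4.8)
\noindent\emph{Proof plan.} The plan is to reduce the sender's decision — given a conjectured value $\alpha\le\mu$ of withheld evidence — to maximizing the expectation of a convex, piecewise-affine value function over the mean-preserving contractions of $F$, identify the maximizers, and combine this with Lemma~\ref{firstlemma} to pin down the whole equilibrium set; then to observe that the sender is indifferent across all of these equilibria, so that the Pareto order is dictated entirely by the receiver's variance of posterior means.

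First I would record the sender's continuation value at a posterior mean $x$: she certifies $x$ for net payoff $x-\gamma$ or withholds it for $\alpha$, so, writing $x^{*}\coloneqq\alpha+\gamma$, her value is $V_\alpha(x)=\max\{x-\gamma,\alpha\}=\alpha+(x-x^{*})^{+}$ — continuous, convex, and piecewise affine with a single kink at $x^{*}$. Since $\kappa=0$ acquisition is free, and her problem is $\max_{G\in\mathcal{M}(F)}\int_0^1 V_\alpha\,dG$. Convexity gives $\int V_\alpha\,dG\le\int V_\alpha\,dF$ for every $G\in\mathcal{M}(F)$; using the identity $\int_0^{x^{*}}H(t)\,dt=\mathbb{E}_H[(x^{*}-X)^{+}]$ for a cdf $H$ with $X\sim H$, together with the inequalities defining $\mathcal{M}(F)$, equality holds exactly when $\int_0^{x^{*}}G=\int_0^{x^{*}}F$, which is precisely the statement that $G$, as an experiment, does not pool any state below $x^{*}$ with any state above it. Combined with the observation (made in the text before Lemma~\ref{firstlemma}) that, $V_\alpha$ having a kink at $x^{*}$, contracting across $x^{*}$ strictly lowers the objective, this characterizes the sender's best responses at a fixed conjecture: certify iff $x\ge x^{*}$ (the tie at $x^{*}$ being a null event), and acquire any $G$ that does not pool across $x^{*}$.

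Next I would assemble the equilibria. By Lemma~\ref{firstlemma} and the uniqueness it asserts, in any equilibrium the conjectured (hence, by consistency, the realized) value of withheld evidence is the unique $\alpha^{*}$ solving~\eqref{star}, so $x^{*}=\alpha^{*}+\gamma$ is determined; the sender then certifies exactly the realizations $x\ge x^{*}$ and may acquire any $G$ not pooling across $x^{*}$, and the induced non-certification belief has mean $\mathbb{E}_F[\theta\mid\theta\le x^{*}]=\alpha^{*}$ by~\eqref{star}, so the profile is consistent. Conversely any such profile is an equilibrium: the receiver's actions — the posterior mean after any certified $x$, and $\alpha^{*}$ after $m_\varnothing$ — are sequentially rational, and the sender cannot profitably deviate, since a covert (hence unobserved) switch to any $G'\in\mathcal{M}(F)$ leaves the receiver's behavior untouched and therefore earns her at most $\int V_{\alpha^{*}}\,dG'\le\int V_{\alpha^{*}}\,dF$, which equals her equilibrium payoff. (This is exactly where covertness is used: under overtness, a hidden acquisition of no information would be strictly profitable.) There are no other equilibria, by the first half of this step.

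Finally, the Pareto ranking. Every equilibrium in the class gives the sender the same ex ante payoff $\int V_{\alpha^{*}}\,dF=\mu-\gamma\bigl(1-F(x^{*})\bigr)$, so she is indifferent and the Pareto order coincides with the receiver's order. In any such equilibrium the receiver faces the distribution of posterior means that places mass $F(x^{*})$ on the point $\alpha^{*}$ — every non-certification realization collapses to the same belief, no matter how $G$ splits $[0,x^{*}]$ — and otherwise agrees with $G$ on $(x^{*},1]$; since the receiver's payoff equals $-\mathrm{Var}_F(\theta)+\mathrm{Var}(\text{posterior mean})$, and that variance is increasing, in the convex order, in the informativeness of $G|_{(x^{*},1]}$ (a mean-preserving contraction of $F|_{(x^{*},1]}$), the receiver is best off when $G|_{(x^{*},1]}=F|_{(x^{*},1]}$ — the sender acquires full information and discloses it above $x^{*}$ — and worst off when $G|_{(x^{*},1]}$ is the degenerate distribution at $\mathbb{E}_F[\theta\mid\theta\ge x^{*}]$. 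I expect the main obstacle to be the extreme-point step in the second paragraph — rigorously equating ``$G$ attains $\int V_{\alpha^{*}}\,dF$'' with ``$G$ does not pool across $x^{*}$'', and showing every best response has the truncation form — together with checking, against the equilibrium concept of Appendix~\ref{eqdef}, that off-path messages (e.g.\ certifying some $x<x^{*}$) and the covert deviation in $G$ behave as claimed; the variance bookkeeping of the last paragraph is routine.
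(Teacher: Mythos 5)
Your proposal is correct and follows essentially the same route as the paper, which leaves this proposition unproved in the appendix and instead relies on the surrounding text: the convex, piecewise-affine value function $V_\alpha(x)=\max\{x-\gamma,\alpha\}$ with a single kink at $\alpha+\gamma$, the observation that best responses never contract across the kink but are indifferent to refinements on either side, uniqueness of $\alpha^*$ from Lemma~\ref{firstlemma}, and the MPS/MPC ordering of the equilibrium posterior distributions for the Pareto ranking. If anything, your write-up is more explicit than the paper's (e.g.\ the identity $\int_0^{x^*}H=\mathbb{E}_H[(x^*-X)^+]$ pinning down exactly which $G$ attain the maximum, and the variance decomposition of the receiver's payoff).
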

It is clear that the Pareto-maximal payoff is uniquely induced (since it is an MPS of all other equilibrium distributions of posteriors) by the stated equilibrium.\footnote{By uniquely we mean unique up to the distribution of non-certified posteriors, which are behaviorally and payoff-irrelevant.} Moreover, the receiver's (and sender's) payoffs are precisely those that they would obtain if the information were exogenously $F$. \textit{Viz.,} the receiver obtains the same payoff as if the sender were fully informed about the state exogenously. Similarly, the Pareto-minimal payoff is also uniquely induced (since it is an MPC of all other equilibrium distributions). In this equilibrium, the sender (and receiver) learn only whether the state is above $x^*$, nothing more. 

When certification is expensive: $\gamma \geq 1-\mu$, in any equilibrium there is no certification.
\begin{proposition}\label{secondprop}
With covert evidence acquisition and a high certification cost ($\gamma \geq 1-\mu$), there is no certification in equilibrium.
\end{proposition}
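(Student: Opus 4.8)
The plan is to argue by contradiction: suppose there is an equilibrium in which $S$ certifies with positive probability, and derive a contradiction with $\gamma \ge 1-\mu$. Fix such an equilibrium and let $\alpha$ be $R$'s posterior mean of $\theta$ after observing $m_{\varnothing}$; this is pinned down by Bayes' rule, since $F(\gamma) > 0$ guarantees that non-certification occurs with positive probability. Holding posterior mean $x$, $S$'s continuation payoff is $V(x) = \max\{\alpha,\, x-\gamma\}$, which is convex and piecewise linear with a single kink at $x = \alpha+\gamma$. Since the support of every $G \in \mathcal{M}(F)$ lies in $[0,1]$ and such a $G$ has no atom at $1$ (as $F$ has none), the event $\{x \ge \alpha+\gamma\}$ has positive $G$-probability only if $\alpha+\gamma < 1$; so certification occurring forces the kink to be interior.

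Next I would show that $\alpha$ must satisfy \ref{star}, which is exactly the argument behind Lemma \ref{firstlemma} and Proposition \ref{firstprop}. Because $\kappa = 0$ and $V$ is convex, any $G$ maximizing $\int V\,dG$ over $\mathcal{M}(F)$ does so ``without contracting mass across the kink'': writing $\int V\,dG = \alpha + \int (x-\alpha-\gamma)^{+}\,dG(x)$ and using the identity $\int (x-c)^{+}\,dG(x) = \int_{c}^{1}(1-G(x))\,dx$ together with the mean-preserving-contraction inequalities, one finds that $G$ is optimal if and only if $\int_{0}^{\alpha+\gamma} G = \int_{0}^{\alpha+\gamma} F$. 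A short argument (squeezing $G(\alpha+\gamma)$ between $F(\alpha+\gamma)$ and itself, then an integration by parts) then shows that every such $G$ places the same mass and the same conditional mean on $[0,\alpha+\gamma]$ as $F$ does. Hence $R$'s belief after $m_{\varnothing}$ has mean $\mathbb{E}_{F}[\theta \mid \theta \le \alpha+\gamma]$, and consistency gives $\alpha = \mathbb{E}_{F}[\theta \mid \theta \le \alpha+\gamma]$, i.e.\ \ref{star}.

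Finally, I would rewrite \ref{star} as $h(\alpha+\gamma) = \gamma$, where $h(t) := t - \mathbb{E}_{F}[\theta \mid \theta \le t] = \int_{0}^{t} F(s)\,ds \,/\, F(t)$ (the second expression follows by integration by parts). Setting $\Phi(t) = \int_{0}^{t} F$, we have $h = \Phi/\Phi'$, so $h' > 0$ is equivalent to strict log-concavity of $\Phi$, which $\Phi$ inherits from $F$ (\cite{bag}). Thus $h$ is strictly increasing on $(0,1]$, and since $h(1) = \int_{0}^{1} F(s)\,ds = 1-\mu$ while $\alpha+\gamma < 1$, we get $\gamma = h(\alpha+\gamma) < h(1) = 1-\mu$, contradicting $\gamma \ge 1-\mu$. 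Hence no equilibrium features certification.

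The step I expect to be delicate is the middle one: $S$ is indifferent among a large family of evidence structures (because $V$ is affine on each side of the kink), so I must verify that \emph{every} optimal $G$ — not merely $G = F$ — induces the conditional mean $\mathbb{E}_{F}[\theta \mid \theta \le \alpha+\gamma]$ on the non-certification event, and likewise for any mixture over optimal $G$'s, so that \ref{star} is genuinely forced. Once \ref{star} is established, the remainder is just the monotonicity of $h$, and the value-function form, the interiority of the kink, and the no-atom-at-$1$ claim are all routine.
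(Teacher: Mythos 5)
Your proof is correct and takes essentially the same route as the paper's: your $h$ is exactly the map $\xi$ from the proof of Lemma \ref{firstlemma}, and your contradiction $\gamma = h(\alpha+\gamma) < h(1) = 1-\mu \le \gamma$ is precisely the paper's observation that strict log-concavity forces $\xi(x^{*}) - \gamma \le 0$, so no fixed point with certification exists. You merely spell out the steps the paper leaves implicit (that every optimal $G$ must reproduce $F$'s mass and conditional mean below the kink, so that \ref{star} is forced, and that the kink must be interior for certification to occur), which is careful but not a different argument.
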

Because there is no certification, these equilibria are all equivalent in the sense that there is no information transmission and the sender's and receiver's payoffs are unaffected by the distribution of posterior means acquired by the sender.

When obtaining evidence is overt, the analysis is equally simple: 
\begin{proposition}\label{lastbenchprop}
With overt evidence acquisition, there is an equilibrium in which the sender acquires no information and never certifies. In every equilibrium, the sender acquires no posteriors strictly greater than $\min\{\mu + \gamma,1\}$ and never certifies.
\end{proposition}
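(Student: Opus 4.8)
The plan is to exploit the defining feature of \emph{overt} acquisition: the receiver sees $G$, so the belief attached to non-certification is forced to equal the true conditional mean $\mathbb{E}_G[x\mid\text{not certified}]$ and cannot be inflated by the sender's disclosure behaviour. Everything then follows from a single identity. \textbf{Step 1 (key identity).} Fix any $G\in\mathcal{M}(F)$ and any continuation play in the disclosure stage, encoded by a measurable certification probability $\sigma\colon[0,1]\to[0,1]$; write $q\coloneqq\int_0^1\sigma(x)\,dG(x)$. Verifiability forces the receiver's action after ``certify $x$'' to be $x$, and (since $G$ is observed) his action after $m_\varnothing$ to be $\alpha=\mathbb{E}_G[x\mid\text{not certified}]$ whenever that event has positive probability. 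Substituting the definition of $\alpha$ into the sender's ex ante payoff $\int_0^1\big(\sigma(x)(x-\gamma)+(1-\sigma(x))\alpha\big)\,dG(x)$ collapses it to $\mathbb{E}_G[x]-\gamma q=\mu-\gamma q$, using that $G$ is mean-preserving. (When certification is total, $\alpha$ is an off-path belief and the payoff is $\mu-\gamma$, consistent with $q=1$.) In words, in the overt game certification is pure deadweight loss from the sender's ex ante perspective.

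\textbf{Step 2 (the stated equilibrium and the cap at $\mu$).} By Step 1, from any $G$ and any continuation the sender earns at most $\mu$, with equality iff $q=0$. The profile ``acquire no information ($G=\delta_\mu$, the distribution degenerate at $\mu$) and never certify'' attains $\mu$: given $\delta_\mu$, not certifying induces $\alpha=\mu$, while certifying the lone realization $\mu$ yields $\mu-\gamma<\mu$, so no certification is the unique continuation equilibrium there. To complete this to a PBE one selects, after each off-path $G'$, an equilibrium of the induced disclosure subgame; by Step 1 every such continuation pays the deviator at most $\mu$, so no deviation is profitable. This establishes the first assertion.

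\textbf{Step 3 (features common to all equilibria).} In any PBE the sender can secure $\mu$ via the $\delta_\mu$ deviation above and, by Step 1, can never exceed $\mu$; hence her equilibrium payoff is exactly $\mu$, which by Step 1 forces $q=0$ on the equilibrium path: she never certifies. Consequently the receiver's only on-path information set is the one reached by $m_\varnothing$, at which his posterior is the equilibrium distribution $G^\ast$ itself and his action is $\mu$. Optimality of non-certification for each realization $x\in\supp(G^\ast)$ then requires $x-\gamma\le\mu$, so $G^\ast$ places no mass strictly above $\mu+\gamma$; intersecting with $[0,1]$ yields the claimed bound $\min\{\mu+\gamma,1\}$.

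\textbf{Anticipated difficulty.} There is essentially no hard step here — the proposition rests entirely on the identity in Step 1, which says overtness strips selective disclosure of any ex ante benefit; the remaining arguments are short belief-chasing. The one technical loose end is the existence of continuation equilibria of the disclosure subgame after arbitrary off-path $G'$ (needed to make the PBE in Step 2 well defined), which is routine and can be dispatched by a fixed-point/intermediate-value argument on the certification cutoff, allowing boundary cutoffs and atoms of $G'$.
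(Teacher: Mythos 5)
Your proof is correct and follows essentially the same route as the paper's: the identity $\mathbb{E}[\text{payoff}]=\mu-\gamma q$ is just a formalization of the paper's observation that the sender's maximal payoff is $\mu$ (secured by deviating to $\delta_\mu$ and not certifying, credible under overtness), and your Step 3's interim argument that any realization $x>\mu+\gamma$ would be certified is exactly the paper's reason for the support bound. The extra care you take with the off-path continuation after a deviant $G'$ is a harmless refinement of the same argument.
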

For all intents and purposes, all of the equilibria with overt evidence acquisition are the same and equivalent to the one in which the sender chooses the degenerate distribution, i.e., acquires zero evidence and never certifies. In each, there is no information transmission and the sender does not waste any resources on certification. The partial commitment power granted to the sender by transparency hurts the receiver. The sender may now credibly commit to effectively learn nothing about the state, which she does in every equilibrium. As we will see, this result persists when evidence is costly ($\kappa > 0$). 

\section{Main Analysis}\label{main}
Now we introduce a cost to acquiring information, i.e., the cost parameter is $\kappa > 0$.

\subsection{Covert Evidence Acquisition}

%When obtaining evidence is covert and $S$ does not get her finding certified with probability $1$, $R$'s belief upon observing message $m_{\varnothing}$ (no certification) is some $\hat{x} \in \left[0,1\right]$, which must be correct at equilibrium. 
Our first result notes that at most one piece of certified evidence is followed by certification (with positive probability) in any equilibrium.
\begin{lemma}\label{mainlemma1}
In any equilibrium, the distribution of posterior means acquired by the sender, $G$, has support on at most one posterior that is certified with positive probability.
\end{lemma}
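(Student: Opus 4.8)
The plan is to argue by contradiction using the shape of the sender's payoff as a function of the acquired posterior mean $x$. Suppose, in a given equilibrium with covert acquisition, the receiver's belief following non-certification assigns an expected value $\alpha$ to the state; let $a_0$ denote $R$'s action after non-certification (so $a_0 = \alpha$), and recall that after certified message $x$, $R$ takes action $x$. Then the sender's gross (before information cost) payoff from holding posterior mean $x$ is $W(x) = \max\{\alpha,\, x - \gamma\}$, which is convex and piecewise linear with a single kink at $\hat{x} \coloneqq \alpha + \gamma$: it is flat at height $\alpha$ for $x \le \hat{x}$ and has slope $1$ for $x \ge \hat{x}$. Subtracting the cost, the sender solves $\max_{G \in \mathcal{M}(F)} \int \big(W(x) - \kappa c(x)\big)\, dG(x)$. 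Since $c$ is strictly convex, $W - \kappa c$ is strictly concave on the flat part $[0,\hat{x}]$ and also strictly concave on the increasing part $[\hat{x},1]$; in particular it is strictly concave on each of the two pieces separately.

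Now I would invoke the standard concavification/fusion logic: the value the sender extracts is the concave envelope of $W - \kappa c$ evaluated against $F$, and the optimal $G$ is supported on the set where $W - \kappa c$ coincides with its concave envelope. The key observation is that on the increasing branch $[\hat{x},1]$, strict concavity of $W - \kappa c$ means the concave envelope can touch $W - \kappa c$ at \emph{at most one} point of $(\hat{x}, 1]$ — any two distinct points on a strictly concave arc lie strictly below the chord joining them, hence strictly below the envelope — so the sender never finds it optimal to acquire two distinct posteriors both strictly above $\hat{x}$. Combined with the disclosure rule (which I will need to pin down: a posterior $x$ is certified with positive probability only if $x - \gamma \ge \alpha$, i.e.\ $x \ge \hat{x}$, since certification costs $\gamma$ and yields action $x$ versus action $\alpha$ for free), this shows that all certified posteriors lie in $[\hat{x},1]$ and, among those, at most one is actually in the support of $G$. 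The only loose end is the boundary point $x = \hat{x}$ itself: if $\hat{x}$ is certified it gives exactly $\alpha$, the same as not certifying, so certifying it is (weakly) wasteful and we may take it as non-certified; in any case it contributes at most one more candidate and one can rule out simultaneously certifying $\hat{x}$ and some $x > \hat{x}$ by the same strict-concavity chord argument applied on $[\hat{x},1]$.

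I expect the main obstacle to be handling the endogeneity of $\alpha$ cleanly — $\alpha$ is itself determined in equilibrium by Bayes' rule over the non-certified posteriors — but this turns out not to matter for the argument: the shape of $W$ (flat, then slope one, single kink) holds for \emph{any} value of $\alpha \le \mu$, and the strict-concavity-implies-at-most-one-touching-point conclusion is uniform in $\alpha$. A secondary subtlety is the degenerate case $\gamma \ge 1 - \mu$ (or more precisely when no $x \in [\hat{x},1]$ is ever certified), where the claim holds vacuously since \emph{zero} posteriors are certified. So the real content is the single strictly concave arc on the certification region forcing at most one mass point there, and the proof is essentially: (i) characterize $W$; (ii) note certified posteriors must lie weakly above the kink; (iii) apply strict concavity of $W - \kappa c$ on that region to conclude the concave envelope touches at most once.
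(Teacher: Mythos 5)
Your conclusion is right, and the operative mechanism---on the certification branch the gross payoff $x-\gamma$ is linear while the cost $\kappa c$ is strictly convex, so the net payoff is strictly concave there and two certified support points can be profitably merged---is exactly the paper's argument. The paper's proof is just the direct deviation: replace two certified posteriors $x_1,x_2$ in the support of $G$ by their barycenter; the resulting distribution is a mean-preserving contraction of $G$ and hence still a feasible MPC of $F$, the certification payoff is unchanged by linearity, and $\int c\, dG$ strictly falls by Jensen. Your step (ii), that any posterior certified with positive probability must satisfy $x\ge \hat x = \alpha+\gamma$, is correct, though the paper does not even need it: certifying the barycenter yields the same gross payoff as the original lottery regardless of where the kink sits.

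The way you package step (iii), however, is wrong as stated, in two respects. First, the relevant dual object for a problem constrained to mean-preserving contractions of $F$ is not the concave envelope but a convex ``price function'' majorizing the objective (the \cite{martini} machinery the paper uses elsewhere); concavification characterizes the unconstrained persuasion problem in which every distribution with mean $\mu$ is feasible, and under the MPC constraint the optimal $G$ need not be supported on the contact set of the concave envelope. Second, the claim that the concave envelope ``can touch a strictly concave arc at most once'' is false: on a strictly concave arc the function lies \emph{above} its chords, so the envelope can (and typically does) coincide with the function on an entire interval there; your chord inequality is stated backwards. The repair is to drop the envelope language and argue directly, as the paper does: if $x_1<x_2$ are both certified, their barycenter $\bar x$ satisfies $V_\alpha(\bar x) > \lambda V_\alpha(x_1)+(1-\lambda)V_\alpha(x_2)$ by strict concavity of $V_\alpha$ on the certification region, and the merged distribution is feasible, contradicting optimality. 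With that substitution your proof coincides with the paper's.
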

Key to this lemma is the fact that $S$'s payoff is linear (gross of the cost of obtaining the belief) in $R$'s posterior $x$, but the cost of obtaining a belief $x$ is strictly convex. Thus, by learning less collapsing beliefs to their average, $S$ can save on costs while not affecting the direct payoff from certification. 
\begin{lemma}\label{mainlemma2}
In any equilibrium, the distribution of posterior means acquired by the sender, $G$, has support on at most one posterior that is not certified with positive probability.
\end{lemma}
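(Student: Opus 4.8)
The plan is to mirror the argument behind Lemma \ref{mainlemma1}, but now focusing on the non-certified posteriors. Fix an equilibrium, let $G$ be the sender's equilibrium distribution of posterior means, and let $\alpha$ be the receiver's equilibrium belief following the message $m_{\varnothing}$. The key observation is that whatever posterior mean $x$ the sender acquires and then does not certify, her continuation payoff from that realization is exactly $\alpha$ (the action the receiver takes after non-disclosure), which is \emph{independent of $x$}. Hence, among all realizations of $G$ that are followed (with positive probability) by non-certification, the gross-of-cost payoff is constant in $x$, while the acquisition cost $\kappa c(x)$ is strictly convex in $x$. So if the support of $G$ restricted to the ``non-certified'' region contained two distinct points $x_1 < x_2$ (or a non-degenerate sub-distribution), the sender could strictly lower her expected cost — via a mean-preserving contraction that collapses that part of the support to its conditional mean — without changing the direct payoff from those realizations, and (because collapsing beliefs produces an MPC, which is still in $\mathcal{M}(F)$ and still consistent with the same certification behavior on the rest) without affecting anything else. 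This contradicts optimality.

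The steps, in order: (i) Characterize the sender's continuation value as a function of the acquired posterior mean $x$, noting it equals $\max\{x-\gamma,\ \alpha\}$ up to the tie-breaking region, and in particular equals the constant $\alpha$ on any realization that is not certified. (ii) Suppose toward a contradiction that $G$ places positive probability on a set $B$ of non-certified posteriors with $|\supp(G|_B)|\ge 2$; let $\bar{x}_B = \mathbb{E}_G[x \mid x\in B]$ be the conditional mean. (iii) Construct the deviation $G'$ that agrees with $G$ outside $B$ and replaces the restriction to $B$ by a point mass at $\bar{x}_B$, keeping the same disclosure rule (still not certifying on that mass, since the receiver's off-path belief and on-path behavior are unchanged and non-certification at $\bar{x}_B$ remains optimal for the sender because the continuation payoff there is still $\alpha \ge \bar{x}_B - \gamma$ whenever it was on the original points — this needs a short check). (iv) Compute the payoff change: the direct (gross) payoff is unchanged because it was the constant $\alpha$ on $B$; the cost strictly decreases by $\kappa(\mathbb{E}_G[c(x)\mid x\in B] - c(\bar{x}_B)) > 0$ by strict convexity of $c$ and Jensen; hence $G'$ is a strict improvement, contradicting equilibrium. (v) Conclude that $\supp(G)$ has at most one non-certified point.

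The main obstacle is step (iii): one must be careful that the contracted distribution $G'$ is still an element of $\mathcal{M}(F)$ (this is automatic, since an MPC of an MPC of $F$ is an MPC of $F$) and, more delicately, that the \emph{same} disclosure strategy remains sequentially rational at $G'$ — in particular that non-certification at the new pooled mean $\bar{x}_B$ is still (weakly) optimal for the sender given the receiver's unchanged belief $\alpha$ at $m_{\varnothing}$, and that the receiver's belief $\alpha$ is still consistent (a fixed point) under $G'$. Since collapsing a set of already-non-certified realizations to their mean does not change the distribution of non-certified posterior means in a way that disturbs $\mathbb{E}_G[\theta \mid m_{\varnothing}]$ — the conditional mean over the non-certified event is preserved — consistency of $\alpha$ is maintained, and the sender's incentive to not certify at $\bar{x}_B$ follows from $\bar{x}_B - \gamma \le \mathbb{E}_G[x - \gamma \mid x \in B] \le \alpha$. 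Handling the boundary/tie-breaking case $\bar{x}_B - \gamma = \alpha$ exactly, and the possibility that some original points in $B$ were indifferent, requires a brief but careful argument, but does not change the conclusion.
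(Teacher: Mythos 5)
Your proposal is correct and follows essentially the same route as the paper: since every non-certified realization yields the constant continuation payoff $\alpha$, collapsing the non-certified part of the support to its barycenter (still an MPC of $F$) leaves the gross payoff unchanged while strictly reducing the acquisition cost by Jensen's inequality and strict convexity of $c$. The paper's own proof is a one-line version of exactly this argument; your extra checks (that non-certification remains optimal at the pooled mean and that the deviation is feasible) are sound but not strictly needed in the covert case, where the receiver's strategy is fixed against any deviation.
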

In Lemma \ref{mainlemma1}, it was the linearity of the gross payoff that rendered a single certified posterior optimal. Here, the gross payoff from a non-certified posterior is just a constant--recall that it is the conjectured posterior following non-certification (message $m_{\varnothing}$)--which again makes coarse(r) learning superior for the sender. Combining these lemmas, we have the following proposition, which pins down coarse learning as a necessary feature of any equilibrium:
\begin{proposition}\label{simplify}
In any equilibrium, the distribution of posterior means acquired by the sender, $G$, has support on at most two points. The sender never randomizes following information acquisition: a posterior in support of $G$ is either certified or not.
\end{proposition}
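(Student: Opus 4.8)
The plan is to read the proposition off Lemmas~\ref{mainlemma1} and~\ref{mainlemma2} by a simple covering argument, and then dispatch the one remaining possibility---that the sender randomizes at some acquired posterior---using the two lemmas together with the strict positivity of the certification fee.

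First I would fix an equilibrium and, for each $x \in \supp(G)$, let $p(x) \in [0,1]$ denote the probability with which $S$ certifies the posterior $x$. Define $P_c \coloneqq \{x \in \supp(G) : p(x) > 0\}$ and $P_n \coloneqq \{x \in \supp(G) : p(x) < 1\}$. Since $p(x) \in [0,1]$ for every $x$, no support point can fail to lie in both $P_c$ and $P_n$ simultaneously, so $\supp(G) = P_c \cup P_n$. Now Lemma~\ref{mainlemma1} is precisely the assertion $|P_c| \leq 1$ and Lemma~\ref{mainlemma2} is precisely $|P_n| \leq 1$; hence $|\supp(G)| = |P_c \cup P_n| \leq 2$, which is the first claim. (All the substance here is in the two lemmas; this step only records that every support point is certified with positive probability, or not certified with positive probability, or both.)

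For the second claim, I would argue by contradiction: suppose some $\hat{x} \in \supp(G)$ has $p(\hat{x}) \in (0,1)$, i.e.\ $\hat{x} \in P_c \cap P_n$. Because $|P_c| \leq 1$, every other support point $y$ must satisfy $p(y) = 0$ and therefore lie in $P_n$; since $\hat{x} \in P_n$ as well, $|P_n| \leq 1$ forces $\supp(G) = \{\hat{x}\}$. But a degenerate $G \in \mathcal{M}(F)$ is the point mass at the prior mean, so $\hat{x} = \mu$. Then $S$ acquires no information and sends $m_{\varnothing}$ with probability $1 - p(\mu) > 0$, so $m_{\varnothing}$ is on the path of play and Bayes consistency pins $R$'s belief after $m_{\varnothing}$ to have mean $\mu$; thus $S$'s payoff from $m_{\varnothing}$ is $\mu$, strictly above her payoff $\mu - \gamma$ from certifying $\mu$, since $\gamma > 0$. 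This contradicts $p(\mu) > 0$, so every posterior in $\supp(G)$ is either certified with probability one or not certified with probability one.

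I do not anticipate a genuine obstacle: given Lemmas~\ref{mainlemma1} and~\ref{mainlemma2}, the proposition is essentially bookkeeping. The only step that warrants care is the last one, where one must notice that the sole configuration compatible with randomization is the no-learning one ($G$ degenerate at $\mu$) and then eliminate it; in particular it is worth being explicit that $m_{\varnothing}$ is on the equilibrium path in that configuration, so the receiver's belief following non-certification is pinned down by Bayes' rule rather than being a free off-path parameter.
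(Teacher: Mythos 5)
Your proof is correct and follows essentially the same route as the paper, which presents Proposition~\ref{simplify} as an immediate consequence of Lemmas~\ref{mainlemma1} and~\ref{mainlemma2} (the covering argument you spell out is exactly the intended bookkeeping). The one place you add substance is the no-randomization step: the paper handles it by asserting, inside the proof of Lemma~\ref{mainlemma2}, that the certified posterior is ``obviously'' certified with probability one, whereas you reduce any randomization to the degenerate-at-$\mu$ configuration and eliminate it cleanly using $\gamma>0$ and the fact that $m_{\varnothing}$ is then on path --- a slightly more careful treatment of the same point.
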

We see that the costliness of obtaining evidence leads to coarse information. The sender optimally acquires at most two pieces of evidence: good news, which she finds worthwhile to get certified; or bad news, which she does not. This hints at our convergence result of the next section. Information is valuable for the receiver, but this is not internalized by the sender, who instead saves on her privately-incurred costs of obtaining evidence by omitting fine details.

Next, we turn our attention to equilibrium existence, before moving on to the more interesting (and economically relevant) discussion of properties of equilibria.
\begin{theorem}\label{maintheorem}
An equilibrium always exists. It is not necessarily unique.
\end{theorem}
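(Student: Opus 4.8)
The plan is to use Proposition \ref{simplify} to recast the search for equilibria as a fixed-point problem in the belief $R$ holds after $m_\varnothing$, and then invoke Kakutani's theorem. Fix a conjectured post-$m_\varnothing$ belief $\alpha \in [0,1]$. Since $R$ plays the posterior mean, the sender certifies an acquired posterior $x$ exactly when $x - \gamma \geq \alpha$, so her acquisition problem is $\max_{G \in \mathcal{M}(F)} \int_0^1 \psi_\alpha(x)\, dG(x)$, where $\psi_\alpha(x) \coloneqq \max\{\alpha, x - \gamma\} - \kappa c(x)$. The integrand is continuous and bounded and $\mathcal{M}(F)$ is convex and compact (in the weak-$*$ topology), so a maximizer exists; the argument behind Lemmas \ref{mainlemma1} and \ref{mainlemma2} — which uses only optimality against a fixed belief — shows one may be taken supported on at most two points, the lower uncertified and the higher certified. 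A routine first-order analysis exploiting strict convexity of $c$ then pins down the optimal pair $\big(x_L(\alpha), x_H(\alpha)\big)$ (or a corner, or the degenerate $\delta_\mu$, or a response certifying with probability one), and whenever an optimal response leaves probability on an uncertified posterior, Lemma \ref{mainlemma2} makes that posterior unique, and it coincides with the belief it induces after $m_\varnothing$.

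Let $\Gamma(\alpha) \subseteq [0,1]$ be the set of post-$m_\varnothing$ beliefs induced by optimal responses to $\alpha$ that leave positive probability uncertified (so $x_L(\alpha) \in \Gamma(\alpha)$, or $\mu \in \Gamma(\alpha)$ if the degenerate acquisition without certification is optimal). If $\Gamma(\alpha) = \emptyset$ for some $\alpha$ — i.e.\ every optimal response to $\alpha$ certifies with probability one, hence, by the argument behind Lemma \ref{mainlemma1}, is the degenerate acquisition $\delta_\mu$ with certification strictly preferred to $m_\varnothing$ — then the assignment $m_\varnothing \mapsto 0$ deters deviating to non-certification, and since lowering the conjectured belief weakly lowers the value of every other strategy while leaving unchanged the value of acquiring $\delta_\mu$ and certifying, that response stays optimal against belief $0$; the resulting profile is an equilibrium and we are done. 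Otherwise $\Gamma$ is nonempty-valued on $[0,1]$, and I would argue it is convex-valued — the optimal uncertified posteriors form an interval, since the sender's reduced objective is built from affine and strictly concave pieces over an interval feasible set — and has closed graph, by Berge's maximum theorem, using joint continuity of $\psi_\alpha$ in $(\alpha,x)$ and continuity in $\alpha$ of the mean-preserving-contraction feasibility constraints. Kakutani then delivers $\alpha^* \in \Gamma(\alpha^*)$; pairing the associated acquisition with $R$ playing the posterior mean after each (necessarily truthful) certified message and belief $\alpha^*$ after $m_\varnothing$ yields a perfect Bayesian equilibrium in the sense of Appendix \ref{eqdef}.

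\textbf{Non-uniqueness.} One instance suffices. Take the quadratic cost $c(x) = (x-\mu)^2$, any $\gamma \in (0,\mu)$, and $\kappa$ large. Then both the profile in which the sender acquires $\delta_\mu$ and certifies (with $m_\varnothing$ off path, assigned a belief below $\mu-\gamma$) and the profile in which she acquires $\delta_\mu$ and never certifies (with $m_\varnothing$ on path, assigned belief $\mu$) are equilibria: reversing the certification decision on $\delta_\mu$ is plainly unprofitable, and deviating to a nondegenerate two-point acquisition incurs an acquisition cost equal to $\kappa$ times its variance, which for $\kappa$ large swamps the bounded gross gain. The two equilibria are distinct, since the sender earns $\mu-\gamma$ in the first and $\mu$ in the second.

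\textbf{Main obstacle.} The real work is the regularity of $\Gamma$ — convex-valuedness and upper hemicontinuity — near thresholds of $\alpha$ at which the sender switches between responses that do and do not leave an uncertified posterior, or at which $m_\varnothing$ crosses between on- and off-path: there the induced belief, the feasible set of two-point contractions of $F$, and the effective domain of $\Gamma$ all move together and must be reconciled with the kink of $\max\{\alpha, x - \gamma\}$. Isolating the certify-with-probability-one regime as a separate case, as above, is the clean way to prevent that boundary from derailing the fixed-point argument.
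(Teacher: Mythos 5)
Your overall architecture is the same as the paper's: reduce to a fixed-point problem in the conjectured non-certification belief $\alpha$, use the two-point structure from Lemmas \ref{mainlemma1} and \ref{mainlemma2} to identify the best response with a (lower) support point, handle the certify-with-probability-one regime separately, and apply a fixed-point theorem. Your non-uniqueness example (large $\kappa$, both no-acquisition profiles coexisting) also matches the paper's Region $C$ discussion. The difference is that the paper uses Brouwer on a continuous \emph{function} $\alpha \mapsto L(\alpha)$, whereas you use Kakutani on a correspondence $\Gamma$ — and that difference is exactly where your argument has a hole.

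The gap is the regularity of $\Gamma$, which you assert but do not prove (and candidly flag as ``the real work''). Convex-valuedness does not follow from ``the objective is built from affine and strictly concave pieces'': a priori the argmax over $G \in \mathcal{M}(F)$ could contain a two-point distribution and the degenerate $\delta_\mu$ simultaneously, in which case $\Gamma(\alpha) = \{x_L(\alpha), \mu\}$ is a two-point set, not an interval, and Kakutani does not apply. The paper closes this by proving the optimizer is \emph{unique} for every $\alpha$ (Lemma \ref{uniqids}, via the price-function duality of \cite{martini} and Lemma 1 of \cite{bipool}), which makes the best response a function and convexity vacuous. Similarly, Berge gives upper hemicontinuity of the argmax correspondence in $G$-space, but passing from that to a closed graph for the \emph{induced belief} requires that the support structure not jump from two points to one along the path — the induced uncertified posterior jumps from $x_L(\alpha) < \mu$ to $\mu$ at such a transition. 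The paper handles this with its endpoint case analysis (checking $\alpha = 0$ and $\alpha = \mu$, and arguing via the dual that degeneracy at the endpoints propagates to all intermediate $\alpha$, so that on the Brouwer branch $|\supp(G_\alpha)| = 2$ throughout). Both issues are fixable — indeed, a direct route to uniqueness is to note that the objective is linear in $G$, so two distinct optimizers would make their mixture optimal with three or four support points, which the collapsing argument of Lemmas \ref{mainlemma1} and \ref{mainlemma2} strictly improves — but as written your proof stops exactly at the step the theorem needs.
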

Here is a sketch of the proof: if the cost of certification is sufficiently high ($\gamma \geq 1-\mu$), the result is immediate since there exists an equilibrium in which the sender acquires no information and certifies with probability $0$. Now let $\gamma < 1-\mu$. For a fixed (conjectured) value of non-certification, $\alpha \in \left[0,\mu\right]$, the sender's payoff as a function of her acquired posterior $x$ is 
\[V_{\alpha}\left(x\right) = \begin{cases}
\alpha - \kappa c\left(x\right), \quad &\text{if} \quad \alpha + \gamma > x\\
x-\gamma- \kappa c\left(x\right), \quad &\text{if} \quad \alpha + \gamma \leq x\\
\end{cases} \text{.}\]
In her evidence acquisition problem, the sender chooses a distribution $G_{\alpha}$ that solves 
\[\label{maxprob}\tag{$\triangle$}\max_{G \in \mathcal{M}\left(F\right)}\int V_{\alpha} \, dG \text{.}\]
We finish by showing that there must always exist a fixed point $\alpha$ that corresponds to the sender's optimally acquired $G_{\alpha}$.

At equilibrium there are four potential types of information acquisition. First, the sender may acquire coarse evidence about whether the state is high or low (Figure \ref{figsub1}). There, the sender's evidence acquisition corresponds to a truncation of the distribution: she is sure whether the state is above some threshold, but nothing more. This occurs if the cost of acquiring information ($\kappa$) is sufficiently low and the certification cost ($\gamma$) is also low (but not $0$). In the parlance of \cite{martini} or \cite{bipool}, this is a monotone partitional evidence structure. Second, the sender may acquire evidence but occasionally conflate high states with low ones and \textit{vice versa} (Figure \ref{figsub2}). This may occur (depending on the cost function) if $\kappa$ and $\gamma$ are moderately low. In the parlance of \cite{bipool}, this is a ``bi-pooling'' evidence structure.

Third, the sender may acquire no evidence but always certify it (Figure \ref{figsub3}). This occurs if $\kappa$ is high but $\gamma$ is low. Non-certification is off-path and is met with extreme pessimism ($0$ belief) if the receiver observes it. Fourth, the sender may acquire no evidence and never certify it (Figure \ref{figsub4}). This occurs if $\kappa$ and $\gamma$ are high. In both of these cases, the sender collapses the entire distribution to its barycenter (mean), learning nothing.

\begin{figure}%[tbp]
\centering
\begin{subfigure}{.5\textwidth}
  \centering
  \includegraphics[scale=.15]{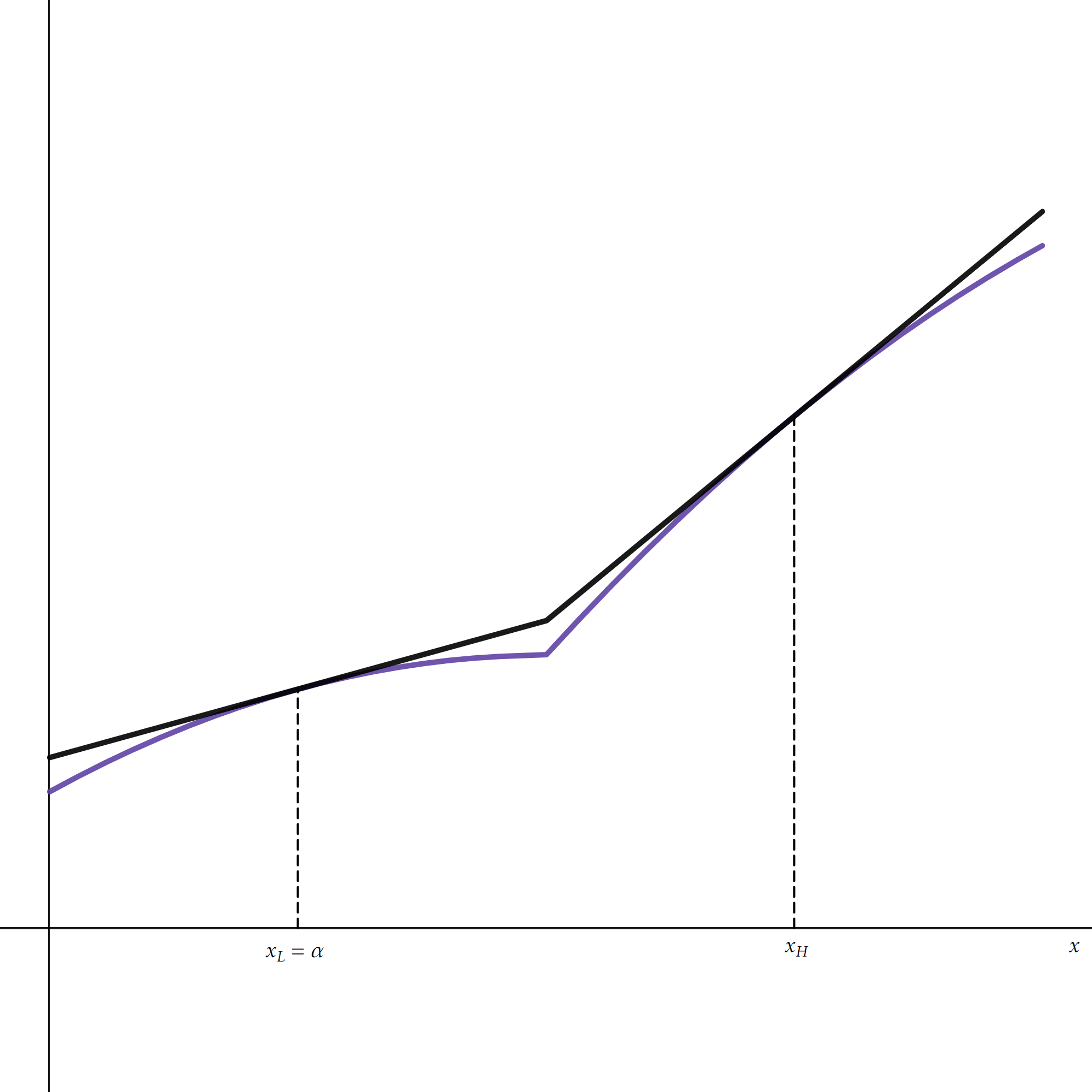}
  \caption{Low $\kappa$ and $\gamma$.}
  \label{figsub1}
\end{subfigure}%
\begin{subfigure}{.5\textwidth}
  \centering
  \includegraphics[scale=.15]{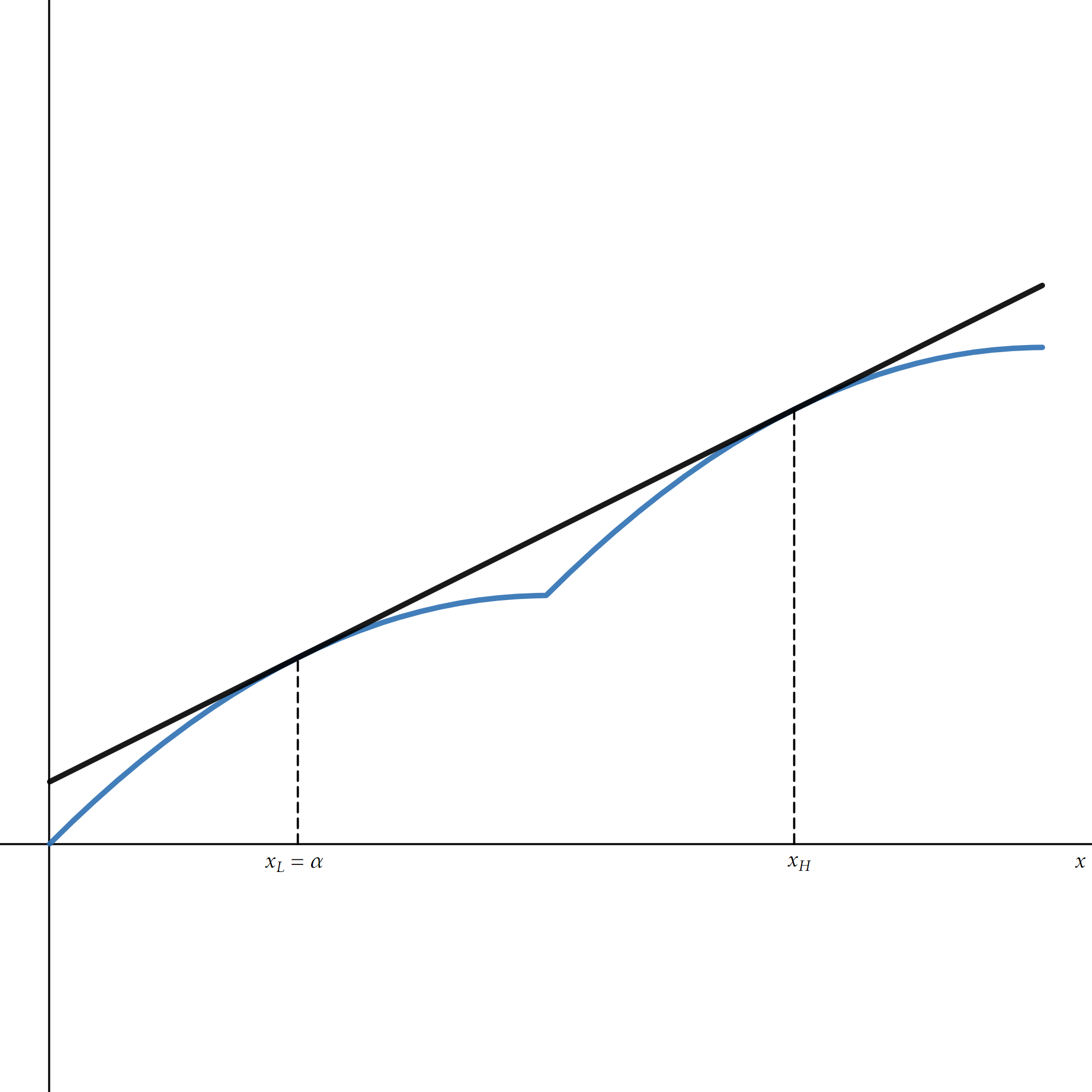}
  \caption{Moderate $\kappa$ and low $\gamma$.}
  \label{figsub2}
\end{subfigure}
\par
\bigskip
\par
\bigskip
\par
\begin{subfigure}{.5\textwidth}
  \centering
  \includegraphics[scale=.15]{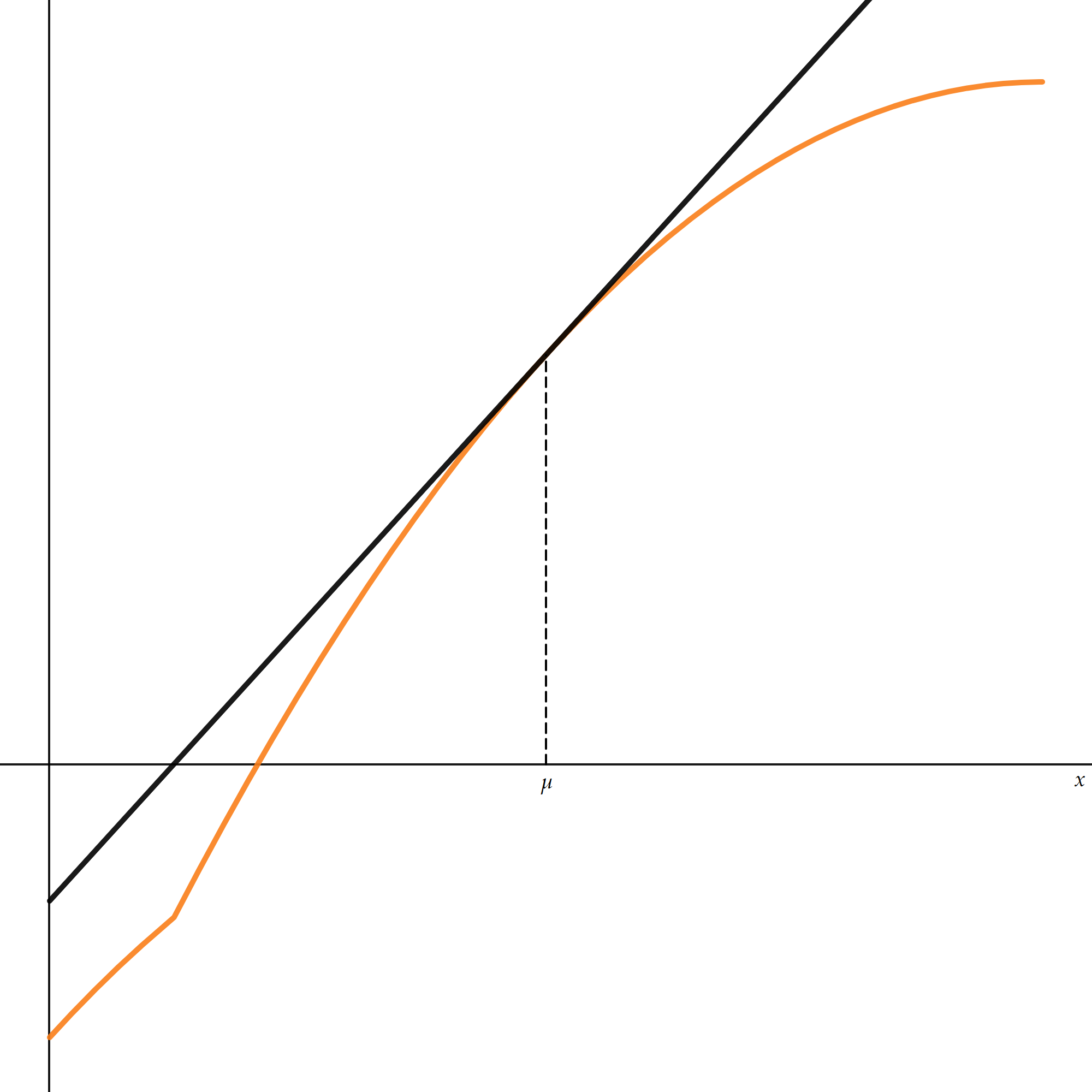}
  \caption{High $\kappa$ and low $\gamma$.}
  \label{figsub3}
\end{subfigure}%
\begin{subfigure}{.5\textwidth}
  \centering
  \includegraphics[scale=.15]{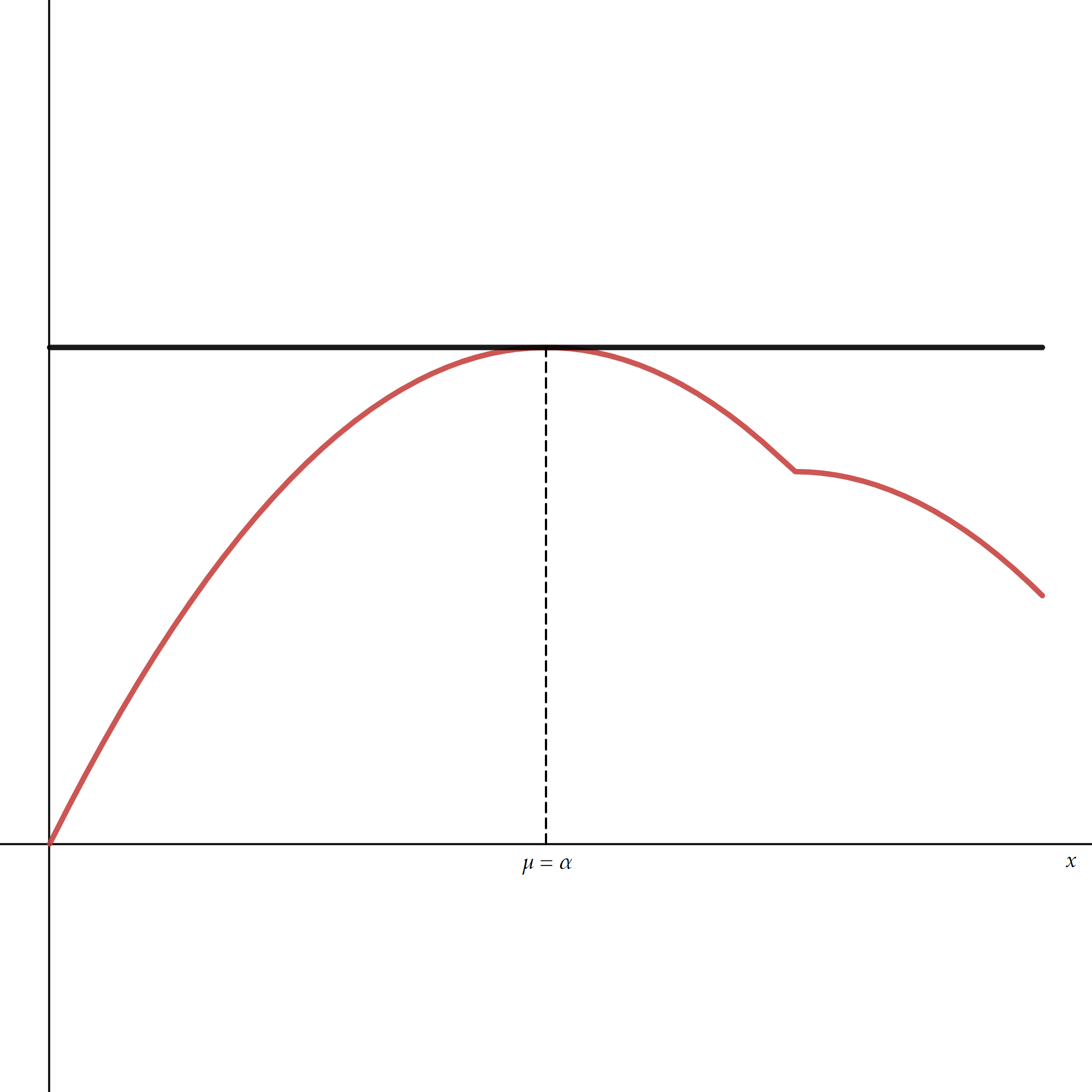}
  \caption{High $\kappa$ and $\gamma$.}
  \label{figsub4}
\end{subfigure}
\caption{\textbf{The Four Varieties of Covert Equilibria:} The equilibrium payoff functions as functions of the acquired posterior mean ($x$) given the conjectured non-certification value ($\alpha$) are the colored curves. The dual in this optimization problem (the ``price function'' of \cite{martini}), in black, determines the sender's optimal evidence.}
\label{fig2}
\end{figure}

\subsection{Some Properties of (Covert) Equilibria}

No matter the cost functional, the prior, or the cost of certification, if information is sufficiently costly, there exists an equilibrium in which the sender does not acquire any information, and there exist no equilibria in which the sender acquires information.

\begin{proposition}\label{costofinfoprop}
For any prior $F$, cost functional $C$, and certification cost $\gamma > 0$, if $\kappa$ is sufficiently high, there exist no equilibria with evidence acquisition. 
\end{proposition}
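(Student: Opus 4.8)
The plan is to deduce from Proposition~\ref{simplify} that any equilibrium with acquisition is two-point, and then bound $\kappa$ from above using two trivial deviations. Suppose, toward a contradiction, that for arbitrarily large $\kappa$ there is an equilibrium in which $S$ acquires information. By Proposition~\ref{simplify}, since $S$ acquires information, $G$ is supported on exactly two points $x_L<x_H$ with no randomization, and by Lemmas~\ref{mainlemma1} and \ref{mainlemma2} exactly one of them is certified; write $p_L\in(0,1)$ for the probability of the non-certified point and $p_H=1-p_L$ for the certified one. Mean preservation ($G\in\mathcal M(F)$) gives $p_Lx_L+p_Hx_H=\mu$, hence $x_L<\mu<x_H$. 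The certified point must be the larger one: if $x_L$ were certified, then since non-certification would induce belief $x_H$, incentive-compatibility of certifying $x_L$ would demand $x_L-\gamma\ge x_H$, impossible. So $x_H$ is certified, $x_L$ is not, the receiver's belief following $m_\varnothing$ is $\alpha=x_L$, and incentive-compatibility of certifying $x_H$ requires $x_H-\gamma\ge x_L$, i.e.\ $x_H-x_L\ge\gamma$. The sender's equilibrium payoff is then
\[
\Pi^\ast=p_Lx_L+p_H(x_H-\gamma)-\kappa\bigl(p_Lc(x_L)+p_Hc(x_H)\bigr)=\mu-p_H\gamma-\kappa\bigl(p_Lc(x_L)+p_Hc(x_H)\bigr).
\]

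Next I would compare $\Pi^\ast$ to two deviations available to $S$ in the covert game, each of which leaves $R$'s response untouched: acquiring the null signal $\delta_\mu$ and certifying it yields $\mu-\gamma$; acquiring $\delta_\mu$ and \emph{not} certifying yields $\alpha=x_L$ (the receiver still reads $m_\varnothing$ as $x_L$). Optimality forces $\Pi^\ast\ge\mu-\gamma$ and $\Pi^\ast\ge x_L$. The first inequality, after cancellation and discarding the nonnegative term $\kappa p_Hc(x_H)$, gives (dividing by $p_L>0$) $\kappa\,c(x_L)\le\gamma$. For the second, using $\mu-x_L=p_H(x_H-x_L)$ and discarding $\kappa p_Lc(x_L)\ge0$, and dividing by $p_H>0$, gives $\kappa\,c(x_H)\le x_H-x_L-\gamma\le x_H-x_L$. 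Since $x_L<\mu<x_H$, both $c(x_L)$ and $c(x_H)$ are strictly positive, so
\[
\kappa\le\min\left\{\frac{\gamma}{c(x_L)},\ \frac{x_H-x_L}{c(x_H)}\right\}.
\]

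It then remains to bound the right-hand side uniformly over all admissible $(x_L,x_H)$ with $0\le x_L<\mu<x_H\le 1$ and $x_H-x_L\ge\gamma$. Let $\bar\kappa$ be the supremum of that minimum over this set; I claim $\bar\kappa<\infty$, which closes the argument since no equilibrium with acquisition can then survive $\kappa>\bar\kappa$. If $\bar\kappa=\infty$, there is a sequence $(x_L^n,x_H^n)$ along which both $\gamma/c(x_L^n)\to\infty$ and $(x_H^n-x_L^n)/c(x_H^n)\to\infty$. The former forces $c(x_L^n)\to0$, hence $x_L^n\to\mu$ (as $c$ is continuous and nonnegative with $c(x)=0$ only at $x=\mu$, by strict convexity); the latter, its numerator being at most $1$, forces $c(x_H^n)\to0$, hence $x_H^n\to\mu$; but then $x_H^n-x_L^n\to0$, contradicting $x_H^n-x_L^n\ge\gamma>0$. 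I expect the main obstacle to be precisely this last step: a priori the equilibrium support could collapse toward $\mu$ as $\kappa\to\infty$, driving the per-point costs $c(x_L),c(x_H)$ to zero; what rules this out is that certification incentive-compatibility keeps the support width at least $\gamma$, so at most one support point can approach $\mu$ and the other carries a cost bounded away from $0$.
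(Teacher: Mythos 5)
Your argument is correct, but it takes a genuinely different route from the paper's. The paper works with the concavification (``price function'') machinery: for any conjectured non-certification value $\alpha$, it writes down the tangent line to the relevant branch of $V_\alpha$ at $x=\mu$ and shows that for $\kappa$ large this tangent dominates the other branch on the relevant interval, so the degenerate distribution solves \ref{maxprob} and no acquisition can be sustained. You instead exploit Proposition \ref{simplify} to reduce any acquisition equilibrium to a two-point support $\{x_L,x_H\}$ with $x_H$ certified, and then extract two necessary inequalities, $\kappa c(x_L)\le\gamma$ and $\kappa c(x_H)\le x_H-x_L$, from the explicit deviations to $\delta_\mu$ (with and without certification), closing with a compactness argument that uses the disclosure IC constraint $x_H-x_L\ge\gamma$ to keep at least one support point, hence one of $c(x_L),c(x_H)$, bounded away from zero. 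Your version is more elementary and, notably, delivers a bound $\bar\kappa$ that is explicitly uniform over candidate equilibria, whereas the paper's ``for all $\kappa$ sufficiently large'' is derived pointwise in $\alpha$ and the uniformity over $\alpha\in[0,\mu)$ is left implicit; on the other hand, the paper's tangent-line construction simultaneously establishes the \emph{existence} of no-acquisition equilibria for large $\kappa$ (used in the discussion following the proposition), which your argument does not address and does not need to. One small caveat: your step ``both $c(x_L)$ and $c(x_H)$ are strictly positive'' requires $\mu$ to be the unique minimizer of $c$, which does not follow from strict convexity and $c(\mu)=0$ alone unless one also assumes $c\ge0$ (equivalently $c'(\mu)=0$); the paper does not state this, and its own proof avoids it by carrying the $\kappa c'(\mu)$ term through the tangent lines. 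The gap is harmless---replacing $c(x)$ by $c(x)-c'(\mu)(x-\mu)$ leaves $C$ unchanged on $\mathcal{M}(F)$ since every feasible $G$ has mean $\mu$---but you should state this normalization rather than attribute nonnegativity to strict convexity.
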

That is, for any fixed combination of the other primitives, there exists a threshold cost of information $\bar{\kappa}$ such that if $\kappa$ exceeds $\bar{\kappa}$ the only equilibria are those in which the sender does not obtain any evidence.

Moreover, as we illustrate in the proof, for $\kappa$ sufficiently large, there is a multiplicity: in any equilibrium the sender acquires no evidence, but there is an equilibrium in which the sender never gets the evidence (or lack thereof) certified \textit{and} an equilibrium in which the sender gets the evidence certified.

In contrast, unless $\gamma \geq 1-\mu$, if $\kappa$ is sufficiently small, the sender acquires information then discloses if the evidence is good.
\begin{proposition}\label{costofinfoprop2}
For any prior $F$, cost functional $C$, and certification cost $0 < \gamma < 1-\mu$, there exists a threshold cost of information $\ubar{\kappa} > 0$ such that if $\kappa \leq \ubar{\kappa}$, all equilibria must involve information acquisition.
\end{proposition}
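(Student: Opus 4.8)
The plan is to establish the contrapositive-flavored statement: for $\kappa$ below some threshold there is no equilibrium in which the sender acquires no information; since an equilibrium exists by Theorem~\ref{maintheorem}, every equilibrium then involves acquisition. First I would enumerate the candidate no-information equilibria. If the sender plays $G=\delta_\mu$, her only remaining choice is the probability $\rho$ with which she certifies the (unique) realized posterior $\mu$. A randomization $\rho\in(0,1)$ would require indifference, i.e.\ $\mu-\gamma$ equal to $R$'s belief after $m_\varnothing$; but if $m_\varnothing$ is on the path, Bayes forces that belief to be $\mu$, which is impossible since $\gamma>0$. So exactly two cases remain: (a) $\rho=0$, where $m_\varnothing$ is on path with belief $\mu$ and the sender's equilibrium payoff is $\mu$; and (b) $\rho=1$, where $m_\varnothing$ is off path with some belief $\beta$, the sender's payoff is $\mu-\gamma-\kappa c(\mu)=\mu-\gamma$, and optimality of certifying at $\mu$ forces $\beta\le\mu-\gamma$.

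Next I would exhibit, in each case, a profitable deviation once $\kappa$ is small. The deviation is to acquire the binary truncation $G_t:=F(t)\,\delta_{x_L}+(1-F(t))\,\delta_{x_H}$ with $x_L:=\mathbb{E}_F[\theta\mid\theta\le t]$ and $x_H:=\mathbb{E}_F[\theta\mid\theta\ge t]$ (a member of $\mathcal{M}(F)$), then certify the high realization and not the low one. Because acquisition is covert, $R$'s strategy is unchanged by the deviation: a certified message $x_H$ induces action $x_H$ and $m_\varnothing$ induces the conjectured value, $\mu$ in case (a) and $\beta$ in case (b). In case (a) I would fix any $t\in(\mu+\gamma,1)$ — nonempty since $\gamma<1-\mu$ — so that $x_H\ge t>\mu+\gamma$; the proposed disclosure rule is then strictly optimal for the deviator (at $x_H$, $x_H-\gamma>\mu$; at $x_L<\mu$, $x_L-\gamma<\mu$), and the gross gain (before the $\kappa$-cost) over the equilibrium payoff $\mu$ is $(1-F(t))(x_H-\mu-\gamma)>0$. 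In case (b) I would fix any $t\in(0,\gamma)$ so that $x_L\le t<\gamma$; then $x_H>\mu\ge\beta+\gamma$ and $x_L-\gamma<0\le\beta$ make the disclosure rule optimal, and using $F(t)x_L+(1-F(t))x_H=\mu$ the gross gain over $\mu-\gamma$ is $(1-F(t))(\gamma+\beta-x_L)\ge(1-F(t))(\gamma-x_L)>0$, which is moreover uniform over the undetermined belief $\beta$. In both cases the deviation's remaining effect is just the acquisition cost $\kappa\bigl[F(t)c(x_L)+(1-F(t))c(x_H)\bigr]$, whose absolute value is at most $\kappa\sup_{(0,1)}|c|<\infty$ since $x_L,x_H\in(0,1)$. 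Since the gross gain is a fixed positive number and the cost term is $\kappa$ times a bounded quantity, there is $\ubar\kappa>0$ (the smaller of the two case-specific thresholds) such that for every $\kappa\le\ubar\kappa$ the deviation strictly improves the sender's payoff, so no no-information profile is an equilibrium.

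The crux — and where boundedness of $c$ on $(0,1)$ is doing the work — is the idea of fixing a coarse but informative experiment and sending $\kappa\to0$, so that a $\kappa$-independent first-order gain swamps the vanishing, bounded acquisition cost. I expect the fiddlier part to be the bookkeeping: cleanly ruling out randomized certification at $\mu$, carrying the free off-path belief $\beta$ through case (b) uniformly, and checking that inside the proposed deviation the sender indeed prefers ``certify iff high'' — which is why I take $t$ near the appropriate endpoint of $[0,1]$ in each case, so that $x_H-\gamma$ comfortably beats, and $x_L-\gamma$ comfortably loses to, the non-certification value.
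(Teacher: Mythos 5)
Your proof is correct and shares the paper's high-level strategy (invoke Theorem \ref{maintheorem} for existence, then rule out every no-acquisition profile for small $\kappa$), but the way you rule them out is genuinely different. The paper works in the dual: it reuses the tangent-line functions $\tau$ and $\psi$ from the proof of Proposition \ref{costofinfoprop} and shows that at $\kappa=0$ each has a root with nonzero derivative at the relevant kink ($\mu+\gamma$, resp.\ $\gamma$), so for small $\kappa$ the tangent at $\mu$ crosses the other branch of $V_\alpha$ and the degenerate distribution fails the concavification test. You instead exhibit an explicit primal deviation --- a binary truncation $G_t$ with $t$ chosen near the appropriate endpoint --- and show its gross gain is a fixed positive number while the acquisition cost is $\kappa$ times a bounded quantity. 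Your route is more elementary and self-contained, makes the role of the boundedness of $c$ on $(0,1)$ explicit, and is actually more careful than the paper on two points: you rule out randomized certification at $\mu$ via Bayes' rule, and you carry a general off-path belief $\beta\le\mu-\gamma$ in case (b) rather than fixing $\beta=0$ as the paper does. What the paper's dual approach buys is reuse: the same $\tau$ and $\psi$ deliver both Propositions \ref{costofinfoprop} and \ref{costofinfoprop2} with opposite signs of $\kappa$.

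One small algebra slip in case (b): with probability $F(t)$ the low realization occurs and is not certified, so the gross gain over $\mu-\gamma$ is $F(t)\left(\gamma+\beta-x_L\right)$, not $(1-F(t))\left(\gamma+\beta-x_L\right)$. Since $F(t)>0$ for $t>0$ (full-support density) and $\gamma-x_L>0$, the gain is still a fixed positive number uniform in $\beta$, so the conclusion is unaffected.
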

The effect of the certification on the existence of equilibria in which the sender acquires information is more ambiguous. Roughly, as $\gamma$ decreases, it makes it harder to sustain equilibria in which the sender neither acquires evidence nor gets it certified. However, it makes it easier to sustain equilibria in which the sender does not acquire evidence but nevertheless gets it certified. In fact, when $\gamma = 0$, this is the only equilibrium that exists:
\begin{proposition}\label{nodisccostprop}
If certification is costless, $\gamma = 0$, but evidence gathering is costly, $\kappa >0$, the unique equilibrium is that in which the sender acquires no evidence but gets it certified.
\end{proposition}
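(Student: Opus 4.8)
The plan is to reduce, via Proposition~\ref{simplify}, to two cases for the equilibrium distribution $G^*$ of acquired posterior means — degenerate, or supported on exactly two points — and dispose of each. The key structural fact I will exploit is that with $\gamma=0$ the sender's gross value of acquiring posterior mean $x$, given a conjectured non-certification belief $\alpha$, is $V_\alpha(x)=\max\{x,\alpha\}-\kappa c(x)$, which has a \emph{convex} kink at $x=\alpha$: its left derivative there is $-\kappa c'(\alpha)$ and its right derivative $1-\kappa c'(\alpha)$, a jump of exactly $1$ irrespective of $c$. For existence I exhibit the candidate equilibrium directly: $G^*=\delta_\mu$; the sender certifies every realization; $R$ plays the posterior mean after any certified message and assigns posterior mean $0$ to $m_\varnothing$. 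Then certifying is strictly optimal at $\mu$ (it yields $\mu>0$), and, since her continuation is to certify every realization, the sender's acquisition problem is $\max_{G\in\mathcal{M}(F)}\int\!\big(x-\kappa c(x)\big)\,dG$; as $x-\kappa c(x)$ is strictly concave, Jensen's inequality makes $\delta_\mu$ its unique maximizer over $\mathcal{M}(F)$. So this is an equilibrium.

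\textbf{Two-point support is impossible.} Suppose $G^*$ is supported on $x_\ell<x_h$. By Lemmas~\ref{mainlemma1}--\ref{mainlemma2} and Proposition~\ref{simplify}, exactly one of the two posteriors is (deterministically) certified; since $\gamma=0$, a short incentive check forces $x_h$ to be certified and $x_\ell$ not, whence the non-certification belief is $\alpha=x_\ell<x_h$. Then $G^*$ solves $\max_{G\in\mathcal{M}(F)}\int V_{x_\ell}\,dG$, so by the price-function characterization of optimal flexible learning (\cite{martini}, \cite{linearprog}) there is a convex $p\ge V_{x_\ell}$ with $p=V_{x_\ell}$ on $\{x_\ell,x_h\}$, and $p$ affine on the convex hull of the support of the distribution of the state conditional on the induced posterior mean being $x_\ell$. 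Because $F$ has a density, that conditional distribution is non-degenerate with mean $x_\ell$, so its support — hence the interval on which $p$ must be affine — contains points strictly on either side of $x_\ell$. On the other hand $p\ge V_{x_\ell}$ with $p(x_\ell)=V_{x_\ell}(x_\ell)$ forces $p'(x_\ell^+)-p'(x_\ell^-)\ge\big(1-\kappa c'(x_\ell)\big)-\big(-\kappa c'(x_\ell)\big)=1>0$, i.e.\ $p$ has a genuine kink at $x_\ell$. This is the contradiction.

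\textbf{Degenerate support must involve certification, and uniqueness.} If $G^*=\delta_\mu$, then by Proposition~\ref{simplify} the sender either always certifies or never does. ``Always certify'' is precisely the equilibrium constructed above. Suppose instead she never certifies: then $m_\varnothing$ is on path and Bayes' rule pins the belief at $\mu$, so her acquisition problem is $\max_{G}\int V_\mu\,dG$. But the same kink argument, now at $x=\mu$, shows that no single line through $(\mu,V_\mu(\mu))$ can dominate $V_\mu$ on both sides of $\mu$; hence no supporting (price) function certifies $\delta_\mu$ as optimal, so $\delta_\mu$ is \emph{not} a best response and the sender has a profitable deviation — contradiction. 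Combining the three steps, the only equilibrium is the one in which the sender acquires no evidence but certifies.

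\textbf{Main obstacle.} The delicate point is Step~2: making rigorous that a genuinely pooled posterior $x_\ell$ has a conditional state distribution straddling $x_\ell$ (this is where the density assumption and the two-atom structure from Proposition~\ref{simplify} do the work), and invoking the optimality conditions so that ``$p$ affine on the pooling interval'' collides with ``$p$ kinked at $x_\ell$.'' Everything else — the incentive bookkeeping identifying who certifies, and the Jensen step for existence — is routine.
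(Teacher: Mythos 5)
Your proof is correct, and two of its three steps coincide with the paper's: your existence argument (strict concavity of $x-\kappa c(x)$ plus Jensen, with a pessimistic off-path belief at $m_{\varnothing}$) is the paper's final observation, and your elimination of the ``no acquisition, no certification'' equilibrium via the kink of $V_{\mu}$ at $\mu$ is exactly the paper's first step. Where you genuinely diverge is in ruling out equilibria with information acquisition. The paper does this by global payoff accounting: with $\gamma=0$ and consistent beliefs, the law of iterated expectations makes the sender's gross equilibrium payoff exactly $\mu$, so any non-degenerate $G$ nets $\mu-C(G)<\mu$ and is beaten by the free deviation ``acquire nothing and certify,'' which yields $\mu$. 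You instead argue locally via duality: the price function supporting an optimal two-point $G$ must be affine on an interval straddling the pooled, non-certified posterior $x_{\ell}$ (because $F$ has a density), yet must inherit the size-one convex kink of $V_{x_{\ell}}$ at $x_{\ell}=\alpha$ --- a contradiction. Both arguments are valid and both exploit $\gamma=0$ in essentially dual ways. The paper's route is shorter and needs no machinery beyond computing the equilibrium payoff; yours is the rigorous version of the ``local convexity at the conjectured belief'' intuition the paper states in its introduction, it localizes the obstruction to the non-certified support point rather than to aggregate payoffs, and it makes transparent why a strictly positive $\gamma$ restores acquisition (the kink moves to $\alpha+\gamma$, away from the pooled posterior). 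Your own flag about Step 2 is the only place requiring care --- non-degeneracy of the conditional state distribution at an atom of $G$ and the complementary-slackness form of the affineness condition --- and both points go through as you describe given that $F$ admits a full-support density and strong duality holds for the continuous objective $V_{x_{\ell}}$.
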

This proposition reveals a key difference between this paper's setting--with costly (flexible) evidence acquisition--and the exogenous information environment. In the latter, the receiver's welfare is strictly decreasing as the disclosure (verification) cost increases since the equilibrium level of information transmitted is strictly decreasing. Here, we see that costless certification is the worst case for the receiver. Due to the endogeneity of the sender's evidence, it is only when the certification cost is strictly positive that the sender faces sufficient incentives to acquire information.

As we explain in the introduction, the force driving the result is the local convexity in the payoff from acquiring evidence at the conjectured ``no-certification belief'' $\alpha$. Figure \ref{fig3} depicts this for various cost functions.

\begin{figure}
    \centering
    \includegraphics[scale=.15]{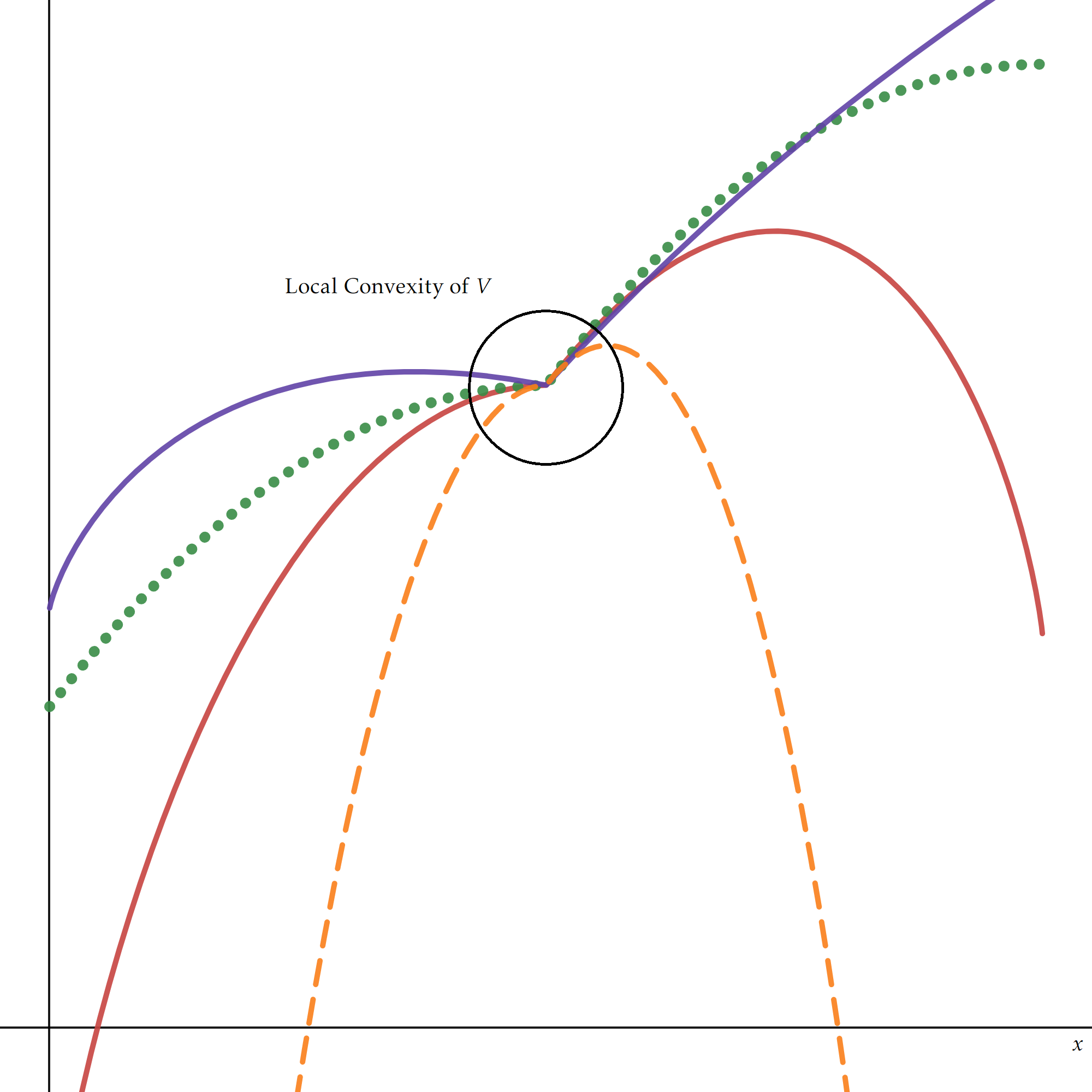}
    \caption{The Local Convexity of $V\left(x\right)$ at $\alpha$.}
    \label{fig3}
\end{figure}

Naturally, the certification cost cannot be too high: if $\gamma \geq 1-\mu$, there is no evidence acquisition (and hence no information transmission) at equilibrium. However, provided $\gamma$ is in an intermediate range, and the cost of acquiring information is not too high (refer to Proposition \ref{costofinfoprop2}), there is information transmission in any equilibrium, a strict improvement for the receiver. Summing things up,
\begin{corollary}
The receiver prefers a strictly positive certification cost to no certification cost. If the cost of acquiring evidence is sufficiently low, this preference is strict.
\end{corollary}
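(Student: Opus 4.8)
The plan is to read the receiver's equilibrium welfare off the results already in hand and compare the $\gamma=0$ and $\gamma>0$ regimes directly, using the fact that under quadratic loss the receiver's payoff is pinned down by the spread of the posterior means he faces. First I would record the two benchmarks. By Proposition \ref{nodisccostprop}, when $\gamma = 0$ and $\kappa > 0$ the unique equilibrium has the sender acquire no evidence and always certify, so on path the receiver's belief is always the prior and his equilibrium payoff is exactly $-\mathrm{Var}_F(\theta)$ — the same payoff he gets whenever there is no information transmission. Second, for the quadratic loss the receiver's optimal action at any posterior is its mean, so by the law of total variance his expected equilibrium payoff equals $-\mathrm{Var}_F(\theta) + \mathrm{Var}(G_R)$, where $G_R \in \mathcal{M}(F)$ is the on-path distribution of posterior means. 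Since $\mathrm{Var}(G_R) \geq 0$, for \emph{every} $\gamma > 0$ and \emph{every} equilibrium the receiver's payoff is at least $-\mathrm{Var}_F(\theta)$; this already establishes the weak preference for a strictly positive certification cost.

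For the strict part, I would fix any $\gamma \in (0, 1-\mu)$ and let $\ubar{\kappa} > 0$ be the threshold from Proposition \ref{costofinfoprop2}, so that for $\kappa \leq \ubar{\kappa}$ every equilibrium involves information acquisition, i.e.\ the sender chooses a non-degenerate $G$. By Proposition \ref{simplify}, such a $G$ is supported on exactly two points, each certified or not deterministically. To turn "acquisition" into "transmission'' — i.e.\ to conclude $G_R$ is non-degenerate — I would rule out the sender certifying neither point: if she did, the receiver would always see $m_\varnothing$ and, by belief consistency, act on the mean $\mu$, so the sender would earn $\mu - \kappa\int c\,dG < \mu$ (strict convexity of $c$ together with $c(\mu)=0$ forces $\int c\,dG > 0$ for non-degenerate $G$), which is strictly dominated by acquiring nothing — contradicting equilibrium. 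Hence at least one posterior is certified, $G_R$ coincides with the non-degenerate $G$, $\mathrm{Var}(G_R) > 0$, and the receiver's payoff strictly exceeds $-\mathrm{Var}_F(\theta)$ in every equilibrium. Comparing with the $\gamma = 0$ benchmark, for all $\kappa \leq \ubar{\kappa}$ the receiver strictly prefers this $\gamma$ to $\gamma = 0$, which is the claimed strict preference.

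The computations are routine — the only step needing care is the reduction from "the sender acquires information'' to "the receiver's belief distribution has positive variance,'' since a priori the sender could pay to learn something she never certifies; the strict convexity of $c$ (with $c(\mu)=0$) disposes of this cleanly, and I would flag it as the one substantive point. I would also be explicit that the Corollary is not asserting a single universal $\gamma$ works for the strict comparison: the weak preference holds for every $\gamma > 0$, while the strict preference is obtained by choosing $\gamma \in (0,1-\mu)$ and then taking $\kappa$ small enough (below the $\ubar\kappa$ associated with that $\gamma$).
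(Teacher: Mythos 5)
Your proposal is correct and follows essentially the same route as the paper, which derives the corollary from Proposition \ref{nodisccostprop} (no information transmission when $\gamma=0$) together with Proposition \ref{costofinfoprop2} (information acquisition in every equilibrium for $\gamma\in(0,1-\mu)$ and $\kappa\leq\ubar{\kappa}$) and the fact that any information weakly benefits the receiver under quadratic loss. Your additional step ruling out the sender acquiring a non-degenerate $G$ but certifying neither support point — so that acquisition genuinely implies transmission — is a detail the paper leaves implicit, and you handle it correctly via the strict convexity of $c$ with $c(\mu)=0$.
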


\subsection{Overt Evidence Acquisition}
The presence of evidence acquisition costs does not change the effects of transparency. Now, however, the equilibrium is unique. The sender acquires no information, obtaining her maximal payoff, to the receiver's chagrin:
\begin{proposition}\label{overtbad}
The unique equilibrium is for the sender to acquire no information and not disclose.
\end{proposition}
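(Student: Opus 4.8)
The plan is to compute the sender's equilibrium payoff exactly and compare it to $\mu$, the payoff of acquiring nothing and staying silent. The crux is the following observation: whatever distribution $G \in \mathcal{M}(F)$ the sender chooses and whatever continuation play follows, the receiver's action has expectation exactly $\mu$ under $G$. Indeed, because certification is verifiable, after a certified realization $x$ the receiver plays the posterior mean $x$; after non-certification with positive probability, Bayes-consistency forces her belief to be the mean of the non-certified realizations; and the (possibly off-path) action following non-certification when that event has probability zero is irrelevant to the expectation. Since posterior means average back to the prior, $\mathbb{E}_G[\mathrm{action}] = \mathbb{E}_G[x] = \mu$, and hence in any equilibrium the sender's payoff is
\[
U_S \;=\; \mu \;-\; \gamma\,\Pr(\mathrm{certification}) \;-\; \kappa \int_0^1 c\, dG .
\]

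From this identity the upper bound is immediate. Since $\gamma>0$, and since every $G \in \mathcal{M}(F)$ has mean $\mu$ while $c$ is strictly convex with $c(\mu)=0$ (so $\int c\,dG \ge c(\mu) = 0$ by Jensen, strictly unless $G = \delta_\mu$), we get $U_S \le \mu$, with equality if and only if there is no certification and $G = \delta_\mu$. Thus the only possible equilibrium outcome is for the sender to acquire no information and never certify.

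It remains to check that this outcome is in fact an equilibrium and that it is the unique one. For existence, specify beliefs as follows: after observing the uninformative choice $G = \delta_\mu$ the receiver's belief about $\theta$ is the prior for every message, so both certifying and not yield action $\mu$; after observing any $G \ne \delta_\mu$, let play proceed according to some continuation equilibrium (which exists by a standard fixed-point argument, as in the proof of Theorem~\ref{maintheorem}). Against this profile, certifying while keeping $\delta_\mu$ yields $\mu - \gamma < \mu$, and switching to any $G' \ne \delta_\mu$ yields at most $\mu - \kappa\int c\,dG' < \mu$ by the bound above applied to the continuation; so $\delta_\mu$ followed by non-certification is optimal. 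For uniqueness, note that in any equilibrium $U_S \le \mu$, yet the sender can always deviate to $\delta_\mu$ and stay silent, which secures exactly $\mu$ because, having observed the uninformative $\delta_\mu$, the receiver must play $\mu$ regardless of the message; hence her equilibrium payoff equals $\mu$, which by the characterization above forces $G = \delta_\mu$ and no certification.

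I expect the main obstacle to be making the ``action-averages-to-$\mu$'' claim precise uniformly over \emph{all} continuation equilibria following \emph{arbitrary} (including off-path) $G$ --- in particular handling atoms of $G$ at the certification threshold and the case in which non-certification has probability zero --- together with the point that genuinely distinguishes overt from covert acquisition: observing a degenerate $G$ pins the receiver's belief to $\mu$ regardless of the message, so there is no pessimistic off-path belief able to support an ``acquire nothing but always certify'' equilibrium of the kind that exists under covert acquisition.
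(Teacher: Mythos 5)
Your proof is correct and follows essentially the same route as the paper's: the sender's equilibrium payoff is at most $\mu$, she can secure exactly $\mu$ by overtly acquiring $\delta_{\mu}$ and staying silent (credible precisely because $G$ is observed), and since acquisition and certification are costly, a payoff of $\mu$ uniquely pins down no acquisition followed by $m_{\varnothing}$. Your write-up merely makes explicit the averaging identity behind the upper bound and the existence/off-path-continuation check that the paper's two-line proof leaves implicit.
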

\begin{proof}
The maximal payoff the sender can obtain in any equilibrium is $\mu$, the expected state. Thus, in any equilibrium she must get this payoff, because otherwise she could deviate by acquiring no information and not certifying. This is credible, since her evidence acquisition (at least her distribution of posteriors) is public. Because evidence acquisition is costly, a payoff of $\mu$ corresponds uniquely to acquiring no evidence followed by $m_{\varnothing}$.
\end{proof}

Of course, by Jensen's inequality, this result continues to hold as long as long as the sender's reduced-form payoff as a function of the receiver's posterior is strictly concave. An easy sufficient condition for this is that the sender's reduced-form payoff (gross of the cost of obtaining evidence) is weakly concave. Another sufficient condition is that the second derivative of the sender's reduced-form payoff (gross of cost) is bounded on $\left[0,1\right]$ and the cost parameter $\kappa$ is sufficiently large.

In light of Proposition \ref{lastbenchprop}, which states that there is no information transmission when information is free and its acquisition is overt, this result is not surprising. A cost of acquiring evidence only discourages evidence acquisition, which gives the sender more of an incentive to forgo learning.

\subsection{Uniform Prior, Quadratic Cost Example}\label{uniformsec}

\begin{figure}
    \centering
    \includegraphics[scale=.15]{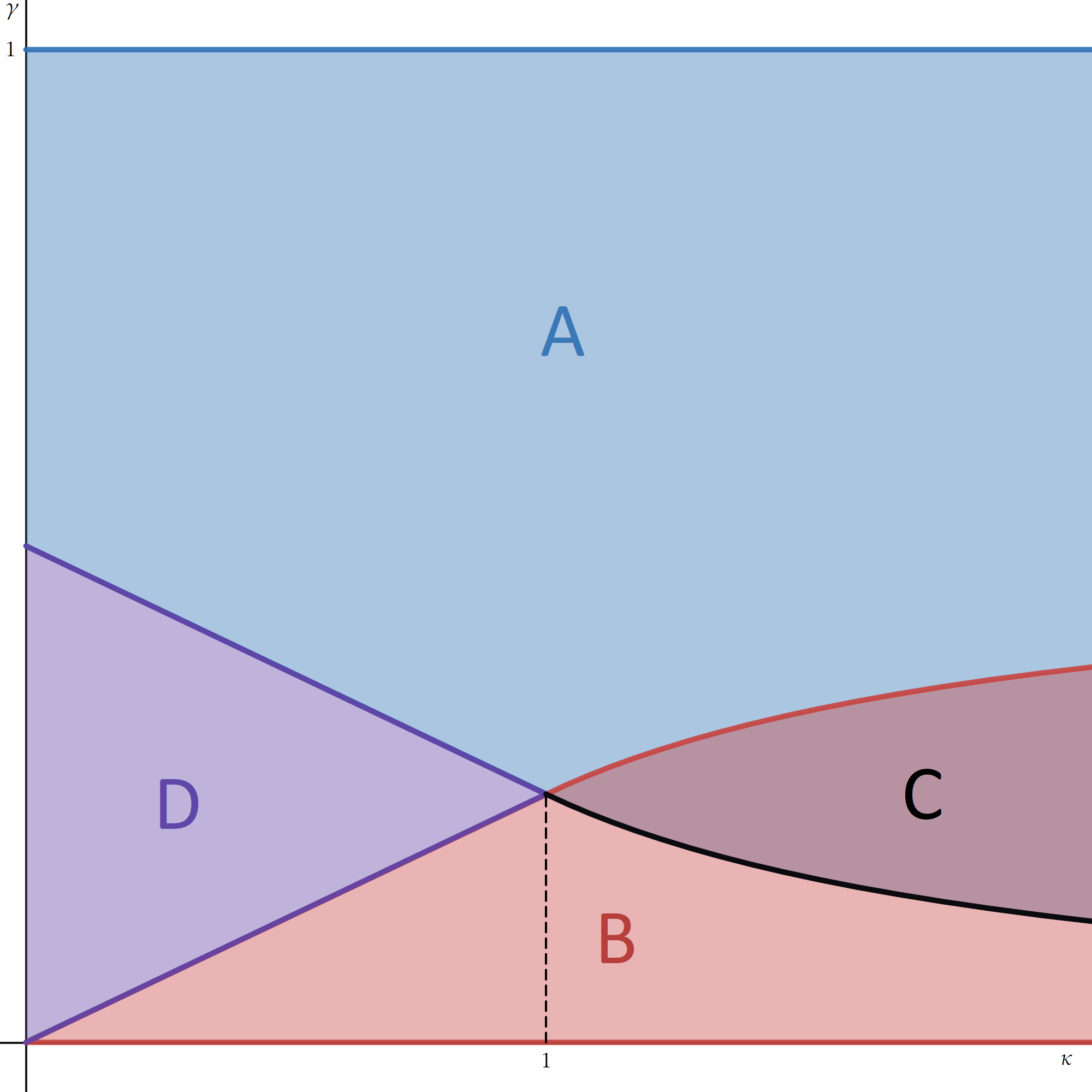}
    \caption{Division of the Parameter Space for the Uniform Prior, Quadratic Cost Example. In Regions $A$ and $B$, the sender acquires no evidence and either always does not disclose, or always discloses, respectively. In Region $C$, these equilibria coexist. In Region $D$, the sender acquires evidence at equilibrium.}
    \label{figregions}
\end{figure}
Next, we will briefly discuss an example in which the state's distribution is the standard uniform and the information cost is $c\left(x\right) = \left(x-\mu\right)^2$.
\begin{proposition}\label{upqc}
Except for a knife-edge case, there exists an equilibrium in which the sender acquires evidence if and only if $\kappa$ is sufficiently small ($\kappa \in \left[0,1\right]$) and $\gamma$ is neither too large nor too small ($\frac{\kappa}{4} \leq \gamma \leq \frac{1}{2}-\frac{\kappa}{4}$). Otherwise, the sender does not acquire evidence.
\end{proposition}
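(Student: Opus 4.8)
The plan is to specialize the machinery already in place---above all Proposition~\ref{simplify} and the fixed-point construction in the proof of Theorem~\ref{maintheorem}---to the uniform prior ($\mu=1/2$) and cost $c(x)=(x-1/2)^2$, reduce the search for an acquisition equilibrium to a one-parameter problem, and close it with a duality check. If $\gamma\geq 1/2=1-\mu$ there is no certification, hence no acquisition, in any equilibrium (same reasoning as Proposition~\ref{secondprop}), so assume $\gamma<1/2$. By Proposition~\ref{simplify}, in any acquisition equilibrium $G$ has at most two atoms---one certified, one not---and the uncertified one is the receiver's conjectured value $\alpha$ following $m_\varnothing$. Since $V_\alpha(x)=\max\{\alpha,x-\gamma\}-\kappa(x-1/2)^2$ is strictly concave on each side of its kink at $\alpha+\gamma$, I would argue (via the majorization/price-function characterization, or directly) that a two-atom best response against such a $V_\alpha$ must be a truncation of the uniform---atom $t/2$ of mass $t$, atom $(1+t)/2$ of mass $1-t$---with the only exceptions a knife-edge slope configuration; consistency then forces $\alpha=t/2$.

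\emph{First-order condition and fixed point.} Against a conjecture $\alpha$, the sender's payoff from a truncation at $a$ (low atom uncertified, high atom certified---one checks this is the relevant configuration) is
\[\Psi(a)\;=\;a\alpha+\tfrac{1-a^2}{2}-(1-a)\gamma-\tfrac{\kappa}{4}\,a(1-a),\]
so $\Psi''(a)=\tfrac{\kappa}{2}-1$; for $\kappa<2$ this is concave with interior maximizer $a^\ast(\alpha)=(\alpha+\gamma-\kappa/4)/(1-\kappa/2)$. Imposing the fixed point $a^\ast(\alpha)=2\alpha$ gives $\alpha(1-\kappa)=\gamma-\kappa/4$, i.e.\ the candidate threshold $t^\ast=2\alpha=\dfrac{4\gamma-\kappa}{2(1-\kappa)}$. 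An elementary computation shows $t^\ast\in(0,1)$ precisely when either $\kappa<1$ and $\tfrac{\kappa}{4}<\gamma<\tfrac12-\tfrac{\kappa}{4}$ (the range in which this interval is nonempty), or $\kappa>1$ and $\tfrac12-\tfrac{\kappa}{4}<\gamma<\tfrac{\kappa}{4}$; at $\kappa=1$ the fixed-point equation degenerates, forcing $\gamma=1/4$, which is the announced knife-edge. On the boundaries of these intervals $t^\ast\in\{0,1\}$, so $G$ collapses to $\delta_{1/2}$ and there is no acquisition.

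\emph{Global optimality via the price function.} It remains to decide when the truncation at $t^\ast$ actually solves $\max_{G\in\mathcal{M}(F)}\int V_\alpha\,dG$ globally; here I would use the price function of \cite{martini}: a convex $p$ with $p\geq V_\alpha$, $p=V_\alpha$ at the two atoms, and $p$ affine on $[0,t^\ast]$ and on $[t^\ast,1]$. Those affine pieces are the tangents to the relevant concave branches of $V_\alpha$ at $t^\ast/2$ and at $(1+t^\ast)/2$, so $p\geq V_\alpha$ near the atoms is automatic and the remaining majorization inequalities are routine quadratic comparisons that I expect to hold throughout the stated range; the \emph{binding} requirement is that $p$ be convex at the break $t^\ast$---that its slope not fall there. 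Carrying out that slope comparison, the $t^\ast$-dependence cancels and the condition collapses to $\gamma\leq\tfrac12-\tfrac{\kappa}{4}$. For $\kappa<1$ this is implied by $t^\ast\in(0,1)$; for $\kappa>1$ it flatly contradicts the inequality $\gamma>\tfrac12-\tfrac{\kappa}{4}$ that $t^\ast\in(0,1)$ requires, so no price function---and hence no acquisition equilibrium---exists. Together with the (immediate, since $\gamma<1/2$) sequential rationality of certifying the high atom and not the low one, this yields: an acquisition equilibrium exists iff $\kappa<1$ and $\tfrac{\kappa}{4}<\gamma<\tfrac12-\tfrac{\kappa}{4}$; outside this set (and on its boundary, and for $\gamma\geq1/2$), Proposition~\ref{simplify} leaves only the no-acquisition equilibria, whose existence is guaranteed by Theorem~\ref{maintheorem}. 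As sanity checks, at $\kappa=0$ this reproduces Proposition~\ref{firstprop} and at $\gamma=0$ it is consistent with Proposition~\ref{nodisccostprop}.

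\emph{Main obstacle.} The delicate part is the global-optimality step when $\kappa>1$: there the truncation at $t^\ast$ meets every first-order and consistency condition yet is \emph{not} a best response, the sender strictly preferring either no acquisition or a non-truncation spread. The clean way to rule this out is exactly the breakdown of convexity of the price function at $t^\ast$, so getting the tangent-slope bookkeeping right---so that the $t^\ast$ terms cancel and leave the crisp condition $\gamma\leq\tfrac12-\tfrac{\kappa}{4}$---is the crux, along with confirming that the existence of the price function simultaneously dispatches all non-truncation (and $\delta_{1/2}$) deviations at the fixed point.
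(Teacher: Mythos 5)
Your proposal is correct and follows essentially the same route as the paper: reduce to two atoms via Proposition~\ref{simplify}, note that non-truncation (bi-pooling) best responses arise only in a knife-edge, maximize the concave objective over truncation points, impose the fixed point to get $\alpha^{*}=(\gamma-\kappa/4)/(1-\kappa)$, and read off the region $\kappa\leq 1$, $\kappa/4\leq\gamma\leq 1/2-\kappa/4$, with the $\kappa>1$ case killed by failure of global optimality of the truncation. Two small corrections: the convexity-at-the-break condition for the price function actually collapses to $\kappa\leq 1$, not to $\gamma\leq\tfrac12-\tfrac{\kappa}{4}$ (the tangent slopes are $\kappa(1-t^{*})$ on the left and $1-\kappa t^{*}$ on the right, so the $t^{*}$-terms cancel to give $\kappa\leq 1$ outright) --- the conclusion is unaffected --- and the knife-edge should be pinned down rather than asserted: the bi-tangent to the two quadratic branches touches at $\alpha+\gamma\pm\frac{1}{4\kappa}$, so consistency of a bi-pooling equilibrium forces $\gamma=\frac{1}{4\kappa}$, which is exactly the computation the paper uses to justify ``except for a knife-edge case.''
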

In Appendix \ref{app11}, we specify the exact parametric regions, which are depicted in Figure \ref{figregions}. When $\kappa \geq 1$, the two types of no-evidence-acquisition equilibria coexist for intermediate values of the certification cost ($\gamma \in \left[\frac{1}{4\kappa}, \frac{1}{2}-\frac{1}{4\kappa}\right]$). 

The key qualitative features of the equilibria in this example relate directly to our earlier propositions. The sender acquires evidence and there is information transmission at equilibrium, if and only if information is relatively cheap ($\kappa \leq 1$) and certification costs are relatively small, but not too small.

\section{Selecting the Pareto-Worst Equilibrium}\label{pareto}
In this section we discuss what happens to covert equilibria when the cost of learning vanishes, i.e., $\kappa \downarrow 0$. When $\gamma \ge 1-\mu$, there is no certification and hence no information transmission in any equilibrium; consequently, whether $\kappa$ is nonzero does not matter. We henceforth focus on the more interesting case in which $\gamma < 1-\mu$.

A distribution of posterior means $G \in \mathcal{M}(F)$ is \textbf{monotone partitional} if there exists $t \in [0,1]$ such that $\supp\left(F\right) \subseteq \left\{x_L(t), x_H(t)\right\}$ with $x_L(t) = \mathbb{E}\left[x \, | \, x \in [0,t]\right]$ and $x_H(t) = \mathbb{E}\left[x \, | \, x \in [t,1]\right]$. Say that an equilibrium is \textbf{monotone partitional} if in this equilibrium, the distribution of posterior means acquired by $S$ is monotone partitional. By Proposition \ref{costofinfoprop2}, when $\kappa$ is sufficiently small, the sender must acquire information. The following result characterizes the set of equilibria in this case. %The following result characterizes the set of equilibria when $\kappa$ is sufficiently small.

\begin{lemma}\label{lemma61}
For any $\gamma \in (0,1-\mu)$, there is a threshold $\Tilde{\kappa}\left(\gamma\right) > 0$ such that for all $\kappa \leq \Tilde{\kappa}\left(\gamma\right)$ any equilibrium is monotone partitional.
\end{lemma}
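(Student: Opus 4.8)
\emph{Overview.} The plan is to use the structural results already established to reduce any equilibrium to a two-atom distribution of a known form, and then to show that any such distribution that is \emph{not} an interval partition admits a mean-preserving-spread deviation that strictly benefits the sender once $\kappa$ is small.

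\emph{Step 1 (reduction).} Fix $\gamma\in(0,1-\mu)$. By Proposition~\ref{costofinfoprop2} there is $\underline{\kappa}(\gamma)>0$ such that for $\kappa\le\underline{\kappa}(\gamma)$ the sender acquires information in every equilibrium, and Proposition~\ref{simplify} then says the equilibrium distribution is $G=p\delta_{x_L}+(1-p)\delta_{x_H}$ with two distinct atoms, $p\in(0,1)$, and no randomization. Elementary comparisons with the no-acquisition deviations (``acquire nothing, then certify $\mu$'', worth $\mu-\gamma$, and ``acquire nothing, then send $m_\varnothing$'') show that exactly one atom is certified, and Bayes consistency identifies it as the larger one: $x_H$ is certified, $x_L$ is not, the belief after $m_\varnothing$ is $\alpha=x_L$, and the incentive to certify $x_H$ forces $x_H\ge x_L+\gamma$. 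Let $(S_L,S_H)$ be the cells of the fusion of $F$ inducing $G$ and let $t$ solve $F(t)=p$; monotone partitionality of $G$ is exactly the claim that $(S_L,S_H)=([0,t],[t,1])$ up to $F$-null sets. Suppose it fails.

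\emph{Step 2 (the deviation).} Consider the monotone partition $\tilde G=p\delta_{\tilde x_L}+(1-p)\delta_{\tilde x_H}$ with $\tilde x_L=\mathbb{E}_F[\theta\mid\theta\le t]$ and $\tilde x_H=\mathbb{E}_F[\theta\mid\theta\ge t]$, which lies in $\mathcal{M}(F)$. Since $S_L$ and $[0,t]$ have equal $F$-mass but $S_L$ contains states above $t$, a rearrangement argument gives $\tilde x_L<x_L$ and, symmetrically, $\tilde x_H>x_H$, so $\tilde G$ is a strict mean-preserving spread of $G$. Because $\tilde x_L\le x_L=\alpha<\alpha+\gamma\le x_H\le\tilde x_H$, the sender's optimal certification behaviour against the conjectured $\alpha$ is undisturbed, so $\tilde G$ is a feasible covert deviation and, writing $\delta_L=x_L-\tilde x_L>0$ and $\delta_H=\tilde x_H-x_H>0$,
\[
\int V_\alpha\,d\tilde G-\int V_\alpha\,dG=(1-p)\,\delta_H-\kappa\!\left(\int c\,d\tilde G-\int c\,dG\right).
\]
The first term is a strictly positive gross gain (the certified posterior rises). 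For the second, mean preservation gives $p\,\delta_L=(1-p)\,\delta_H$, and monotonicity of $c'$ gives
\[
\int c\,d\tilde G-\int c\,dG=p\bigl(c(\tilde x_L)-c(x_L)\bigr)+(1-p)\bigl(c(\tilde x_H)-c(x_H)\bigr)\le(1-p)\,\delta_H\bigl(c'(\tilde x_H)-c'(\tilde x_L)\bigr).
\]
Hence the deviation is worth at least $(1-p)\,\delta_H\bigl(1-\kappa[c'(\tilde x_H)-c'(\tilde x_L)]\bigr)$, which is strictly positive once $\kappa[c'(\tilde x_H)-c'(\tilde x_L)]<1$, contradicting the optimality of $G$ in the sender's problem $(\triangle)$.

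\emph{Step 3 (a uniform threshold) and the obstacle.} It remains to bound $c'(\tilde x_H)-c'(\tilde x_L)$ uniformly over the relevant equilibria. If $c\in C^1([0,1])$ this is immediate: take $K=c'(1)-c'(0)$ and $\tilde{\kappa}(\gamma)=\min\{\underline{\kappa}(\gamma),1/(2K)\}$. In general one must first confine the equilibrium atoms away from $\{0,1\}$: the equilibrium payoff $\mu-\gamma(1-p)-\kappa\int c\,dG$ must exceed both $\mu-\gamma$ and the value of the best monotone partition against $\alpha$, which (with $\int c\,dG$ bounded) forces the amount by which $G$ pools across $\alpha+\gamma$ to be $O(\kappa)$, hence forces $p=F(t)$ --- and with it $\tilde x_L,\tilde x_H$ --- into a fixed compact subinterval of $(0,1)$ for $\kappa\le\kappa_1(\gamma)$; on that subinterval $c'$ is bounded by some $K(\gamma)$ and one takes $\tilde{\kappa}(\gamma)=\min\{\underline{\kappa}(\gamma),\kappa_1(\gamma),1/(2K(\gamma))\}$. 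I expect this last confinement to be the only real obstacle; everything else is a citation or the short computation of Step 2. Equivalently, the confinement is exactly what is needed to rule out, uniformly in small $\kappa$, the single ``bi-pool on $[0,1]$'' structure that the \cite{bipool} characterization leaves as the alternative to a monotone partition.
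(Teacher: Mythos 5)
Your argument is correct in its core and takes a genuinely different route from the paper. The paper works through the duality machinery of \cite{martini} and \cite{bipool}: a non-monotone-partitional solution to $(\triangle)$ must be supported by an affine bi-tangent price function, and because the upper branch $x-\gamma-\kappa c(x)$ has slope near $1$ while the tangent to the lower branch $\alpha-\kappa c(x)$ at any fixed interior point has slope $O(\kappa)$, the two tangency points are forced toward $0$ and $1$ as $\kappa\downarrow 0$; Lemma 1 of \cite{bipool} then says no MPC of a full-support $F$ can have exactly that two-point support, so the non-monotone case is infeasible outright. You instead argue by direct perturbation: keep the equilibrium weights, replace the non-monotone two-point distribution by the monotone partition at $t=F^{-1}(p)$, and observe that this strictly raises the certified posterior (gross gain $(1-p)\delta_H$) at a cost increase of order $\kappa$, by the convexity estimate $\int c\,d\tilde G-\int c\,dG\le(1-p)\delta_H\bigl(c'(\tilde x_H)-c'(\tilde x_L)\bigr)$. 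Your Steps 1 and 2 are sound (in particular, $x_L\ge\mathbb{E}_F[\theta\mid\theta\le F^{-1}(p)]$ with equality only for the monotone partition is exactly the binding Rothschild--Stiglitz constraint, so $\delta_L,\delta_H>0$ as claimed), and your approach is more elementary and arguably more informative, since it exhibits the profitable deviation rather than ruling out the support.

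The one genuine gap is the one you flag in Step 3: the threshold must be uniform over all candidate equilibria, so you need $\sup\bigl(c'(\tilde x_H)-c'(\tilde x_L)\bigr)<\infty$ over the relevant range. If the standing assumption is read as $c\in C^{2}([0,1])$, then $c'$ is bounded, $K=c'(1)-c'(0)$ works, and your proof closes with no further work. But the clause ``bounded on $(0,1)$'' suggests the authors mean to allow $c'$ to diverge at the endpoints, and in that case your confinement sketch is not yet a proof. Two cautions there. First, you cannot get the needed bound by confining $x_L,x_H$ themselves: the paper's own argument shows that the non-monotone candidates have atoms drifting to $0$ and $1$ as $\kappa\downarrow 0$. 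What you actually need—and correctly target—is that the weight $p$ stays in a compact subset of $(0,1)$, since $\tilde x_L,\tilde x_H$ depend on the putative equilibrium only through $p$. Second, the mean constraint and $x_H\ge x_L+\gamma$ alone do not bound $p$ away from $0$ or $1$ (e.g.\ $x_L=\mu-\varepsilon$, $x_H=\mu+\gamma$, $p=\gamma/(\gamma+\varepsilon)\to1$ is a feasible two-point MPC), so the bound must come from sender optimality against $\alpha=x_L$, which is essentially a comparison of the equilibrium value with the value of the best truncation—an argument you gesture at (``$O(\kappa)$ pooling'') but do not carry out. Supplying that step, or simply strengthening the hypothesis to $c'$ bounded, would complete the proof.
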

Intuitively, as $\kappa$ decreases, the sender's value function is ``less distorted'' by the cost. In particular, when $\kappa$ gets very small, it gets ``close'' to the piecewise linear and convex value function in the $\kappa=0$ benchmark. Any feasible distribution of posterior means that is not monotone partitional must involve contracting some high states and low ones to an intermediate state, and hence cannot be optimal for any sufficiently small $\kappa$.

If the equilibrium is monotone partitional and involves information acquisition, the sender solves
\[\max_{t}\left\{F\left(t\right)\left(\alpha - \kappa c\left(x_L\left(t\right)\right)\right) + \int_{t}^{1}adF\left(a\right) -  \left(1-F\left(t\right)\right)\left(\gamma + \kappa c\left(x_H\left(t\right)\right)\right) \right\} \text{,}\]
where $x_{L}\left(t\right) \coloneqq \mathbb{E}\left[\left.\theta\right|\theta \leq t\right]$ and $x_{H}\left(t\right) \coloneqq \mathbb{E}\left[\left.\theta\right|\theta > t\right]$.
%\[x_{L}\left(t\right) = \frac{\int_{0}^{t}adF\left(a\right)}{F\left(t\right)}, \quad \text{and} \quad x_{H}\left(t\right) = \frac{\int_{t}^{1}adF\left(a\right)}{1-F\left(t\right)} \text{.}\]
Taking the FOC then setting $\kappa = 0$ and replacing $\alpha$ in the FOC with $x_L$, we obtain $x_L = t - \gamma $;
%\[x_L = t - \gamma \text{,}\]
and so the limiting truncation point, $t$, is precisely that described in Equation \ref{star} in the costless evidence benchmark. Moreover, for arbitrary small $\kappa > 0$, the sender's posterior distribution is binary. Thus, the limit distribution is the Pareto-worst distribution described in Proposition \ref{firstprop}.
\begin{theorem}\label{thm62}
Take any sequence of equilibria as $\kappa \downarrow 0$. The limit equilibrium is the Pareto-Worst equilibrium when evidence is free.
\end{theorem}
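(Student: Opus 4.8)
The plan is to reduce every equilibrium, for $\kappa$ small, to a single number --- the threshold $t$ of the monotone partition --- and then show $t\to t^{*}$, where $t^{*}=x^{*}$ is the truncation point of the Pareto-worst free-learning equilibrium. Fix $\gamma\in(0,1-\mu)$. By Proposition~\ref{costofinfoprop2} and Lemma~\ref{lemma61} there is $\kappa_{0}>0$ such that for every $\kappa\le\kappa_{0}$ every equilibrium involves information acquisition and is monotone partitional; by Proposition~\ref{simplify} and the discussion preceding the theorem, such an equilibrium is described by a threshold $t=t(\kappa)\in(0,1)$: the sender acquires the binary distribution $G_{\kappa}$ with support $\{x_{L}(t),x_{H}(t)\}$ and weights $(F(t),1-F(t))$, certifies $x_{H}(t)$ and not $x_{L}(t)$, and consistency forces the non-certification belief $\alpha(\kappa)=x_{L}(t)$. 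Optimality of the acquired distribution, restricted to monotone partitions, gives $t(\kappa)\in\argmax_{t\in[0,1]}\Pi_{\kappa}\big(t;\alpha(\kappa)\big)$, where $\Pi_{\kappa}(t;\alpha):=F(t)\bigl(\alpha-\kappa c(x_{L}(t))\bigr)+\int_{t}^{1}a\,dF(a)-\bigl(1-F(t)\bigr)\bigl(\gamma+\kappa c(x_{H}(t))\bigr)$.

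The key step is a limiting argument carried out on the objective, not the first-order condition. Since $c$ is bounded on $(0,1)$, $\sup_{(t,\alpha)\in[0,1]^{2}}\big|\Pi_{\kappa}(t;\alpha)-\Pi_{0}(t;\alpha)\big|\le \kappa\sup_{(0,1)}|c|\to 0$, where $\Pi_{0}(t;\alpha)=F(t)\alpha+\int_{t}^{1}a\,dF(a)-(1-F(t))\gamma$. Take any sequence $\kappa_{n}\downarrow 0$ with equilibria given by thresholds $t_{n}$ and beliefs $\alpha_{n}=x_{L}(t_{n})$. By compactness pass to a subsequence with $t_{n}\to\hat t$, hence $\alpha_{n}\to x_{L}(\hat t)=:\hat\alpha$ by continuity of $x_{L}$. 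For every $t'\in[0,1]$, $\Pi_{\kappa_{n}}(t_{n};\alpha_{n})\ge\Pi_{\kappa_{n}}(t';\alpha_{n})$; letting $n\to\infty$, uniform convergence and joint continuity of $\Pi_{0}$ give $\Pi_{0}(\hat t;\hat\alpha)\ge\Pi_{0}(t';\hat\alpha)$, so $\hat t$ maximizes $\Pi_{0}(\cdot;\hat\alpha)$ on $[0,1]$. But $\partial_{t}\Pi_{0}(t;\hat\alpha)=f(t)(\hat\alpha+\gamma-t)$ with $f>0$, so $\Pi_{0}(\cdot;\hat\alpha)$ is strictly single-peaked at $\hat\alpha+\gamma$; since $\hat\alpha=x_{L}(\hat t)\in[0,\mu]$ and $\gamma\in(0,1-\mu)$ we have $\hat\alpha+\gamma\in(0,1)$, hence $\hat t=\hat\alpha+\gamma=x_{L}(\hat t)+\gamma$. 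This is exactly Equation~\ref{star} with $\alpha^{*}=\hat t-\gamma$, whose solution is unique by the strict log-concavity of $F$ (Lemma~\ref{firstlemma}); so $\hat t=t^{*}:=x^{*}$. As every subsequence admits a further subsequence converging to $t^{*}$, we get $t_{n}\to t^{*}$ and $\alpha_{n}\to x_{L}(t^{*})=\alpha^{*}$.

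It remains to translate this back into equilibrium objects. Along the sequence, $G_{\kappa_{n}}$ --- binary on $\{x_{L}(t_{n}),x_{H}(t_{n})\}$ with weights $(F(t_{n}),1-F(t_{n}))$ --- converges weakly to the binary distribution on $\{x_{L}(t^{*}),x_{H}(t^{*})\}=\{\mathbb{E}[\theta\mid\theta\le x^{*}],\,\mathbb{E}[\theta\mid\theta\ge x^{*}]\}$ with weights $(F(x^{*}),1-F(x^{*}))$ (using that $F$ has a density, so $\mathbb{E}[\theta\mid\theta>x^{*}]=\mathbb{E}[\theta\mid\theta\ge x^{*}]$). Paired with ``certify iff $x\ge x^{*}$'' and non-certification belief $\alpha^{*}=\mathbb{E}[\theta\mid\theta\le x^{*}]$, this limiting profile is precisely (a representative of) the Pareto-worst free-learning equilibrium of Proposition~\ref{firstprop}: the distribution of posterior means above $x^{*}$ is degenerate on $\mathbb{E}[\theta\mid\theta\ge x^{*}]$, the worst case for the receiver among free-learning equilibria. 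Since both players' payoffs are continuous in $(G,\alpha,\kappa)$, they converge to the corresponding Pareto-worst payoffs, which proves the theorem.

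The main obstacle is making ``take the first-order condition and send $\kappa\to 0$'' rigorous: the derivative of the sender's objective involves $c'$, which a strictly convex $c$ bounded on $(0,1)$ need not keep bounded near the endpoints, so a naive estimate on the FOC does not a priori exclude $t(\kappa)\to 0$ or $t(\kappa)\to 1$. Working directly with $\Pi_{\kappa}$ --- which contains only the bounded $c$ --- and using single-peakedness of the limit $\Pi_{0}(\cdot;\hat\alpha)$ to place the limit point strictly in the interior circumvents this; the remaining care is to track the joint convergence of $(t(\kappa),\alpha(\kappa))$ so that the equilibrium consistency condition $\alpha(\kappa)=x_{L}(t(\kappa))$ passes to the limit, yielding the fixed-point equation~\ref{star}.
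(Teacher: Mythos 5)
Your proof is correct, and it follows the same overall skeleton as the paper's argument: invoke Proposition \ref{costofinfoprop2} and Lemma \ref{lemma61} to reduce every equilibrium with small $\kappa$ to a monotone partition with threshold $t(\kappa)$ and consistent non-certification belief $\alpha(\kappa)=x_L(t(\kappa))$, then show the threshold converges to the solution of Equation \ref{star}. Where you genuinely depart from the paper is in the limiting step. The paper differentiates the sender's objective, sets $\kappa=0$ in the first-order condition, and substitutes $\alpha=x_L$; this is informal, and as you correctly observe, it implicitly requires controlling terms involving $c'$ evaluated at $x_L(t(\kappa))$ and $x_H(t(\kappa))$, which a strictly convex $c$ bounded on $(0,1)$ need not keep bounded near the endpoints, so the FOC route does not by itself rule out $t(\kappa)$ escaping to $0$ or $1$. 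Your argument avoids the FOC entirely: uniform convergence $\Pi_\kappa\to\Pi_0$ (using only boundedness of $c$), a compactness/subsequence extraction for $(t_n,\alpha_n)$, passage of the optimality inequalities to the limit, and then strict single-peakedness of $\Pi_0(\cdot;\hat\alpha)$ at the interior point $\hat\alpha+\gamma$ to pin down $\hat t=\hat\alpha+\gamma$, with uniqueness supplied by the log-concavity argument of Lemma \ref{firstlemma}. This buys rigor exactly where the paper's sketch is thinnest, at the cost of a slightly longer argument; the paper's FOC computation, in exchange, makes the economics of the limiting condition ($x_L=t-\gamma$) immediately visible. One small point worth making explicit in your write-up: $\Pi_\kappa(t';\alpha)$ presumes the disclosure rule ``certify $x_H(t')$, not $x_L(t')$,'' which may be suboptimal at a deviation $t'$; the inequality you need still holds because $\Pi_{\kappa_n}(t';\alpha_n)$ is then only a lower bound on the deviation payoff while $\Pi_{\kappa_n}(t_n;\alpha_n)$ equals the equilibrium payoff, but this deserves a sentence.
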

Theorem \ref{thm62} reveals that there is a significant difference between ``free'' evidence and ``cheap evidence'' when information acquisition is covert. Even a minuscule (but strictly positive) cost to acquiring evidence leads the sender to acquire extremely coarse information. The finer details about the state are completely unimportant to the sender, who does not learn them no matter how cheap it is to do so. On the other hand, the receiver, who wants to ``match the state'' is hurt by any mismatch. This benefit is not internalized by the sender, so even though when information is vanishingly cheap it would be efficient to learn the state precisely, the sender does not do so.

In contrast, there is no real difference between the two concepts when evidence acquisition is overt: both when information is free and when it is cheap, the sender does not acquire information and does not disclose either. %This result persists even if the sender's gross utility as a function of the certified posterior is concave or convex. There; however

\section{Discussion}\label{discus}

The three main contributions of this paper are that 
\begin{enumerate}[noitemsep,topsep=0pt]
    \item When evidence acquisition is costly, transparency in the acquisition process is negative as it grants the sender a form of commitment to acquire less information.
    \item The receiver always weakly (and sometimes strictly) prefers a positive certification cost, in stark contrast to models with exogenous evidence.
    \item Cheap information is different from free information: allowing evidence acquisition costs to vanish selects for the Pareto-worst free-information equilibrium.
\end{enumerate}
It is natural to wonder whether these effects occur in the other ``warhorse'' disclosure model--namely, that of \cite{dye} and \cite{jungkwon}. In their paradigm, the disclosure cost of \cite{verr1} is replaced with exogenous failure: with some probability $\gamma \in \left(0,1\right)$ the sender obtains no evidence. They do: we argue in the \href{https://whitmeyerhome.files.wordpress.com/2022/08/costly_evidence_and_discretionary_disclosure_supplement_v2.pdf}{Supplementary Appendix} that these findings persist in \citeauthor{dye}'s setting. The commitment engendered by transparency, the optimality of coarse information when evidence is costly, and detrimental effects of unraveling on evidence acquisition continue to hold, and, therefore, continue to drive the results.

Beyond this, other interesting questions and extensions remain. For one, we assume that the sender cannot falsify or hide evidence. In many applications--our leading example of an asset manager is one--this is reasonable: deception, omission, and falsification are prohibited by law. However, there are other applications in which hiding evidence may be realistic. It is possible that this may generate instances in which overt evidence acquisition is superior to covert. It might also be interesting to look at behavior from the certifier's point of view. How would the certifier choose her fee--and possibly other instruments, coarse classifications, say--to maximize profits?

\bibliography{sample.bib}

\appendix

\section{Equilibrium Definition} \label{eqdef}
Let $\Pi$ denote the set of signals, i.e., the set of measurable maps $\pi\colon \Theta \to \Delta\left(X\right)$. For every $\pi \in \Pi$, let $G_{\pi}$ denote the distribution of posterior means induced by $\pi$. $S$'s strategy set, $\Pi \times D$, is the set of pairs $(\pi, d)$, where $\pi \in \Pi$ and $d\colon X \times \Pi \to \Delta \left(M\right)$ is such that $d\left(\left.m \right| x\right) > 0$ implies $m \in M_x$. $R$'s strategy set is the set of functions $\alpha\colon M \times \Pi \to A$. Finally, because only the posterior mean of the state is strategically relevant to $R$, $R$'s beliefs about the state can be summarized by a function $p\colon M \times \Pi \to [0,1]$ whose codomain is the set of posterior means of the state. 

We say that $(\pi, d, \alpha, p)$ is a \textbf{covert equilibrium} if
\begin{enumerate}[label={(\arabic*)},noitemsep,topsep=0pt]
    \item for all $\hat{\pi} \in \Pi$, $p\left(x, \hat{\pi}\right) = \mathbb{E}\left[\theta \left| x \right.\right]$ for all $x \in [0,1]$, and $p\left(m_{\varnothing}, \hat{\pi}\right) = \mathbb{E}\left[\theta \left| m_{\varnothing}, \pi \right.\right]$;
    \item for all $m \in M$ and $\hat{\pi} \in \Pi$, $\alpha\left(m, \hat{\pi}\right) = p\left(m, \hat{\pi}\right)$;
    \item $\left(\pi,d\right) \in \argmax_{\left(\hat{\pi}, \hat{d}\right)} \mathbb{E}\left[\hat{d}\left(x | x \right) \left(\alpha\left(x, \hat{\pi}\right) - \gamma\right) + \hat{d}\left(m_{\varnothing} | x \right) \left(\alpha\left(m_{\varnothing}, \hat{\pi}\right)\right)\right] - C(G_{\hat{\pi}})$.
\end{enumerate}
$(\pi, d, \alpha, p)$ is an \textbf{overt equilibrium} if (2) and (3) hold, and (1) is replaced by 
\begin{enumerate}[label={(\arabic*')},noitemsep,topsep=0pt]
    \item for all $\hat{\pi} \in \Pi$, $p(x, \hat{\pi}) = \mathbb{E}\left[\theta \left| x \right.\right]$ for all $x \in [0,1]$, and $p(m_{\varnothing}, \hat{\pi}) = \mathbb{E}\left[\theta \left| m_{\varnothing}, \hat{\pi} \right.\right]$.
\end{enumerate}

\iffalse
\textcolor{OrangeRed}{(KZ) Two comments in order. First, I realize that $\hat{d}$ looks kinda awkward, feel free to change it, say to $\delta$, if you prefer. Second, Condition (3) can also be written as
\[(\pi,d) \in \argmax_{\left(\hat{\pi}, \hat{d}\right)} \mathbb{E}_{G_{\hat{\pi}}}\left[\hat{d}\left(x | x \right) \left(\alpha\left(x, \hat{\pi}\right) - \gamma\right) + \hat{d}\left(m_{\varnothing} | x \right) \left(\alpha\left(m_{\varnothing}, \hat{\pi}\right)\right) - c(x)\right],\]
or even drop the subscript $G_{\hat{\pi}}$, or doing integral instead of the expectation operator.}\fi

\section{Omitted Proofs}\label{omitproof}

\subsection{Lemma \ref{firstlemma} Proof}
\begin{proof}
The piecewise linearity of $V$ implies that any equilibrium must be above the threshold form: given any conjectured expected value of the state $\alpha^{*}$ the sender will not pool states above $\alpha^{*} + \gamma$ with states below in the evidence acquisition stage. Given this, and the linearity of the payoff function above and below $\alpha^{*} + \gamma$, any equilibrium is characterized by a posterior threshold $x^{*}$ that solves 
\[\xi\left(x^{*}\right) \coloneqq x^{*} - \frac{\int_{0}^{x^{*}}xdF\left(x\right)}{F\left(x^*\right)} = \gamma\text{.}\]
Setting $x^* = \alpha^*+\gamma$ produces Equation \ref{star}. By Theorem 5 in \cite{bag}, strict log-concavity of $F$ means that $\xi\left(x^{*}\right)$ is strictly increasing and therefore $x^{*}$ is unique. 
\end{proof}

\subsection{Proposition \ref{secondprop} Proof}
\begin{proof}
Again, the log-concavity of $F$ implies that $\xi\left(x^{*}\right) - \gamma \leq 0$ for all $x^{*} \in \left[0,1\right]$. Given this, there is no equilibrium in which the receiver's belief upon seeing no certification $\alpha \neq \mu$. Evidently, if $\alpha = \mu$, the sender prefers to not disclose any posterior.
\end{proof}
\subsection{Proposition \ref{lastbenchprop} Proof}
\begin{proof}
The sender's maximal payoff is $\mu$. Consequently, in any equilibrium she must be able to obtain this payoff, since she could always deviate by overtly acquiring the degenerate distribution on the prior $\delta_{\mu}$ then not disclosing. On the other hand, she cannot overtly acquire any posteriors with support strictly larger than $\mu + \gamma$ since otherwise she would want to deviate at an interim point and disclose those high beliefs.
\end{proof}
\subsection{Lemma \ref{mainlemma1} Proof}
\begin{proof}
Suppose for the sake of contradiction that there are at least two posteriors $x_1$ and $x_2$ in support of $G$ that are disclosed with positive probability. At $x_1$ and $x_2$ it must, therefore, be weakly optimal for $S$ to get each certified. However, due to the strict convexity of $c\left(\cdot\right)$, $S$ can deviate by acquiring (then disclosing) their average instead. Here we are implicitly assuming that $x_1$ and $x_2$ each realize with positive probability under $G$, but it is easy to see that the analogous deviation works when $G$ is atomless on of the set of posteriors that disclosed with positive probability.
\end{proof}

\subsection{Lemma \ref{mainlemma2} Proof}
\begin{proof}
From Lemma \ref{mainlemma1}, there is at most one posterior in support of $G$ that is certified with positive probability. Moreover, it is obvious that this belief must be certified with probability $1$. If there are multiple posteriors that are not certified, the receiver clearly prefers to collapse these posteriors to their barycenter, thereby saving on costs.
\end{proof}

\subsection{Theorem \ref{maintheorem} Proof}
We first establish two auxiliary results.

\begin{lemma}\label{uniqids}
    For every $\alpha \in [0, \mu]$, the solution to problem \ref{maxprob} is unique. 
\end{lemma}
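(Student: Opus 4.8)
The plan is to combine the structural restriction behind Lemmas \ref{mainlemma1}--\ref{mainlemma2} with the linearity of the objective in $G$. I would first record the observation that $V_{\alpha}(x) = \max\{\alpha,\,x-\gamma\} - \kappa c(x)$, so that \ref{maxprob} already builds in optimal certification: a posterior $x$ is certified exactly when $x \ge \alpha+\gamma$. The first step is to show that \emph{every} solution $G$ of \ref{maxprob} is supported on at most two points, with at most one in $[0,\alpha+\gamma)$ and at most one in $[\alpha+\gamma,1]$. This is the single-agent analogue of Lemmas \ref{mainlemma1} and \ref{mainlemma2}: if $G$ placed mass on two distinct posteriors on the same side of $\alpha+\gamma$, replacing that mass with a single atom at its (mass-weighted) barycenter would keep the distribution a mean-preserving contraction of $F$ (being an MPC of $G$), leave the gross payoff's expectation unchanged (on $[\alpha+\gamma,1]$ because $x\mapsto x-\gamma$ is affine; on $[0,\alpha+\gamma)$ because the gross payoff is the constant $\alpha$), and strictly reduce $\kappa\int c\,dG$ by strict convexity of $c$ --- contradicting optimality. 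As in those proofs, the atomless case is handled by instead collapsing the mass on a short subinterval. A byproduct is that a degenerate solution must be $\delta_{\mu}$.

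Next I would invoke convexity: $\mathcal{M}(F)$ is convex and $G \mapsto \int V_{\alpha}\,dG$ is linear, so if $G_1$ and $G_2$ both solve \ref{maxprob} then so does $\bar G := \tfrac12(G_1+G_2)$. Applying the first step to $\bar G$ forces $\bar G$ to have at most one support point on each side of $\alpha+\gamma$; since $\supp\bar G = \supp G_1\cup\supp G_2$ and each of $G_1,G_2$ already has at most one point on each side, this means $\supp G_1 = \supp G_2$. Finally, two mean-preserving contractions of $F$ with a common support of at most two points and common mean $\mu$ must coincide (the two-point weights are pinned down by $\mu$; a one-point one is $\delta_{\mu}$), so $G_1 = G_2$. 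I would close by noting that a maximizer exists because $\mathcal{M}(F)$ is weak\(^{*}\)-compact and, as $c$ extends continuously to $[0,1]$ (being convex and bounded on $(0,1)$), the objective is weak\(^{*}\)-continuous.

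The main obstacle is Step 1 --- specifically, making the ``collapse to the barycenter'' deviation rigorous when $G$ is not finitely supported, checking that the barycenter lands on the intended side of $\alpha+\gamma$ (it does, since the barycenter of points in an interval stays in that interval), and confirming that the boundary point $x=\alpha+\gamma$ causes no trouble (it does not, since $V_{\alpha}$ is continuous there). These are exactly the points already dispatched in the proofs of Lemmas \ref{mainlemma1} and \ref{mainlemma2}, so the argument can lean on them; they are the only place any care is needed, after which Steps 2 and 3 are immediate.
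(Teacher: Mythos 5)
Your proof is correct, but it takes a genuinely different route from the paper's. Both arguments share Step 1: every solution to \ref{maxprob} has at most one support point on each side of $\alpha+\gamma$, by the barycenter-collapse deviations underlying Lemmas \ref{mainlemma1} and \ref{mainlemma2}. From there the paper proceeds via the duality machinery of \cite{martini} and \cite{bipool}: it attaches a supporting ``price function'' to each candidate solution, classifies it as a bi-tangent or a kinked piecewise-affine function, and rules out two distinct optimal supports case by case, invoking Lemma 1 of \cite{bipool} to exclude nested two-point supports. You instead exploit the convexity of the solution set: since $\mathcal{M}(F)$ is convex and $G \mapsto \int V_{\alpha}\,dG$ is linear, the midpoint of two maximizers is a maximizer, so Step 1 applies to it and forces the two supports to coincide (any discrepancy would put two points on the same side of $\alpha+\gamma$, including in the mixed case where one solution is degenerate and the other binary, since a genuine two-point MPC of $F$ must straddle $\mu$); the weights are then pinned down by the common mean. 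Your argument is shorter, self-contained, and handles the degenerate-versus-binary comparison explicitly, which the paper's proof dispatches only with ``the statement is immediate.'' What the paper's heavier route buys is the price function itself, which is reused elsewhere (in the continuity and fixed-point steps of Theorem \ref{maintheorem} and in the equilibrium taxonomy), so the case analysis there is not wasted work. The only points in your write-up that deserve a line of care in a final version are the ones you already flag: the atomless collapse and the observation that a two-point distribution with mean $\mu$ and a support point equal to $\mu$ is in fact degenerate.
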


\begin{proof}
    For each $\alpha \in [0, \mu]$, the proof of Lemma \ref{mainlemma1} and \ref{mainlemma2} together indicate that any solution to problem \ref{maxprob}, denoted by $G_{\alpha}$, must satisfy $\left|\supp{\left(G_{\alpha}\right)}\right| \le 2$. Fix some $\alpha \in [0, \mu]$. The statement is immediate when a solution to problem \ref{maxprob} is the degenerate distribution at $\mu$. Suppose $\left|\supp{\left(G_{\alpha}\right)}\right| = 2$. Because $V_\alpha(\cdot)$ is continuous for all $\alpha$, a ``price function'' $p$ \`{a} la \cite{martini} exists. There are two cases: one is that $p$ is affine on $[0,1]$ and tangent to $V_\alpha$ at two points, in which case we call $p$ a \emph{bi-tangent} in the parlance of \cite{bipool}; the other is that it is piecewise affine, and the two pieces have different slopes, in which case we say that $p$ is \emph{kinked}. 

    Suppose both $G_{\alpha}^{1}$ and $G_{\alpha}^{2}$, where $\supp{\left(G_{\alpha}^{1}\right)} \ne \supp{\left(G_{\alpha}^{2}\right)}$, solve the problem. Denote the corresponding price functions by $p_1$ and $p_2$, respectively. It cannot be that both $p_1$ and $p_2$ are bi-tangents, since a bi-tangent is unique whenever it exists. Now suppose $p_1$ is a bi-tangent, and $p_2$ is kinked. By construction, for each $i = 1,2$, $\supp{\left(F^i_\alpha\right)} = \left\{x : V_\alpha(x) = p(x) \right\}$. Then because $p_2$ is convex, it must be that \[\min\left\{\supp{\left(G_{\alpha}^{1}\right)}\right\} < \min\left\{\supp{\left(G_{\alpha}^{2}\right)}\right\} < \max\left\{\supp{\left(G_{\alpha}^{2}\right)}\right\} < \max\left\{\supp{\left(G_{\alpha}^{1}\right)}\right\}.\] But then by Lemma 1 in \cite{bipool}, there does not exist a MPC of $F$ whose support is $\supp{\left(G_{\alpha}^{1}\right)}$; a contradiction. 

    Finally, suppose both $p_1$ and $p_2$ are kinked. WLOG, assume that $\min\left\{\supp{\left(G_{\alpha}^{1}\right)}\right\} < \min\left\{\supp{\left(G_{\alpha}^{2}\right)}\right\}$. Consequently, the ``kink'' in $p_1$ must be strictly higher than the ``kink'' in $p_2$, and hence $\max\left\{\supp{\left(G_{\alpha}^{1}\right)}\right\} < \max\left\{\supp{\left(G_{\alpha}^{2}\right)}\right\}$. Then because $V_\alpha$ is piecewise strictly concave, $p'_1(x) < p'_2(x)$ for all $x \in [0,1]$,\footnote{If $p_i$, $i=1,2$, is not differentiable at $x$, we define $p'_i(x)$ to be its right-derivative.} and consequently $p_1(x) > p_2 (x)$ for $x \ge \min\left\{\supp{\left(G_{\alpha}^{2}\right)}\right\}$. But this means that
    \[p_2\left(\max\left\{\supp{\left(G_{\alpha}^{1}\right)}\right\}\right) < p_1\left(\max\left\{\supp{\left(G_{\alpha}^{1}\right)}\right\}\right) = V_\alpha\left(\max\left\{\supp{\left(G_{\alpha}^{1}\right)}\right\}\right),\]
    a contradiction. 
\end{proof}

%In particular, if $\left|\supp{\left(G_{0}\right)}\right| = \left|\supp{\left(G_{\mu}\right)}\right| =2$, we must have $\left|\supp{\left(G_{\alpha}\right)}\right| =2$ for all $\alpha \in [0, \mu]$. 

To state the next result, let $L(\alpha) \coloneqq \min\left\{\supp{\left(G_{\alpha}\right)}\right\}$.

\begin{lemma} \label{ctnslow}
    %Suppose $\left|\supp{\left(G_{0}\right)}\right| = \left|\supp{\left(G_{\mu}\right)}\right| =2$. 
    Suppose $\left|\supp{\left(G_{\alpha}\right)}\right| =2$ for all $\alpha \in [0, \mu]$. Then the map $\alpha \mapsto L(\alpha)$ is continuous.    
\end{lemma}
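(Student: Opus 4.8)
The plan is to exploit the uniqueness established in Lemma~\ref{uniqids} together with a compactness/continuity argument of the standard "maximum theorem" flavor, then translate continuity of the optimal distribution into continuity of its lower support point. First I would record that the objective $\int V_\alpha\,dG$ is jointly continuous in $(\alpha,G)$: the map $\alpha\mapsto V_\alpha(x)$ is continuous uniformly in $x$ (the only $\alpha$-dependence is through the constant value $\alpha$ on the lower branch and through the location $\alpha+\gamma$ of the kink, and $c$ is bounded on $(0,1)$ and continuous), and $\mathcal{M}(F)$ is compact in the weak$^*$ topology. Hence by Berge's maximum theorem the solution correspondence $\alpha\rightrightarrows \arg\max_{G\in\mathcal M(F)}\int V_\alpha\,dG$ is upper hemicontinuous; since Lemma~\ref{uniqids} says it is single-valued, $\alpha\mapsto G_\alpha$ is continuous from $[0,\mu]$ into $(\mathcal M(F),\,\text{weak}^*)$.

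Next I would pass from weak$^*$ continuity of $\alpha\mapsto G_\alpha$ to continuity of $\alpha\mapsto L(\alpha)=\min\supp(G_\alpha)$. Here the hypothesis $|\supp(G_\alpha)|=2$ for every $\alpha$ is what makes this clean: write $\supp(G_\alpha)=\{L(\alpha),H(\alpha)\}$ with weights $\lambda(\alpha)$ on $L(\alpha)$ and $1-\lambda(\alpha)$ on $H(\alpha)$, determined by the mean-preserving constraint $\lambda(\alpha)L(\alpha)+(1-\lambda(\alpha))H(\alpha)=\mu$. Take any sequence $\alpha_n\to\alpha$. By compactness of $[0,1]^2\times[0,1]$ pass to a subsequence along which $(L(\alpha_n),H(\alpha_n),\lambda(\alpha_n))\to(\ell,h,\lambda)$; then $G_{\alpha_n}\rightharpoonup \lambda\delta_\ell+(1-\lambda)\delta_h$, and by the weak$^*$ continuity just established this limit equals $G_\alpha=\lambda(\alpha)\delta_{L(\alpha)}+(1-\lambda(\alpha))\delta_{H(\alpha)}$. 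Since $G_\alpha$ is a genuine two-point distribution (its two atoms are distinct and carry positive weight), the representation of a two-point measure by its atoms and weights is unique up to relabeling, and the labeling is fixed by $\ell\le h$; therefore $\ell=L(\alpha)$. As every subsequence has a further subsequence along which $L(\alpha_n)\to L(\alpha)$, the full sequence converges, giving continuity of $L$.

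The main obstacle is the degenerate-limit subtlety in the previous paragraph: a priori a weak$^*$ limit of two-point distributions could collapse to a point mass, i.e. one could have $\ell=h$, in which case "the lower atom" is ill-defined and the labeling argument breaks. This is precisely ruled out by the standing hypothesis $|\supp(G_\alpha)|=2$ \emph{at the limit point $\alpha$ itself}, so $G_\alpha$ is non-degenerate and the matching of atoms is forced; I would make sure to invoke this hypothesis explicitly at that step. A secondary point worth stating carefully is the uniform-in-$x$ continuity of $\alpha\mapsto V_\alpha$ needed to get joint continuity of the objective — one checks $\sup_x|V_{\alpha'}(x)-V_\alpha(x)|\to 0$ as $\alpha'\to\alpha$ directly from the two-branch formula, the only care being near the moving kink $x=\alpha+\gamma$, where the one-sided values differ by $|H(\alpha)-\alpha|$-type quantities that vanish — actually by $|\,(x-\gamma)-\alpha\,|\le|\alpha'-\alpha|$ on the relevant shrinking interval — so the sup-norm gap is $O(|\alpha'-\alpha|)$. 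Everything else is routine.
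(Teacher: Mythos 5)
Your proposal is correct and follows essentially the same route as the paper: joint continuity of the objective plus compactness of $\mathcal{M}(F)$, Berge's maximum theorem, single-valuedness from Lemma~\ref{uniqids} to upgrade upper hemicontinuity to continuity of $\alpha\mapsto G_\alpha$, and then the two-point-support hypothesis to read off continuity of $L$. The paper states the last step in one line, whereas you spell out the subsequence argument and the key point that the hypothesis $\left|\supp\left(G_\alpha\right)\right|=2$ at the limit $\alpha$ rules out a degenerate weak limit; this is exactly the detail that makes the final step go through.
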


\begin{proof}
    The objective function of problem \ref{maxprob} is continuous in both $\alpha$ and the choice variable $G$. The constraint correspondence is constant in $\alpha$ and hence continuous, and $\mathcal{M}$ is compact (endowed with the $L^1$ norm). By Berge's Maximum Theorem\footnote{For example, Theorem 17.31 in \cite{aliprantis}.}, the solution correspondence is upper hemicontinuous in $\alpha$; then by Lemma \ref{uniqids}, the unique solution $G_{\alpha}$ is continuous in $\alpha$. Then because $\left|\supp{\left(G_{\alpha}\right)}\right| =2$ for all $\alpha \in [0, \mu]$, $L(\alpha)$ must also be continuous. 
\end{proof}

By the duality approach of \cite{martini}, it is easy to see that (i) if $\mu \geq \gamma$ and $\left|\supp{\left(G_{0}\right)}\right| = \left|\supp{\left(G_{\mu}\right)}\right| =2$, $\left|\supp{\left(G_{\alpha}\right)}\right| =2$ for all $\alpha \in [0, \mu]$; and (ii) if $\mu < \gamma$ and $\left|\supp{\left(G_{\mu}\right)}\right| =2$, $\left|\supp{\left(G_{\alpha}\right)}\right| =2$ for all $\alpha \in [0, \mu]$. Next we prove the following proposition, which immediately implies the theorem.
\begin{proposition}
There exists at least one fixed point $\alpha \in \left[0,\mu\right]$ such that there exists a solution to the maximization problem \ref{maxprob}, $G_{\alpha}$, that is either binary, with a lower point of support $\alpha$ or degenerate on $\mu$. This fixed point is not necessarily unique.
\end{proposition}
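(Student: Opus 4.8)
The plan is to treat the receiver's conjectured non-certification belief $\alpha$ as a parameter and locate a value that is self-confirming given the sender's optimal response $G_{\alpha}$ to problem \ref{maxprob}. The preliminary results already do most of the work: Lemmas \ref{mainlemma1}--\ref{mainlemma2} and Proposition \ref{simplify} reduce every $G_{\alpha}$ to at most two support points; Lemma \ref{uniqids} makes it unique; and, whenever every $G_{\alpha}$ with $\alpha \in [0,\mu]$ is genuinely two-point, Lemma \ref{ctnslow} makes $L(\alpha)=\min\supp(G_{\alpha})$ continuous. The first step is to pin down the interim disclosure decision in the two-point case: if $G_{\alpha}$ has support $\{x_L,x_H\}$ with $x_L<x_H$, then by Lemmas \ref{mainlemma1} and \ref{mainlemma2} exactly one of the two posteriors is certified, the certified one must lie weakly above the kink $\alpha+\gamma$ (otherwise certification is not interim optimal there) and the non-certified one weakly below it, so $L(\alpha)=x_L$ is the posterior followed by $m_{\varnothing}$. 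Hence the on-path belief after $m_{\varnothing}$ is exactly $L(\alpha)$, and being self-confirming means precisely $\alpha=L(\alpha)$.

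The second step is a case split driven by the duality observation quoted just above the statement. If $\mu\ge\gamma$ and $|\supp(G_0)|=|\supp(G_{\mu})|=2$, or if $\mu<\gamma$ and $|\supp(G_{\mu})|=2$, then $|\supp(G_{\alpha})|=2$ for every $\alpha\in[0,\mu]$, so $L$ is continuous on $[0,\mu]$ by Lemma \ref{ctnslow}. Set $\phi(\alpha)\coloneqq L(\alpha)-\alpha$. Then $\phi(0)=L(0)\ge 0$, while $\phi(\mu)=L(\mu)-\mu<0$ because the lower support point of any two-point element of $\mathcal{M}(F)$ is strictly below the mean $\mu$. If $\phi(0)=0$ take $\alpha^{*}=0$; otherwise the intermediate value theorem gives $\alpha^{*}\in(0,\mu)$ with $\phi(\alpha^{*})=0$, i.e.\ $L(\alpha^{*})=\alpha^{*}$. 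By the first step $G_{\alpha^{*}}$ is then two-point with lower support point $\alpha^{*}$, which is the first alternative in the statement (it corresponds to the equilibrium in which the receiver conjectures $\alpha^{*}$, the sender acquires $G_{\alpha^{*}}$, certifies the upper posterior, and withholds the lower one $\alpha^{*}$).

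The third step handles the complementary configuration, where the hypotheses above fail and hence one of the relevant endpoint solutions is degenerate; since a degenerate element of $\mathcal{M}(F)$ must be $\delta_{\mu}$, this means $G_0=\delta_{\mu}$ or $G_{\mu}=\delta_{\mu}$. If $\mu<\gamma$ the failure forces $G_{\mu}=\delta_{\mu}$; take $\alpha^{*}=\mu$, since acquiring $\delta_{\mu}$ and never certifying is interim optimal (as $\mu<\mu+\gamma$) and makes the non-certification belief $\mu=\alpha^{*}$. If $\mu\ge\gamma$ and $G_{\mu}=\delta_{\mu}$, the same $\alpha^{*}=\mu$ works for the same reason. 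If instead $\mu\ge\gamma$ and $G_0=\delta_{\mu}$, take $\alpha^{*}=0$: acquiring $\delta_{\mu}$ and certifying the posterior $\mu$ yields $\mu-\gamma\ge 0$, which dominates withholding (worth $0$), so $m_{\varnothing}$ is off path and may carry the pessimistic belief $0=\alpha^{*}$. In each sub-case $G_{\alpha^{*}}=\delta_{\mu}$, the second alternative in the statement; non-uniqueness is already visible here, since for $\mu\ge\gamma$ and large $\kappa$ both $\alpha^{*}=0$ (always certify) and $\alpha^{*}=\mu$ (never certify) are fixed points.

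I expect the main obstacle to be bookkeeping rather than depth. One must be careful that the duality observation is invoked only to guarantee that $|\supp(G_{\alpha})|=2$ on the \emph{entire} interval $[0,\mu]$ — this is what Lemma \ref{ctnslow} needs, because $\min\supp$ is not continuous under weak convergence once mass is allowed to collapse onto a single point — and one must check the interim-disclosure and belief-consistency details separately in the two-point case and in each degenerate sub-case, since which endpoint ($0$ versus $\mu$) furnishes the fixed point flips with the sign of $\mu-\gamma$.
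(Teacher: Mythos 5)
Your proposal is correct and follows essentially the same route as the paper: dispose of the degenerate endpoint cases directly, then use the duality observation to guarantee $\left|\supp\left(G_{\alpha}\right)\right|=2$ on all of $[0,\mu]$ so that Lemma \ref{ctnslow} applies, and find a fixed point of the continuous map $L$ on $[0,\mu]$ (your intermediate value theorem argument on $L(\alpha)-\alpha$ is just the one-dimensional form of the Brouwer argument the paper invokes). The extra detail you supply on interim certification and on why the degenerate endpoints are self-confirming is consistent with, and slightly more explicit than, the paper's treatment.
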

\begin{proof}
If $\gamma \geq 1-\mu$ the result is trivial, so set $\gamma < 1-\mu$. %For any fixed $\alpha \in \left[0,\mu\right]$, the optimizer, $G_{\alpha}$, is unique. 
First let $\mu \geq \gamma$. Set $\alpha = 0$. If the optimizer $G_{0}$ is degenerate on $\mu$, we have thus found an equilibrium, and we are done. Similarly, if the optimizer $G_{\mu}$ for $\alpha = \mu$ is degenerate on $\mu$, this is also an equilibrium. Suppose now that $\left|\supp{\left(G_{0}\right)}\right| = \left|\supp{\left(G_{\mu}\right)}\right| =2$, we must have $\left|\supp{\left(G_{\alpha}\right)}\right| =2$ for all $\alpha \in [0, \mu]$. %Suppose now that for $\alpha = 0$, the optimizer $G_{\alpha}$ is not degenerate on $\mu$. 
By definition, $L(\alpha)$, the lower point of support of $G_{\alpha}$, is weakly greater than $0$ and weakly smaller than $\mu$. Consequently, $L$ is a continuous self-map on $[0,\mu]$. Then by Brouwer's fixed point theorem,\footnote{For example, Corollary 17.56 in \cite{aliprantis}.} there exists $\alpha \in [0, \mu]$ such that $L(\alpha) = \alpha$. %It is continuous in $\alpha$. For $\alpha = \mu$, the lower point of support of $G_{\alpha}$ is weakly smaller than $\mu$ by definition. Thus, by the intermediate value theorem at least one fixed point exists. 

Now let $\mu < \gamma$. If the optimizer $G_{\mu}$ for $\alpha = \mu$ is degenerate on $\mu$, this is an equilibrium, and we are done. Suppose instead that $\left|\supp{\left(G_{\mu}\right)}\right| =2$; consequently, $\left|\supp{\left(G_{\alpha}\right)}\right| =2$ for all $\alpha \in [0, \mu]$. Now we can repeat the above argument and a fixed point necessarily exists.
\iffalse
If the optimizer when $\alpha = 0$ is not degenerate then we may repeat the above argument and a fixed point necessarily exists. Now suppose the optimizer when $\alpha = 0$ is degenerate on the prior. However in this case it is clear that the optimizer is degenerate on the prior for all $\alpha \in \left[0,\mu\right]$. To see this consider the system of equations 
\[\kappa c'\left(y\right)-\kappa c'\left(x\right) - 1 = 0 \text{,}\]
and
\[y \kappa c'\left(y\right) - x \kappa c'\left(x\right) -\gamma - \kappa c\left(y\right) - \alpha + \kappa c\left(x\right) = 0 \text{.}\]
Totally differentiating each with respect to $\alpha$ and setting them equal to $0$ we get
\[c''\left(y\right)y'\left(\alpha\right)- c''\left(x\right)x'\left(\alpha\right) = 0\text{,}\]
and
\[y \kappa c''\left(y\right)y'\left(\alpha\right) - x \kappa c''\left(x\right)x'\left(\alpha\right) - 1 = 0 \text{.}\]
Thus, by the implicit function theorem
\[x'\left(\alpha\right) = \frac{1}{\kappa c''\left(x\right)\left(y - x \right)} > 0 \text{ ,}\]
and 
\[y'\left(\alpha\right) = \frac{1}{\kappa c''\left(y\right)\left(y - x \right)} > 0 \text{ .}\]
Observe that the optimal distribution is degenerate on $\mu$ if and only if $x \geq \mu$.
\fi
\end{proof}

\subsection{Proposition \ref{costofinfoprop} Proof}
\begin{proof}
If $\gamma \geq 1- \mu$, the threshold is $\bar{\kappa} = 0$. Now let $0 < \gamma < 1-\mu$ and let the conjectured non-certified value be $\mu$. The line tangent to the curve $\mu - \kappa c\left(x\right)$ at $x = \mu$ is
\[\kappa c'\left(\mu\right) \left(\mu-x\right) + \mu \text{.}\]
This line intersects the curve $x - \gamma - \kappa c\left(x\right)$ on the interval $\mu + \gamma \leq x \leq 1$ if and only if the function \[\tau\left(x\right) \coloneqq \kappa c'\left(\mu\right) \left(\mu-x\right) + \mu - x + \gamma + \kappa c\left(x\right)\text{,}\]
has a root in $\left[\mu + \gamma, 1\right]$. Next, \[\tau'\left(x\right) = \kappa \left(c'\left(x\right) - c'\left(\mu\right)\right) - 1 \text{.}\]
The term in parenthesis is strictly positive since $c$ is strictly convex and $\gamma > 0$. Accordingly, for all $\kappa$ sufficiently large, $\tau'\left(x\right)$ is strictly positive. Moreover, $\tau\left(\mu+\gamma\right) > 0$ and so $\tau$ has no root in the required interval, which allows us to conclude that there exists an equilibrium with no information acquisition and no disclosure for all sufficiently large $\kappa$.

If there exists an equilibrium with evidence acquisition, the conjectured non-certified value will be some $\alpha \in \left[0,\mu\right)$. Suppose first that $\mu < \alpha + \gamma$. The line tangent to the curve $\alpha - \kappa c\left(x\right)$ at $x = \mu$ is
\[\kappa c'\left(\mu\right) \left(\mu-x\right) + \alpha \text{.}\]
This line intersects the curve $x - \gamma - \kappa c\left(x\right)$ on the interval $\alpha + \gamma \leq x \leq 1$ if and only if the function \[\kappa c'\left(\mu\right) \left(\mu-x\right) + \alpha - x + \gamma + \kappa c\left(x\right)\]
has a root in $\left[\alpha + \gamma, 1\right]$. The remainder is analogous to the previous paragraph: if $\kappa$ is sufficiently high, the sender prefers to deviate by learning nothing and not disclosing.

Finally, suppose $\mu \geq \alpha + \gamma$. The line tangent to the curve $x - \gamma - \kappa c\left(x\right)$ at $x = \mu$ is 
\[\left(1-\kappa c'\left(\mu\right)\right) \left(x-\mu\right) + \mu - \gamma \text{.}\]
This line intersects the curve $\alpha - \kappa c\left(x\right)$ on the interval $\left[0, \alpha+\gamma\right]$ if and only if the function
\[\psi\left(x\right) \coloneqq \left(1-\kappa c'\left(\mu\right)\right) \left(x-\mu\right) + \mu - \gamma -\alpha + \kappa c\left(x\right) \text{,}\]
has a root in $\left[0,\alpha+\gamma\right]$. The derivative of $\psi$,
\[\psi'\left(x\right) = 1 - \kappa\left(c'\left(\mu\right) - c'\left(x\right)\right) \text{,}\]
is negative for all $\kappa$ sufficiently large. Consequently, for all $\kappa$ sufficiently large, there is a profitable deviation (if $\alpha > 0$) and an equilibrium in which the sender does not acquire any information but gets it certified.
\end{proof}

\subsection{Proposition \ref{costofinfoprop2} Proof}
\begin{proof}
Since Theorem \ref{maintheorem} states that an equilibrium must exist, it suffices to show that for all $\kappa$ sufficiently small there do not exist equilibria in which the sender acquires no information. 

First, suppose that in the purported equilibrium, there is no certification. In this case the conjectured non-certified value is $\mu$. It suffices to show that when $\kappa$ is sufficiently small, the function $\tau$, defined in the previous proposition's proof, has a root in $\left[\mu + \gamma, 1\right]$. Setting $\kappa = 0$, $\tau\left(\mu+\gamma\right) = 0$. Moreover, $\tau'\left(\mu+\gamma\right) = -1$ when $\kappa = 0$. Thus, there is a root, and hence the sender wants to deviate.

Second, suppose that in the purported equilibrium there is certification with probability $1$. In this case $\alpha = 0$. If $\mu \leq \gamma$, we immediately see that this equilibrium is impossible. Thus, let $\mu > \gamma$. Again, it suffices to show that when $\kappa$ is sufficiently small, the function $\psi$, defined in the previous proposition's proof, has a root in $\left[0,\gamma\right]$. Setting $\kappa = 0$, $\psi\left(\gamma\right) = 0$. Moreover, $\psi'\left(\mu+\gamma\right) = 1$ when $\kappa = 0$. Thus, there is a root, so the sender wants to deviate.
\end{proof}
\subsection{Proposition \ref{nodisccostprop} Proof}
\begin{proof}
First, suppose for the sake of contradiction that there is an equilibrium in which the sender acquires no information and does not get it certified with positive probability. In this case, the receiver's belief upon seeing non-certification is precisely $\mu$ and the kink in the sender's value function is at $\mu$, which means that it is never optimal to acquire no information. Second, suppose (also FSOC) that there is an equilbrium in which the sender acquires information. There is obviously a deviation in which the sender acquires no evidence and then discloses it with probability $1$, yielding $S$ her maximal payoff (since she is no longer wasting any resources on evidence acquisition). Finally, there is obviously an equilibrium in which the sender acquires no evidence and gets it certified, since the corresponding value function in her information acquisition problem is strictly concave on $\left[0,1\right]$. \end{proof}

\subsection{\texorpdfstring{$\S$}{text} \ref{uniformsec} Example (Including Proposition \ref{upqc}) Derivation}\label{app11}
A direct computation reveals that given a conjectured lower point of support $\alpha \in \left[0,\mu\right)$, if the optimal information acquisition is bi-pooling (refer to \cite{bipool} or \cite{kleiner2021extreme}) the support of the distribution is $\left\{\alpha + \gamma - \frac{1}{4\kappa}, \alpha + \gamma + \frac{1}{4\kappa}\right\}$. Evidently, $\alpha = \alpha + \gamma - \frac{1}{4\kappa}$ if and only if $\gamma = \frac{1}{4\kappa}$.

Let us begin with this knife-edge case. The Rothschild-Stiglitz representation of the prior (\cite{gent}) is $\int_{0}^{x}udu$ (on $\left[0,1\right]$), whereas the Rotschild-Stiglitz representation of a potential bi-pooling equilibrium is 
\[T\left(x\right) = \begin{cases}
0, \quad &\text{if} \quad 0 \leq x < x_L\\
\left(1 - \kappa \left(1-2\left(x_L\right)\right) \right) \left(x-x_L\right), \quad &\text{if} \quad x_L \leq x < x_L+\frac{1}{2\kappa}\\
x-\frac{1}{2}, \quad &\text{if} \quad x_L+\frac{1}{2\kappa} \leq x\\
\end{cases}\text{,}\]
where $x_L \in \left(0,\mu\right)$.
\begin{lemma}
There exists such bi-pooling equilibria if and only if $\kappa > 1$.
\end{lemma}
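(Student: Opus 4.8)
The plan is to pin down, in the knife-edge case $\gamma=\tfrac{1}{4\kappa}$, exactly which two-point distributions of the already-identified form can be sustained as bi-pooling equilibria, and then read off the restriction on $\kappa$. Fix a conjectured non-certification value $\alpha\in(0,\mu)$. Since $\gamma=\tfrac1{4\kappa}$, the lower support point $\alpha+\gamma-\tfrac1{4\kappa}$ of the associated candidate bi-pool equals $\alpha$, so write $x_L=\alpha$ and $x_H=x_L+\tfrac1{2\kappa}$, and let $q$ be the mass placed on $x_L$. Mean-preservation ($\mathbb{E}_G[\theta]=\mu=\tfrac12$) forces $q=1-\kappa(1-2x_L)$, and one checks directly that the Rothschild--Stiglitz representation of this distribution is the displayed $T$. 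Two of the three equilibrium requirements are then immediate: the receiver's non-certification belief equals $x_L=\alpha$ by construction, so beliefs are consistent; and the interim certification choices are optimal because $\gamma>0$ makes non-certification strictly better at $x_L$, while $x_H-\gamma=x_L+\tfrac1{4\kappa}>x_L=\alpha$ makes certification strictly better at $x_H$.

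For the remaining requirement---that this bi-pool solves the sender's acquisition problem \ref{maxprob} at $\alpha=x_L$---I would use the price-function duality of \cite{martini}. The value function $V_\alpha$ consists of two strictly concave parabolic pieces joined, at $\alpha+\gamma\in(x_L,x_H)$, by an upward convex kink. The common tangent line to the two pieces---whose tangency points the first-order conditions already place at $\alpha+\gamma-\tfrac1{4\kappa}=x_L$ and $\alpha+\gamma+\tfrac1{4\kappa}=x_H$---lies weakly above $V_\alpha$ on all of $[0,1]$ (on the domain of each concave piece, a tangent to that piece dominates it), is affine, and meets $V_\alpha$ only at $x_L$ and $x_H$. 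It is therefore a valid price function, and since the candidate $G$ has mean $\mu$ and all its mass on $\{x_L,x_H\}$, strong duality holds and $G$ is the sender's unique optimum---\emph{provided} $G\in\mathcal{M}(F)$. So everything reduces to feasibility, which is where the condition on $\kappa$ is generated.

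By the standard majorization characterization of mean-preserving contractions, $G\in\mathcal{M}(F)$ for the uniform $F$ iff $T(x)\le\int_0^x s\,ds=\tfrac{x^2}{2}$ on $[0,1]$ with equality at $x=1$; the equality is automatic once $x_H\le1$, since $T(1)=\tfrac12$. The inequality is trivial on $[0,x_L]$ (there $T\equiv0$) and on $[x_H,1]$ (there $\tfrac{x^2}{2}-(x-\tfrac12)=\tfrac{(x-1)^2}{2}\ge0$). On $[x_L,x_H]$, where $T(x)=q(x-x_L)$, the gap $\phi(x):=\tfrac{x^2}{2}-q(x-x_L)$ is strictly convex with $\phi(x_L)=\tfrac{x_L^2}{2}\ge0$, $\phi(x_H)=\tfrac{(x_H-1)^2}{2}\ge0$, and global minimizer $q>0$ at which $\phi(q)=q\bigl(x_L-\tfrac q2\bigr)$; hence the MPC condition holds iff $x_L\ge\tfrac q2$ or $q\notin[x_L,x_H]$. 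Substituting $q=1-\kappa(1-2x_L)$, the inequality $x_L\ge\tfrac q2$ rearranges to $(1-\kappa)(x_L-\tfrac12)\ge0$, which for $x_L<\tfrac12$ holds exactly when $\kappa\ge1$; and a short sign check shows that whenever $\kappa\le1$ and $x_L\in(0,\tfrac12)$ with $x_H\le1$ one has $q\in(x_L,x_H)$, so the second branch cannot rescue feasibility. Thus for $\kappa\le1$ no candidate is feasible---and $\kappa=1$ must in any case be excluded from ``bi-pooling'' proper, since there the support collapses to $\{x_L,x_L+\tfrac12\}$ with $q=2x_L$, the monotone partition at threshold $2x_L$. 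For $\kappa>1$, by contrast, any $x_L\in\bigl(\tfrac{\kappa-1}{2\kappa},\tfrac12\bigr)$---a nonempty interval---yields $q\in(0,1)$, $x_H<1$, and $x_L\ge\tfrac q2$, hence a feasible, genuinely bi-pooling equilibrium. This gives the claimed equivalence. The main obstacle is the feasibility step: one must locate the minimizer $q$ of $\phi$ relative to $[x_L,x_H]$ and verify that the ``$q\notin[x_L,x_H]$'' escape is vacuous for $\kappa\le1$, so that the sign of $\kappa-1$ is genuinely decisive.
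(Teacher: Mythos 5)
Your proof is correct and follows essentially the same route as the paper: both arguments reduce feasibility of the candidate bi-pool to whether the middle segment of its Rothschild--Stiglitz representation, with slope $q=1-\kappa(1-2x_L)$, stays weakly below $x^2/2$, which comes down to comparing $q$ with $2x_L$ and hence to the sign of $\kappa-1$. You are more explicit than the paper on two points---the price-function verification that the bi-pool actually solves \ref{maxprob}, and the treatment of $\kappa=1$, where the candidate degenerates to the monotone partition at threshold $2x_L$ (the paper's closing sentence even writes ``$\kappa\ge1$'' where it means ``$\kappa\le1$'')---but the substance is the same.
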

\begin{proof}
The line
\[\left(1 - \kappa \left(1-2\left(x_L\right)\right) \right) \left(x-x_L\right)\text{,}\]
is strictly decreasing in $\kappa$ and hence lies (pointwise) above the line $2x_L\left(x-x_L\right)$ when $\kappa \leq 1$ and strictly below this line when $\kappa > 1$. 

The line $2x_L\left(x-x_L\right)$ intersects the curve $x^2/2$ at $x = 2x_L$. Observe that this is (the unique) point of tangency of the line and the curve. We can always find an $x_L$, moreover, such that $x_L + \frac{1}{2\kappa} \leq 1$ and such that the slope of $T$ is positive when $\kappa > 1$. Accordingly, for $\kappa > 1$, there exists such a bi-pooling equilibrium. When $\kappa \geq 1$, $T\left(x\right)$ always intersects $\frac{x^2}{2}$ at some $x \leq 2x_L \leq x_L + \frac{1}{2\kappa}$, and so bi-pooling equilibria do not exist.
\end{proof}
Now let us identify when there exist no-learning equilibria. 
\begin{lemma}
There exists a no-learning equilibrium in which the sender does not acquire evidence and does not disclose if and only if $\kappa \leq 1$ and $\gamma \geq \frac{1}{2} - \frac{\kappa}{4}$ or $\kappa \geq 1$ and $\gamma \geq \frac{1}{4\kappa}$.

There exists a no-learning equilibrium in which the sender does not acquire evidence and always discloses if and only if $\kappa \leq 1$ and $\gamma \leq \frac{\kappa}{4}$ or $\kappa \geq 1$ and $\gamma \leq \frac{1}{2}-\frac{1}{4\kappa}$.
\end{lemma}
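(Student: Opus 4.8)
The plan is to specialize the primitives ($\mu = \tfrac12$ and $c(x) = (x-\tfrac12)^2$, so that $c(\mu) = c'(\mu) = 0$) and, for each of the two candidate equilibria, run the argument used in the proof of Proposition~\ref{costofinfoprop}: reduce ``the degenerate distribution $\delta_{\mu}$ solves the sender's problem \ref{maxprob}'' to ``a particular tangent line weakly dominates the sender's value function on $[0,1]$,'' then solve a one-variable quadratic inequality. Since the receiver always plays the posterior mean (quadratic loss) and on-path beliefs are pinned down by Bayes, the only substantive requirement in either case is that the sender has no profitable deviation in the acquisition/disclosure stage. In the no-acquisition/no-disclosure candidate the conjectured non-certification value is the on-path value $\alpha=\mu$, so the sender faces $V_{\mu}(x)=\max\{\mu,\,x-\gamma\}-\kappa(x-\mu)^2$. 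In the no-acquisition/always-disclose candidate the on-path message is $\mu$ and non-certification is off path; assigning it the most pessimistic belief $\alpha=0$ gives the sender the weakest incentive to deviate to non-disclosure and hence the largest region of existence, so the relevant value function is $V_{0}(x)=\max\{0,\,x-\gamma\}-\kappa(x-\mu)^2$ (note the derived conditions will imply $\gamma<\mu=\tfrac12$, so that at the realization $\mu$ the sender indeed strictly prefers to disclose).

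The one genuinely non-mechanical step is the reduction. By the linear-programming duality of \cite{martini} already invoked for Lemma~\ref{uniqids} and in the proof of Proposition~\ref{costofinfoprop}, $\delta_{\mu}$ solves \ref{maxprob} for a value function $V$ if and only if the tangent line to $V$ at $\mu$ lies weakly above $V$ on all of $[0,1]$. The ``if'' direction is immediate: the tangent is an affine majorant of $V$, so Jensen's inequality bounds $\int V\,dG$ by its value at the mean $\mu$ for every $G\in\mathcal{M}(F)$. For the ``only if'' direction, a price function certifying optimality of a posterior degenerate at $\mu$ must be affine on $[0,1]$, because full support of $F$ forces $\int p\,dF=p(\mu)$ for a convex $p$ only when $p$ is affine on $[0,1]$; being an affine majorant of $V$ that agrees with $V$ at the smooth interior point $\mu$ (interior to the smooth branch since $\gamma>0$ in the first case and $\gamma<\mu$ in the second), it must coincide with the tangent. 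Because $c'(\mu)=0$, the relevant tangent is the horizontal line $y=\mu$ for $V_{\mu}$ and the line $y=x-\gamma$ for $V_{0}$.

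Next I would localize the inequality to a single branch. The line $y=\mu$ automatically dominates $V_{\mu}$ on $[0,\mu+\gamma)$, where $V_{\mu}(x)=\mu-\kappa(x-\mu)^2\le\mu$, so domination can fail only on $[\mu+\gamma,1]$, where it is equivalent to $g(x):=\kappa(x-\mu)^2-(x-\mu-\gamma)\ge 0$. Symmetrically, $y=x-\gamma$ automatically dominates $V_{0}$ on $[\gamma,1]$, while on $[0,\gamma)$ domination is equivalent to $h(x):=\kappa(x-\mu)^2+(x-\gamma)\ge 0$; in fact $h$ is exactly the reflection of $g$ about $x=\mu$ with $\gamma$ replaced by $\mu-\gamma$ (and the interval $[\mu+(\mu-\gamma),1]$ reflects to $[0,\gamma]$), so the two computations are one and the same. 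Both $g$ and $h$ are upward parabolas with vertices at $\mu+\tfrac{1}{2\kappa}$ and $\mu-\tfrac{1}{2\kappa}$; since $g(\mu+\gamma)=\kappa\gamma^2\ge 0$ and $h(\gamma)=\kappa(\mu-\gamma)^2\ge 0$, the ``inner'' endpoint never binds, so only the minimum of each parabola over its interval matters.

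The final step is bookkeeping with $\mu=\tfrac12$. For $g$ on $[\tfrac12+\gamma,1]$: when $\kappa\le 1$ the vertex lies at or beyond $1$, so the minimum is $g(1)=\tfrac{\kappa}{4}-(\tfrac12-\gamma)$, nonnegative iff $\gamma\ge\tfrac12-\tfrac{\kappa}{4}$; when $\kappa>1$ the vertex is interior provided $\gamma<\tfrac{1}{2\kappa}$ (otherwise the endpoint value $\kappa\gamma^2>0$ settles it), and the minimum $g\big(\tfrac12+\tfrac{1}{2\kappa}\big)=\gamma-\tfrac{1}{4\kappa}$ is nonnegative iff $\gamma\ge\tfrac{1}{4\kappa}$ --- exactly the first claimed condition. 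Applying the reflection, the condition for $h$ is obtained by sending $\gamma\mapsto\tfrac12-\gamma$: for $\kappa\le 1$ it becomes $\gamma\le\tfrac{\kappa}{4}$ (with binding value $h(0)=\tfrac{\kappa}{4}-\gamma$), and for $\kappa>1$ it becomes $\gamma\le\tfrac12-\tfrac{1}{4\kappa}$ (with binding value $h\big(\tfrac12-\tfrac{1}{2\kappa}\big)=\tfrac12-\tfrac{1}{4\kappa}-\gamma$), which is the second claimed condition; these bounds indeed force $\gamma<\tfrac12$, and at $\kappa=1$ the two branches of each condition agree. The main obstacle is thus conceptual rather than computational: being precise about why optimality of the degenerate solution is equivalent to global domination by the \emph{tangent} line (and not merely by some affine majorant), which is where the duality of \cite{martini} is needed; everything after that is minimizing two parabolas.
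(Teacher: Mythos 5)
Your proposal is correct and follows essentially the same route as the paper: reduce existence of each no-learning equilibrium to the tangent line at $\mu$ (namely $y=\mu$ or $y=x-\gamma$) weakly majorizing the sender's value function, then minimize the resulting quadratic over the relevant interval with a case split on where its vertex falls relative to the endpoints. Your explicit duality argument for why optimality of $\delta_{\mu}$ is \emph{equivalent} to domination by the tangent (not just implied by it), and the reflection trick collapsing the two cases into one computation, are nice tightenings of what the paper merely asserts, but the substance is identical.
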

\begin{proof}
There are two possibilities: $\alpha = 0$ or $\alpha = \frac{1}{2}$. Let us start with the latter: this is an equilibrium if and only if the line tangent to $\frac{1}{2} - \kappa \left(x-\frac{1}{2}\right)^2$ at $\frac{1}{2}$ lies above the curve $\rho\left(x\right) \coloneqq x-\gamma - \kappa\left(x-\frac{1}{2}\right)^2$ on the interval $\left[\frac{1}{2}+\gamma, 1\right]$. As noted in the text, this holds whenever $\gamma \geq \frac{1}{2}$. Let $\gamma < \frac{1}{2}$. The tangent line is $y = \frac{1}{2}$, and $\rho\left(x\right)$ is maximized at $\frac{1}{2} + \frac{1}{2\kappa}$ if and only if 
\[\frac{1}{2} + \gamma \leq \frac{1}{2} + \frac{1}{2\kappa} \leq 1 \text{,}\]
or else at $\frac{1}{2} + \gamma$ ($\gamma \geq \frac{1}{2\kappa}$) or $1$ ($\frac{1}{2\kappa} \geq \frac{1}{2}$, i.e. $\kappa \leq 1$).

Evidently, if $\gamma \geq \frac{1}{2\kappa}$, the tangent line always lies above $\rho$ and so there exists a no-learning equilibrium of this form. If $\kappa \leq 1$ then $\rho$ is maximized at $x = 1$, yielding value $1 - \gamma - \frac{\kappa}{4}$. $\frac{1}{2}$ is larger than this if and only if $\gamma \geq \frac{1}{2}-\frac{\kappa}{4}$. If $\kappa \geq 1$ and $\gamma \leq \frac{1}{2\kappa}$ the maximal value of $\rho\left(x\right)$ is $\frac{1}{2} + \frac{1}{4\kappa} - \gamma$. Accordingly, $\frac{1}{2}$ lies above this (and so there exists a no-learning equilibrium of this form) provided $\gamma \geq \frac{1}{4\kappa}$.

Now we turn our attention to the other variety of no-learning equilibria. In this case, the analysis is analogous \textit{mutatis mutandis}: we find when the line tangent to $x - \gamma - \kappa\left(x-\frac{1}{2}\right)^2$ at $\frac{1}{2}$ lies above $\phi\left(x\right) \coloneqq - \kappa\left(x-\frac{1}{2}\right)^2$ for all $x \in \left[0,\gamma\right]$.
\end{proof}
Finally, we identify the equilibria (other than the non-generic bi-pooling one when $\gamma = \frac{1}{4\kappa}$) in which the sender acquires evidence.

%\begin{flushleft}
\paragraph{Proof of Proposition \ref{upqc}}
%\end{flushleft}
\begin{proof}
Let $\gamma \neq \frac{1}{4\kappa}$. As noted above, the equilibrium cannot be bi-pooling. Consequently, we solve
\[\max_{t \in \left[0,1\right]}\left\{t\left(\alpha - \kappa \left(\frac{t-1}{2}\right)^2\right) + \left(1-t\right)\left(\frac{t+1}{2} - \gamma - \kappa \left(\frac{t}{2}\right)^2\right)\right\} \text{.}\]
The objective is concave in $t$ if and only if $\kappa \leq 2$. Thus, a necessary condition for information acquisition is $\kappa \leq 2$. Let us now assume this. Taking the FOC and substituting in $t = 2 \alpha$ yields
\[\alpha^{*} = \frac{\kappa - 4 \gamma}{4\left(\kappa - 1\right)} \text{.}\]
First let $\kappa \leq 1$. In this case, $\alpha^{*} \in \left[0,\frac{1}{2}\right]$ if and only if 
$\gamma \in \left[\frac{\kappa}{4}, \frac{1}{2} - \frac{\kappa}{4}\right]$. It is simple to check that the best response to $\alpha^{*}$ is the monotone partitional signal with the truncation point $2\alpha^{*}$ as required. When $\kappa \in \left(1,2\right]$, the best response to the specified $\alpha^{*}$ is a bi-pooling distribution (and not a truncation) and so we conclude that $\kappa \leq 1$ is necessary for this equilibrium with information acquisition to exist. \end{proof}

\subsection{Lemma \ref{lemma61} Proof}
\begin{proof}
%\textcolor{ForestGreen}{When $\kappa$ is sufficiently small, Proposition \ref{costofinfoprop2} indicates that the sender must acquire information, and thus it only suffices to show that no ``bi-pooling'' evidence structure can be sustained in an equilibrium.} 
Fix $y \in [0, \gamma]$. Using Lemma 1 in \cite{bipool}, we can find $z \in [\mu+\gamma,1]$ such that there does not exist a MPC of $F$ whose support is $\left\{y, z\right\}$. Next we claim that there exists $\kappa_y > 0$ such that $\kappa < \kappa_y$ implies that the lower point in the support of any distribution of posterior means that is not monotone partitional, denote by $x_L$, must satisfy $x_L < y$. To see this, note that because $y \in [0, \gamma]$, $V(y) = \alpha - \kappa c\left(y\right)$; and hence the line tangent to the curve $\alpha - \kappa c\left(x\right)$ at $y$ is
\[\kappa c'\left(y\right) \left(y-x\right) + \alpha.\]
This line intersects the curve $x - \gamma - \kappa c\left(x\right)$ on $[\alpha + \gamma, 1]$ if and only if the function
\[\varphi(x) \coloneqq \kappa c'\left(y\right) \left(y-x\right) + \alpha - x + \gamma + \kappa c\left(x\right)\]
has a root in $[\alpha + \gamma, 1]$. But this must be true since $\varphi\left(\alpha + \gamma\right) = 0$ and $\varphi'(\alpha + \gamma)$. Consequently, $x_L < y$. Analogously, there exists $\kappa_z > 0$ such that $\kappa < \kappa_z$ implies that the lower point in the support of any distribution of posterior means that is not monotone partitional, denote by $x_H$, must satisfy $x_H < y$. Then for $\kappa < \min\{\kappa_y, \kappa_z\}$, Lemma 1 in \cite{bipool} indicates that there does not exist a MPC of $F$ whose support is $\left\{x_L, x_H\right\}$. Thus, for all $\kappa$ sufficiently small, there cannot be any non-monotone partitional equilibrium. \end{proof}

\end{document}